\newtheorem{definition}{Definition}
\newtheorem{theorem}{Theorem}
\newtheorem{corollary}{Corollary}
\newtheorem{lemma}{Lemma}
\newtheorem{remark}{Remark}
\newtheorem{assumption}{Assumption}[section]
\newcommand\norm[1]{\left\lVert#1\right\rVert}
\newcommand\myeqaa{\mathrel{\stackrel{\makebox[0pt]{\mbox{\normalfont\tiny (aa)}}}{=}}}
\newcommand\myeqab{\mathrel{\stackrel{\makebox[0pt]{\mbox{\normalfont\tiny (ab)}}}{\geq}}}
\newcommand\myeqac{\mathrel{\stackrel{\makebox[0pt]{\mbox{\normalfont\tiny (ac)}}}{=}}}
\begin{document}
%
\title{The Effect of Time Delay on the Average Data Rate and Performance in Networked Control Systems}
%
%
%

\author{Mohsen~Barforooshan,
	Milan~S.~Derpich,~\IEEEmembership{Member,~IEEE,}
	~Photios~A.~Stavrou,~\IEEEmembership{Member,~IEEE,}
	and~Jan~\O stergaard,~\IEEEmembership{Senior Member,~IEEE}
\thanks{Part of the results of this paper was presented at the 2017 American Control Conference \cite{barforooshan2017interplay} and the 56th IEEE Conference on Decision and Control\cite{barforooshan2017achievable}.

M. Barforooshan and J. \O stergaard are with the Department of
Electronic Systems, Aalborg University, DK-9220, Aalborg, Denmark (email: mob@es.aau.dk; jo@es.aau.dk). Their work has received funding from VILLUM FONDEN Young Investigator Programme, under grant agreement No. 19005 and partially from . 

M. S. Derpich is with the Department of Electronic Engineering, Universidad T\'ecnica Federico Santa Mar\'ia, Casilla 110-V, Valpara\'iso, Chile (email: milan.derpich@usm.cl). His work has received funding from FONDECYT project 1171059 and CONICYT Basal research grant FB0008. 

P. A. Stavrou is with the Department of Information Science and Engineering, KTH Royal Institute of Technology, Stockholm 100 44, Sweden (email: fstavrou@kth.se). His work has received funding from the Swedish Foundation for Strategic Research. 
}
}

\maketitle
\begin{abstract}
\noindent
This paper studies the performance of a feedback control loop closed via an error-free digital communication channel with transmission delay. The system comprises a discrete-time noisy linear time-invariant (LTI) plant whose single measurement output is mapped into its single control input by a causal, but otherwise arbitrary, coding and control scheme. We consider a single-input multiple-output (SIMO) channel between the encoder-controller and the decoder-controller which is lossless and imposes random time delay. We derive a lower bound on the minimum average feedback data rate that guarantees achieving a certain level of average quadratic performance over all possible realizations of the random delay. For the special case of a constant channel delay, we obtain an upper bound by proposing linear source-coding schemes that attain desired performance levels with rates that are at most 1.254 bits per sample greater than the lower bound. We give a numerical example demonstrating that bounds and operational rates are increasing functions of the constant delay. In other words, to achieve a specific performance level, greater channel delay necessitates spending higher data rate.              
\end{abstract}

\begin{IEEEkeywords}
Networked control systems, data rate constraints, time delay, information theory, optimal control.
\end{IEEEkeywords}

%
\IEEEpeerreviewmaketitle


\section{Introduction}
\label{intro}
Taking communication imperfections into account for analysis and design has proved to be an interesting topic within the area of control theory during recent years. This interest is motivated by advantages of communication networks over point-to-point wiring and, on the other side, by the complexity that communication constraints impose on classical control problems \cite{zhang2016survey}. Time delay, packet dropout and data rate constraints (quantization) are among prominent challenges \cite{nair2007feedback,zhang2013network,baillieul2007control,matveev2009estimation}.

Using an information-theoretic approach, \cite{martins2008feedback,martins2007fundamental} report primary derivations related to system performance. In these works, it is shown that the presence of a finite-capacity communication channel in a strictly causal feedback loop introduces a new performance limitation which differs from conventional Bode's formula by a constant quantifying channel information rate. Moreover, the authors derive inequalities among entropy rate of internal signals (inside the loop) and external signals
(outside the loop), resulting in a general performance bound which is affected by finite feedback capacity. Inspired by \cite{martins2008feedback,martins2007fundamental}, lower and upper bounds are derived on the minimum data rate that guarantees achieving a prescribed level of quadratic performance in \cite{silva2011framework,silva2011achievable,silva2016characterization}. These works consider noisy linear time-invariant (LTI) plants with Gaussian disturbances, controlled over an error-free digital channel without delay. In particular, \cite{silva2016characterization} shows that over all causal mappings which represent coding and control, the average data rate is bounded from below by the directed information rate generated by the mappings that render the sensor input and control output jointly Gaussian. Moreover, it is proved in \cite{silva2016characterization} that in an auxiliary LTI structure, the minimum signal-to-noise ratio (SNR) which guarantees stability and meeting a quadratic performance requirement gives the lower bound on the desired minimal data rate. For the upper bound analysis, \cite{silva2016characterization} suggests employing entropy-coded dithered quantizers (ECDQs). Such a simple coding scheme is designed based on the aforementioned SNR-constrained optimization giving the lower bound. Inspired by \cite{silva2011framework} and \cite{silva2016characterization}, the authors of \cite{tanaka2018lqg} present a method based upon semidefinite programming (SDP) to characterize the trade-off between directed information rate and linear quadratic Gaussian (LQG) performance in rate-constrained networked control systems (NCSs) with fully-observable multiple-input multiple-output (MIMO) plants. In \cite{stavrou2017upper}, the authors derive a lower bound on the zero-delay rate distortion function associated with vector-valued Gauss-Markov processes and mean-square error distortion constraint. Based on the separation principle, this bound is in fact the lower bound on the minimum data rate required for attaining LQG performance in control of fully observable plants. Then \cite{stavrou2017upper} utilizes the optimal realization that corresponds to the foreshadowed class of vector-valued Gaussian sources to derive an upper bound on zero-delay rate distortion function using variable-length entropy coding with lattice quantization. Similar ideas are employed in  \cite{kostina2016rate} for establishing bounds on minimum mutual informations, across a delay-free channel, that guarantee achieving specific linear quadratic regulator (LQR) performance levels. Specifically, \cite{kostina2016rate} derives the lower bound based on Shannon's lower bound and power entropy inequalities whereas the upper bound is established via variable-length coding and lattice-based quantization methods.

NCSs subject to network-induced delays are generally analyzed according to two methodologies: robustness and adaptation \cite{zhang2013network}. The aim in the robustness framework is deriving conditions for certain stability or performance requirements by constructing Lyapunov-Krosovskii functionals that do not incorporate time-stamp information as a variable. For instance, in \cite{du2014h}, stabilization and $H_{\infty}$ performance conditions for a singular cascade NCS are obtained. Fuzzy-model-based control is another approach in the robustness framework, where the rules are based on the size of delays, and the controller is required to be robust over the delay range \cite{khanesar2015adaptive,lu2015fuzzy}. In the adaptation framework, one method is modelling NCSs as stochastic switched systems. The recent results on stability and ${H_2}/{H_{\infty}}$ performance of Markov jump linear systems (MJLSs) are reported in \cite{xiong2014stability} and \cite{qiu2015network}, respectively. The second approach in this framework is predictive control; a method which is currently quite popular in NCSs. According to this technique, the actuator selects among a sequence of control commands based on the transmission delays experienced by them\cite{pang2014output,li2014network,yao2015wide}.

In all the aforementioned results on system performance, either the effect of channel delay is neglected, or the rate limitation is not taken into account. However, looking into the literature, one can find works investigating performance issues in NCSs with both rate constraints and network-induced delays (see, e.g., \cite{nakahira2016lq,han2016optimal,liu2016quantized,heemels2010networked}). Even so, a few has utilized the information-theoretic approach to treat systems with such limitations. For example, \cite{zhangrate} derives bounds on the minimum individual (non-asymptotic) rate needed to guarantee meeting an individual performance requirement (boundedness of the maximum $\ell_2$-norm of states).  

In this paper, we study the performance of a discrete-time LTI plant with Gaussian initial state in a loop with Gaussian exogenous inputs and random or constant channel delay on the feedback path. For the setup with random delay in the channel, we seek the infimum
average data rate required to achieve a prescribed qudratic performance
level. We show that the average data rate over all possible realizations of the delay is lower bounded by the average directed information rate. We prove for the random channel delay case that under certain stationarity assumptions, the average directed information rate can be stated in terms of average power spectral densities of the involved signals. We obtain a lower bound on the desired minimal average data rate which is stated as the average of a function of the power spectral densities of feedback path signals over all possile realizations of the delay. To establish all these results, we utilize the tools adopted in \cite{silva2011framework} and \cite{silva2016characterization}. However, compared to \cite{silva2016characterization} and \cite{silva2011framework}, the channel is not delay-free in our setup. In other words, we extend the information inequalities in \cite{silva2016characterization} to the case where there exists a random time delay between the sensor output and the control input.

For the setup with known constant delay in the channel, we show that the above lower bound on the infimum  average data rate required for attaining quadratic perfromance is equal to a function of infimum SNR of the channel over schemes comprised of LTI filters and AWGN channels with feedback and delay that meet the quadratic performance constraint. Our contribution in this case is showing how the presence of the channel delay affects the scheme yielding the lower bound. This gives an insight to the interplay between time delay, average data rate and performance in the considered NCS. We also prove that even over a channel with a constant delay, any admissible performance level can be achieved by an EDCQ-based linear coding scheme which generates an average data rate at most (approximately) $1.254$ bits per sample away from the corresponding lower bound. We illustrate via a numerical example that lower and upper bounds as well as empirical rates and entropies are all increasing functions of channel delay. This in turn implies that channels with larger delays demand higher average data rates to allow for attaining a certain system performance.

Compared to our previous works in \cite{barforooshan2017interplay} and \cite{barforooshan2017achievable}, first, we here study the case of random channel delay and  second, we employ a simpler proof than information inequalities and identities in \cite{barforooshan2017interplay} and \cite{barforooshan2017achievable}. In this work, we also show the effect of having a delay at different places in the loop on system signals. The last departure from our previous results is that we incorporate some eliminated proofs of \cite{barforooshan2017interplay} into this paper.

The remainder of the paper is organized as follows. Section~II introduces the notation. Section III formulates the main problem. Section~IV analyzes the lower bound problem for the setup with random channel delay. Section V derives a lower bound on the desired minimal data rate in the case of constant channel delay. The analysis of upper bound problem in the constant delay case is presented in Section~VI where the equivalence between systems with different delay locations is investigated. A numerical example is given in Section~VII. Finally, Section~IX concludes the paper.       


\section{Notation}
By $\mathbb{R}$, we denote the set of real numbers whose subset ${\mathbb{R}}^{+}$ represents the set of strictly positive real numbers. The set ${\mathbb{N}_{0}}$ is defined as ${\mathbb{N}_{0}}\triangleq\mathbb{N}\cup\{0\}$ where $\mathbb{N}$ symbolizes the set of natural numbers. The time index of every considered signal, denoted by $k$ in most cases, belongs to ${\mathbb{N}_{0}}$. Symbols ${\mathbf{E}}$, $\log$, $|.|$, and ${\norm{.}}_{2}$ represent operators for expectation, natural logarithm, magnitude and $H_2$-norm, respectively. Moreover, ${\lambda}_{\min}(S)$ and ${\lambda}_{\max}(S)$ are respectively the largest and smallest eigenvalues of the square matrix $S$ for which the element on the $i$-th row and $j$-th column is denoted by $[S]_{i,j}$. In addition, ${\beta}^{k}$ is shorthand for ${\beta}(0),\dots,{\beta}(k)$ where ${\beta}(k)$ denotes the $k$-th sample of a discrete-time signal. Furthermore, for the time-dependent set ${\alpha}(i),i\in{{\mathbb{N}_0}}$, ${\alpha}^{k}$ is defined as ${\alpha}^{k}\triangleq{{\alpha}(0)\times\dots\times{\alpha}(k)}$. However, if $\alpha$ is a fixed set, then ${\alpha}^{k}\triangleq{{\alpha}\times\dots\times{\alpha}}$ ($k$ times).

Random variables and processes are vector valued, unless otherwise stated.
Take $v$ and $q$ into account as two random variables with known marginal and joint probability distribution functions (PDFs). Their joint PDF is represented by $f(v,q)$ while the marginal PDFs of $v$ and $q$ are symbolized by $f(v)$ and $f(q)$, respectively. The conditional PDf of $v$ given $q$ is denoted by $f(v|q)$ and ${\mathbf{E}}_{v}(.)$ is the operator for the expectation with respect to the distribution of $v$. We define the differential entropy of $v$ and the conditional differential entropy of $v$ given $q$ as $h(v)\triangleq{-{\mathbf{E}}_{v}}(\log{f(v)})$ and $h(v|q)\triangleq{-{\mathbf{E}}_{v,q}}(\log{f(v|q)})$, respectively. The  mutual information between $v$ and $q$ is symbolized by by $I(v;q)$ and 

\begin{figure}[thpb]
	\centering
	\includegraphics[width=8cm]{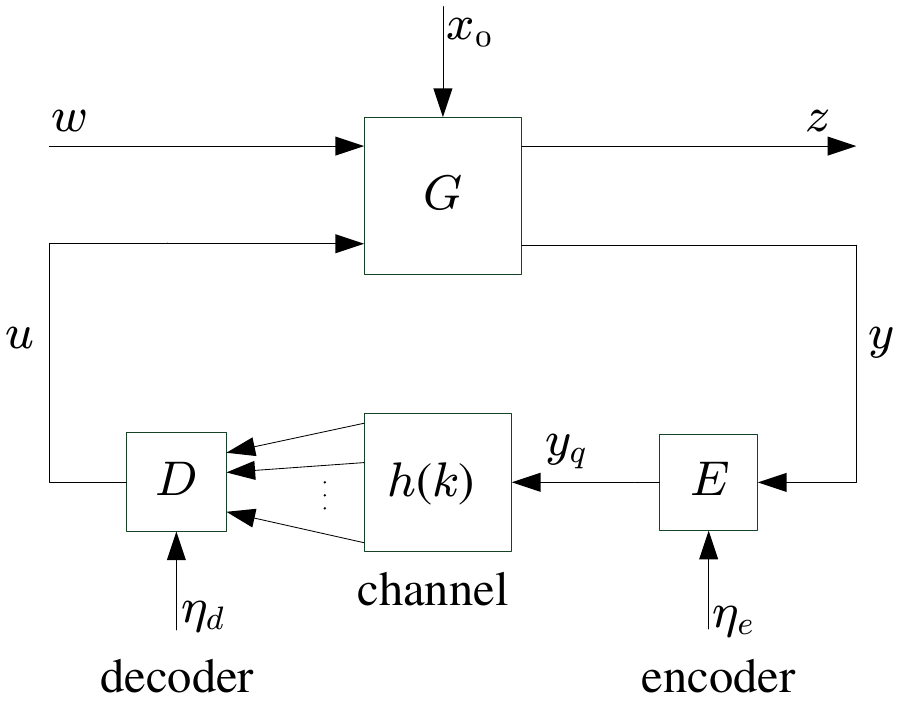}
	\caption{Considered NCS with a channel imposing random delay}
	\label{fig21}
\end{figure}
defined as $I(v;q)\triangleq{-{\mathbf{E}}_{v,q}}(\log({f(v)f(q)/f(v,q)}))$. Moreover, the definition of the conditional mutual information between random variables $v$ and $q$ given the random variable $r$ is given by $I(v;q|z)\triangleq{I(v,r;q)-I(r;q)}$. All the information-theoretic definitions presented in this paragraph are standard and follow \cite{cover2012elements}.

We call the random process $\xi$ asymptotically wide-sense stationary (AWSS) if $\lim_{k\to\infty}{{\mathbf{E}}[\xi(k)]}={\nu}_{\xi}$ and
${C_\xi}\triangleq{{R}_{\xi}(0)}$ and $\lim_{k\to\infty}{{\mathbf{E}}[(\xi(k+\tau)-{\mathbf{E}}[\xi(k+\tau)]){(\xi(k)-{\mathbf{E}}[\xi(k)])}^{T}]}={R}_{\xi}(\tau)$ hold, where ${\nu}_{\xi}$ is a finite constant. Accordingly, the steady-state covariance matrix and the steady-state variance of $\xi$ are defined as $\sigma_\xi^2\triangleq\mathrm{trace}({C_\xi})$, respectively. For the scalar random sequence ${x}_{1}^{k}\triangleq{[x(1)\dots x(k)]^T}$, we define the covariance matrix as $C_{{x}_{1}^{k}}={{\mathbf{E}}[({{x}_{1}^{k}}-{\mathbf{E}}[{{x}_{1}^{k}}]){({{x}_{1}^{k}}-{\mathbf{E}}[{{x}_{1}^{k}}])}^{T}]}$. Assume that ${P_n},{Q_n}\in{{\mathbb{R}}^{n\times n}}$ are square matrices. Then the sequences $\{P_n\}_{n=1}^{\infty}$ and $\{Q_n\}_{n=1}^{\infty}$ are called  asymptotically equivalent if and only if they satisfy the following expression for finite $\varrho$: 
\begin{IEEEeqnarray*}{rl}
	\begin{split}
		&\lim_{n\to\infty}\frac{1}{n}\sum_{i=1}^{n}\sum_{j=1}^{n}{{|{[{P_n}-{Q_n}]}_{i,j}|}^{2}}=0\\
		&|{\lambda}_{\max}(P_n)|,|{\lambda}_{\max}(Q_n)|\leq{\varrho},\quad\forall{n\in{\mathbb{N}}}. 
	\end{split}
\end{IEEEeqnarray*}
\section{Problem Formulation}\label{PF}
We consider the feedback loop of Fig.~{\ref{fig21}} where the plant is LTI with one control input and one sensor output denoted by $u\in\mathbb{R}$ and $y\in\mathbb{R}$, respectively. The plant $G$ is disturbed by a vector-valued zero-mean white noise which is represented by $w\in \mathbb{R}^{n_{w}}$ and has identity covariance matrix, i.e. ${C_w}=I$. Moreover, as depicted in Fig.~\ref{fig21}, the plant outputs the vector-valued signal $z\in \mathbb{R}^{n_{z}}$ upon which the performance measure is characterized.
The relationship between the mentioned set of inputs and outputs is described by a transfer-function matrix as follows:

\begin{equation}\label{eq1}
\left[
\begin{array}{c}
	z\\
	y
\end{array}\right]=\left[
\begin{array}{lr}
	G_{11}&G_{12}\\
	G_{21}&G_{22}
\end{array}\right]\left[
\begin{array}{c}
	w\\
	u
\end{array}\right],
\end{equation}
where the dimensionality of each $G_{ij}$ is determined by the dimensions of corresponding pair of inputs and outputs. So ${n_z}\times{n_w}$, ${n_z}\times{1}$, ${1}\times{n_w}$ and ${1}\times{1}$ are the dimensions for $G_{11}$, $G_{12}$, $G_{21}$ and $G_{22}$, respectively.
\begin{assumption}\label{ass21}
Every entry of the transfer-function matrix in (\ref{eq1}) is proper with no unstable hidden modes. Moreover, ${G_{22}}$, which describes the single-input single-output (SISO) open-loop system from $u$ to $y$, is strictly proper. The initial states of the plant denoted by $x_0=\{x(-{h_{\max}}),\dots,x(0)\}$ are jointly Gaussian with and independent of the disturbance signal $w$ and has a finite differential entropy.    
\end{assumption} 
As depicted in Fig.~{\ref{fig21}}, the output of the plant, $y$, is processed into a binary word by the encoder $E$ and transmitted over the error-free channel. Such transmission is accompanied with a random delay. Let $h(k)$ denote the delay experienced by the binary word ${y_q}(k)$ constructed at time $k$ at the encoder. We assume that $h(k)$ is an independent and identically distributed (i.i.d.) process which has a bounded support at each time step, i.e., $h(k)\in\{{h_1},\dots,{h_m}\}$, $\forall{k}\in{\mathbb{N}_0}$ where ${h_i}<{h_{i+1}}$ $(i=1,2,\dots,m-1)$. In order to avoid unnecessary notational complexity and without loss of generality, we set ${h_1}$ as ${h_1}=0$ and ${h_m}$ as ${h_m}={h_{\max}}$. The marginal distribution of the delay is assumed to be known and described by $Pr\{h(k)=h_j\}=\alpha_j$ where $\sum_{j=1}^{m}{\alpha_j}=1$. Such characteristics introduce a channel with the following input-output relationship:
\begin{equation}\label{eq7}
{u_q}(k)=[{{y_q}(i)]}_{i\in{S(k)}},\qquad k\in{\mathbb{N}_0}
\end{equation} 
where $S(k)$ is defined as 
\begin{equation}\label{eqnew1}
S(k)\triangleq{\{i:i+h(i)=k\}}
\end{equation} 
for every $k,i\in{\mathbb{N}_0}$. Denoting the cardinality of $S(k)$ by $s(k)$, we can imply form (\ref{eq7}) that $u_q(k)$ is a vector comprised of $s(k)\leq h_{\max} + 1$ binary words which specifies the output of the channel at time $k$. Note that $s(k)$ is a random variable depending on the channel delay. We assume that ${y_q}(i)$ is discarded at the decoder-controller side if $i<0$. Moreover, under aforementioned circumstances, binary words transmitted over the considered channel are not necessarily received in the same order they were emitted. 
\begin{figure}[h] 
	\centering    
	\includegraphics[width=8cm]{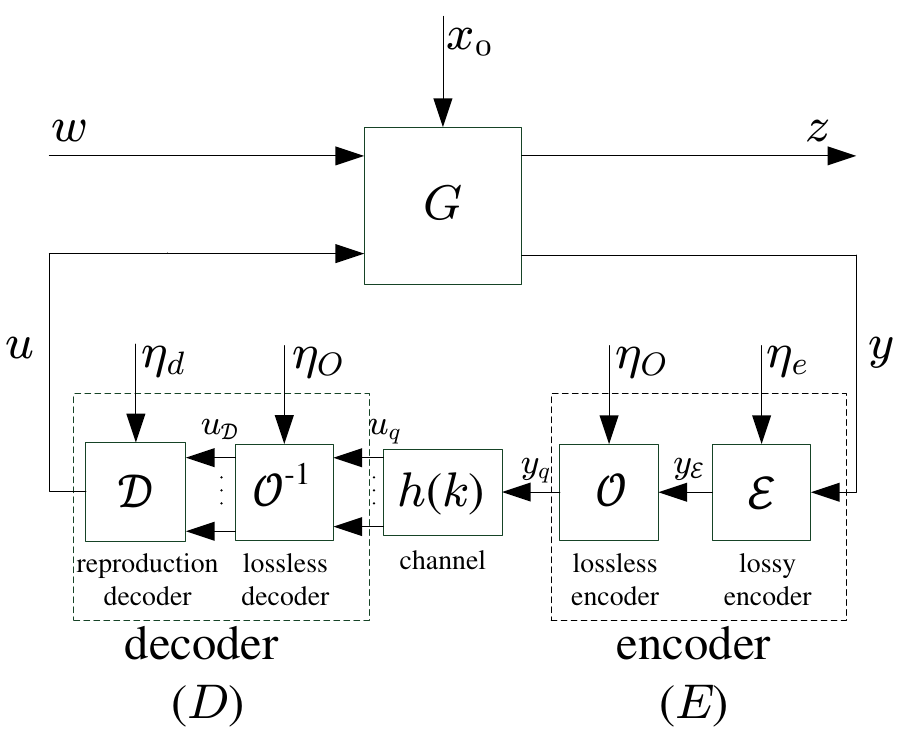}
	\caption{Considered NCS with a detailed model of coding and control scheme}
	\label{fig3}
\end{figure}
It should be also emphasized that the channel does not allow for any data loss. The average data rate across the channel is defined as
\begin{equation}\label{eq4}
\mathcal{R}
\triangleq\lim_{k \to \infty}
\frac{1}{k}
\sum_{i=0}^{k-1}R(i),
\end{equation}
where $R(i)$ indicates the expected length of the binary word ${y_q}(i)$.

A more detailed presentation of the feedback path in the NCS of Fig.~\ref{fig21} is provided by Fig.~\ref{fig3}. As depicted, the encoder-controller is comprised of a lossy and a lossless component. The lossy part $\mathcal{E}$ outputs the symbol $y_{\mathcal{E}}$ according to the following dynamics:
\begin{equation}\label{eq24}
{{y_{\mathcal{E}}}(k)}={\mathcal{E}_k}({y^k},{\eta_{e}^{k}}),
\end{equation}
where ${{\eta_{e}}(k)}\in{{{\Lambda}_e}(k)}$ symbolizes the side information at time $k$ at the lossy encoder. ${\mathcal{E}}_k:{\mathbb{R}}^{k+1}\times{{{{{\Lambda}_{e}^{k}}}}}\mapsto{{\mathcal{A}}_{s}}$ is a deterministic map and ${{\mathcal{A}}_{s}}$ represents a fixed countable set. At each time instant, the encoder is assumed to know the time delays experienced by previous binary words and the time delay of the current binary word to be sent over the channel. Therefore, ${h}^{k}$ is known at the encoder $\forall{k}\in{{\mathbb{N}}_0}$. This implies that ${y_{\mathcal{E}}(k)}$  can be reconstructed perfectly $h(k)$ steps later at the decoder if ${y_q}(k)$ is constructed by using ${y_{\mathcal{E}}(k)}$ and only those samples of ${y_{\mathcal{E}}^{k-1}}$ which will be already available at the decoder at time $k+h(k)$. Note that having access to ${y_q^k}$ at the decoder at the time $k+h(k)$ is not assured. So the lossless encoder $\mathcal{O}$ outputs the binary symbol ${y_q}$ based on 
\begin{equation}\label{eq25}
{{y_q}(k)}={\mathcal{H}_{\mathcal{E}k}}({y_{{\mathcal{E}}}}(k),{y_{{\mathcal{E}}f}(k)},{\eta_{o}^{k}}),
\end{equation}
in which ${y_{{\mathcal{E}}f}(k)}$ is a sequence comprising the elements of ${y_{\mathcal{E}}^{k-1}}$ for which the associated binary words will have reached the decoder by the time $k+h(k)$, i.e., $\{{y_{\mathcal{E}}}(i):i\in{{\mathbb{N}}_0},i\leq{k-1}, i+h(i)\leq{k+h(k)}\}$. Moreover, ${\eta_{o}(k)}\in{\Lambda_{o}(k)}$, and ${\mathcal{H}_{\mathcal{E}k}}:{\mathcal{A}}_{s}^{f(k)}\times{{{{\Lambda}_{o}^{k}}}}\mapsto{\mathcal{A}}(k)$ is an arbitrary deterministic mapping where ${k-{h_{\max}}+h(k)+1}\leq{f(k)}\leq{k+1}$. So $f(k)-1$ specifies the cardinality of the sequence ${y_{{\mathcal{E}}f}(k)}$. Note that since no dropout occurs during data transmission, $y_q^{k-{h_{\max}}+h(k)}$ will certainly have been received at the decoder by the time $k+h(k)$. In addition, ${\mathcal{A}}(k)$ is a countable set of prefix-free binary code words, which specifies the input alphabet of the channel at each time instant.

On the receiver side, ${u_{q}}(k)$ is available as the input to the lossless decoder. This decoder, shown by $\mathcal{O}^{-1}$, generates ${u_{\mathcal{D}}}$ as
\begin{equation}\label{eq26}
{{u_{\mathcal{D}}}(k)}={\mathcal{H}_{\mathcal{D}k}}({u_{qf}}(k),{\eta_{o}^{k}},S(k)), 
\end{equation}
where ${u_{qf}}(k)$ is a sequence comprised of elements of $u_q^k$ that have time indices less than or equal to the largest time index of $y_q$ in ${u_q}(k)$, i.e., ${u_{qf}}(k)\triangleq\{{y_q}(i):i\in{{\mathbb{N}}_0},{y_q}(i)\in{u_q^k},i\leq{m(k)}\}$ where ${m(k)}={{{\max}S(k)}}$, $\forall{k}\in{\mathbb{N}_0}$. Such selection of data for lossless decoding is in accordance with the information utilized in (\ref{eq25}) for encoding. Furthermore, ${\mathcal{H}_{\mathcal{D}k}}:{{\mathcal{A}}^{g(k)}\times{{{{\Lambda}_{o}^{k}}}}\times{{\mathbb{N}_0}^{s(k)}}}\mapsto{\mathcal{A}_{s}^{s(k)}}$, where ${k-{h_{\max}}+1}\leq{g(k)}\leq{k+1}$, represents an arbitrary deterministic mapping. It should be noted that according to the channel model, $g(k)$ is a random variable denoting the cardinality of ${u_{qf}}(k)$ and  $\sum_{i=0}^{k}s(i)\leq{k}$ holds for all $k\in{\mathbb{N}_0}$. Moreover, based on the definition of $u_{qf}$ and (\ref{eq25}), the information provided by $({u_{qf}}(k),{\eta_{o}^{k}},{S(k)})$ is enough for the lossless decoder to reconstruct every element of $\{y_{\mathcal{E}}(i)\}_{i\in{S(k)}}$ perfectly. Therefore 
\begin{equation}
	{{u_{\mathcal{D}}}(k)}=[y_{\mathcal{E}}(i)]_{i\in{S(k)}},\qquad k\in{\mathbb{N}_0}
\end{equation}
where $S(k)$ is defined as in (\ref{eqnew1}). Indeed for such reconstruction, the knowledge of the delay is required at the decoder. Hence, we further assume that the decoder is provided by $S(k)$ through for example timestamping. Finally, the decoder-controller gives the control input via
\begin{equation}\label{eq271} 
{u(k)}={\mathcal{D}_k}({u_{\mathcal{D}}^{k}},{\eta_{d}^{k}}).
\end{equation}
where ${\eta_{d}(k)}$ signifies the side information available at the decoder at time $k$ and is contained in the well-defined set ${{{\Lambda}_d}(k)}$. So ${\eta_d}(k)$ satisfies ${{\eta_d}(k)}\in {{{\Lambda}_d}(k)}$. Moreover, ${\mathcal{D}_k}:{{\mathcal{A}}_{s}^{{t_u}(k)}}$ $\times{{{{\Lambda}_{d}^{k}}}}\mapsto{\mathbb{R}}$, ${k-{h_{\max}}+1}\leq{{t_u}(k)}\leq{k+1}$, is an arbitrary deterministic mapping where ${t_u}(k)$ is the cardinality of $S^k$. It should be noted that ${\Lambda_{o}(k)}$ in ${\eta_{o}(k)}\in{\Lambda_{o}(k)}$ is defined as ${\Lambda_{o}(k)}\triangleq{{{\Lambda}_e}(k)}\cap{{{\Lambda}_d}(k)}$. We state some additional properties of the setting described above in the following remarks.
\begin{remark}\label{rem2}
It can be implied from (\ref{eq24})-(\ref{eq271}) that $u(k)$ and $u^k$ are functions of $({y^{l_k}},{\eta_e}^{l_k},{\eta_{d}^{k}})$ where ${l_k}=\max{S^k}$ for every $k\in{\mathbb{N}_0}$. It thus follows from the definition of ${S(k)}$ and assuming no dropout in the considered channel that ${k-{h_{\max}}}\leq{l_k}\leq{k}$. This implies that the controller has access to the largest and smallest amount of sensor information when the channel delay is zero and $h_{\max}$, respectively.
\end{remark}
   
\begin{remark}
It can be implied from the definition of ${S}(k)$ in (\ref{eqnew1}) that ${u_q}(k)$ can have at most ${h_{\max}}+1$ entries at each time step. So the number of the words that can be received at the decoder at each time instant belongs to the set $\{0,\dots,{h_{\max}}+1\}$. Therefore, since the channel input $y_q$ is a scalar process, (\ref{eq7}) describes a single-input multiple-output (SIMO) channel. Moreover, $S(k)$, as a stochastic process, cannot be i.i.d because in the considered channel, no transmitted binary word is received at the decoder more than once.  
\end{remark}
\noindent
For further analysis, we consider the following assumption. 
\begin{assumption}\label{ass23}
	At each time instant $k\in{\mathbb{N}_0}$, the side information pair $({{\eta}_{e}}(k),{{\eta}_{d}}(k))$ together with $h(k)$, and consequently $S(k)$, are statistically independent of $(x_0,w(k))$. Therefore, it can be implied from the dynamics of the system that $I(u(k);{y(k-{h_i})}\mid{u^{k-1}})=0$ for any ${h_i}\in\{1,...,{h_{\max}}\}$ with $k-{h_i}<0$. Moreover, upon knowledge of $u^{i}$, $\eta_{d}^{i}$ and ${S^i}$, the decoder is invertible. It means that for each ${i}\in{\mathbb{N}_0}$, there exists a deterministic mapping $Q_{i}$ such that $u_{q}^{i}=Q_{i}(u^i,\eta_{d}^{i},{S^i})$.     
\end{assumption} 
\begin{remark}\label{rem1}
  In Appendix~\ref{appb}, we will prove that for the architecture of Fig.~\ref{fig3}, any encoder and non-invertible decoder with mappings ${\mathcal{E}}$, $\mathcal{O}$, $\mathcal{O}^{-1}$ and ${\mathcal{D}}$, can be replaced by another set of mappings with the same input-output relationship and lower average data rate where the decoder is invertible .
\end{remark}
For the purpose of expressing the information rate in terms of spectral densities of the signals of the system, we use the following notion of stability:
\begin{definition}\label{def1}
 A scalar AWSS process $x$ is called strongly asymptotically wide-sense stationary (SAWSS) if its covariance matrix is asymptotically equivalent to the covariance matrix of the wide sense stationary (WSS) process, say $\bar{x}$, to which it converges, i.e., ${\{C}_{x_1^n}\}_{n=1}^{\infty}$ and ${\{C}_{{\bar{x}}_1^n}\}_{n=1}^{\infty}$ are asymptotically equivalent. Furthermore, in an SAWSS NCS, all internal signals are SAWSS and their cross-covariance matrices are asymptotically equivalent to the cross-covariance matrices of corresponding WSS processes to be converged to.
\end{definition}   
\noindent
Clearly, SAWSS-ness implies AWSS-ness but not vice versa; for both signals and systems. For each coding scheme satisfying (\ref{eq24})-(\ref{eq271}) and rendering the NCS of Fig.~\ref{fig21} SAWSS, the steady-state variance of the output $z$ is a random variable which depends on the realization of $h(k)$. The same goes for the average data rate. We make explicit such dependence by writing $\sigma_z^2(h^k)$ and $\mathcal{R}(h^k)$, and consider the means of these variables (over all realizations of $h^k$) as our performance measure and data rate of interest, respectively. Such notions of performance and rate, represented by ${{\sigma}_{za}^{2}}$ and $\mathcal{R}_a$ respectively, are formulated as follows:
\begin{align}\label{511}
\begin{split}
{{\sigma}_{za}^{2}}&=\Sigma_{\substack{{h^k}\in{\mathfrak{H}}\\
		k\to\infty}}{Pr(h^k){{\sigma}_{z}^{2}}(h^k)}\\
{{\mathcal{R}}_{a}}&=\Sigma_{\substack{{h^k}\in{\mathfrak{H}}\\
		k\to\infty}}{Pr(h^k){\mathcal{R}}(h^k)}\\			             
\end{split}
\end{align}
where $\mathfrak{H}$ denotes the support set for possible realizations of the delay $h(k)$. Moreover, ${\mathcal{R}}(h^k)$ and ${{\sigma}_{z}^{2}}(h^k)$ indicate that the average data rate and steady-state variance are functions of delay. Generally speaking, we are interested in finding the minimal ${{\mathcal{R}}_{a}}$ for which having a bounded $\sigma_{za}^2$ is feasible. Let ${D}_{\inf}(h_r)$ denote the smallest average steady-state  variance of $z$ that can be achieved, when the random delay $h(k)$, with the aforementioned properties, is present in the channel. Hence, ${D}_{\inf}(h_r)$ is obtained by minimizing the average steady-state variance of $z$ over all (possibly nonlinear and time-varying) settings $u(k)={\mathcal{K}}_{k}( y^{l_{k}})$ that render the NCS of Fig.~\ref{fig21} SAWSS. Note that ${l_k}$ is defined as in Remark~\ref{rem2}. Under the condition that Assumption~\ref{ass21} holds, the problem of our interest is to find
\begin{equation}\label{eq51}
{\mathcal{R}_a}(D)=\inf_{{{\sigma}_{za}^{2}}\leq{D}}
\mathcal{R}_a,
\end{equation}
where $D\in{(D_{\inf}(h_r),\infty)}$, and ${{\sigma}_{za}^{2}}$ represents the average staedy-state variance of the output $z$ over all realizations of the delay. The feasible set of the optimization problem in (\ref{eq51}) is comprised of all encoder-controller  and decoder-controller pairs described by (\ref{eq24})-(\ref{eq271}), satisfying Assumption~\ref{ass23} and rendering the NCS of Fig.~\ref{fig21} SAWSS. 
\begin{remark}\label{rem7}
	It is straightforward to see from (\ref{eq24})-(\ref{eq271}) that the concatenation of the decoder-controller pair and the channel in the NCS of Fig.~\ref{fig21} is equivalent to a decoder-controller pair with the same mapping and side information that applies a time delay with same properties as characterized in (\ref{eq7}), on its received data, and that is followed by a delay-free channel. So the system depicted in Fig.~\ref{fig21} is equivalent to the feedback loop of Fig.~\ref{fig2aux} in which the encoder and the plant are the same and the inputs have the same properties as in Fig.~\ref{fig21}.
\end{remark}
\begin{figure}
	\centering
	\includegraphics[width=8cm]{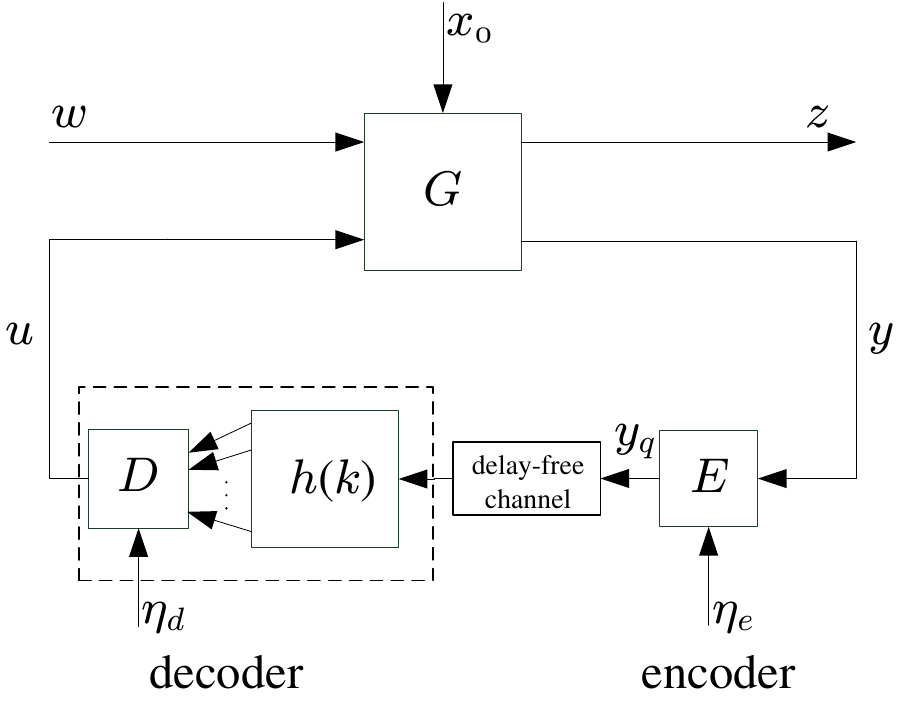}
	\caption{Auxiliary system equivalent to the main NCS in the random delay case}
	\label{fig2aux}
\end{figure} 
The equivalence pointed out in Remark~\ref{rem7} between systems of Fig.~\ref{fig21} and the NCS of Fig.~\ref{fig2aux} will assist us deriving a lower bound on the average data rate ${\mathcal{R}_a}$ in the next Section.
 \section{lower bound problem in the presence of random delay}
 In this section, we establish a lower bound on ${{\mathcal{R}}_{a}(D)}$. To do so, we derive inequalities and identities that describe the relationship between the flow of information and system performance in the NCS of Fig.~\ref{fig21}. Therefore, we will update fundamental derivations in \cite{silva2016characterization,barforooshan2017achievable} for the case where the channel delay is randomly distributed. As the first result, we show how the average data rate ${\mathcal{R}_a}$ is bounded from below in the following theorem.
  \begin{theorem}\label{th5}
 	Consider the feedback loop depicted in Fig.~\ref{fig21} for which Assumptions~\ref{ass21} and \ref{ass23} hold. Then 
 	\begin{equation}\label{eq291}
 	\begin{split}
 	\mathcal{R}_a
 	\geq{{I_{\infty}^{h_a}}(y\to{u})}=\lim_{k \to \infty}
 	\frac{1}{k}
 	\Sigma_{{h^{k-1}}\in{\mathfrak{H}}}[{Pr(h^{k-1})}\Sigma_{i=0}^{k-1}{I({{u}(i);y^{l_{i}}}\mid{{u^{i-1}}}})]
 	\end{split}
 	\end{equation}
 	where $I(.;.\mid{.})$ represents conditional mutual information. According to \cite{derpich2013fundamental}, ${{I_{\infty}^{h_a}}(y\to{u})}$ specifies the average directed information rate across the forward channel from $y$ to $u$ in the NCS of Fig.~\ref{fig21} over all possible realizations of the channel delay.   
 \end{theorem}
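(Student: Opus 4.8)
The plan is to relate the average data rate $\mathcal{R}_a$ to a sum of expected codeword lengths, use the prefix-free property of the channel alphabet to lower bound each $R(i)$ by an entropy, and then convert that entropy into the conditional mutual information $I(u(i);y^{l_i}\mid u^{i-1})$ using the invertibility of the decoder (Assumption~\ref{ass23}) together with the Markov/independence structure of the coding-control maps established in Remark~\ref{rem2}. Concretely, I would first fix a realization $h^{k-1}\in\mathfrak{H}$ of the delay and work conditionally on it. By definition~\eqref{eq4}, $\mathcal{R}(h^{k-1})=\lim_{k\to\infty}\frac1k\sum_{i=0}^{k-1}R(i)$ where $R(i)=\mathbf{E}[\ell(y_q(i))]$ is the expected length of the prefix-free binary word $y_q(i)$. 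Since the code is prefix-free on the countable alphabet $\mathcal{A}(i)$, Kraft's inequality gives $R(i)\geq H(y_q(i)\mid y_q^{i-1},\eta_o^i)$; more carefully one conditions on all the information on which the lossless encoder's codeword-length assignment may depend, so that the conditional code is still prefix-free. Averaging over the delay realizations with weights $Pr(h^{k-1})$ yields $\mathcal{R}_a\geq \lim_{k\to\infty}\frac1k\sum_{h^{k-1}}Pr(h^{k-1})\sum_{i=0}^{k-1}H(y_q(i)\mid \cdot)$.

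Next I would collapse the chain of conditional entropies into a single joint entropy via the chain rule: $\sum_{i=0}^{k-1}H(y_q(i)\mid y_q^{i-1},\eta_o^i,\dots)=H(y_q^{k-1}\mid \eta_o^{k-1},\dots)$, being careful that the side information sequences are causally revealed so the chain rule applies. Then I use Assumption~\ref{ass23}, which states the decoder is invertible: $u_q^{i}=Q_i(u^i,\eta_d^i,S^i)$, and conversely the encoder maps $y_q^i$ deterministically into the relevant quantities. Because deterministic invertible maps preserve entropy, $H(y_q^{k-1}\mid\text{side info})=H(u^{k-1}\mid\text{side info})$ up to the conditioning terms, which lets me pass from the transmitted words to the control signal $u$. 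Writing $H(u^{k-1}\mid\eta^{k-1})$ and subtracting the corresponding conditional entropy given $y$, I obtain $H(u^{k-1}\mid\eta^{k-1})\geq I(u^{k-1};y^{l_{k-1}}\mid\eta^{k-1})$ — or more precisely I expand by the chain rule for mutual information into $\sum_{i=0}^{k-1} I(u(i);y^{l_i}\mid u^{i-1},\eta^i)$, using Remark~\ref{rem2} that $u(i)$ is a function of $(y^{l_i},\eta_e^{l_i},\eta_d^i)$ so that conditioning on $u^{i-1}$ and the delay makes the remaining dependence exactly on $y^{l_i}$. Finally, I discard the extra conditioning on the side information $\eta$ using Assumption~\ref{ass23} (the side information, the delay, and $S(k)$ are independent of $(x_0,w)$ and hence of $y$ given the past controls), which can only decrease or leave unchanged the relevant mutual information in the direction needed, giving $H(u(i)\mid u^{i-1})\geq I(u(i);y^{l_i}\mid u^{i-1})$ and matching the right-hand side of~\eqref{eq291}.

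The main obstacle I anticipate is the bookkeeping around the side information $\eta_e,\eta_o,\eta_d$ and the random index sets $S(k)$: one must show that conditioning on precisely the information available to the lossless encoder keeps Kraft's inequality valid, that the chain-rule decomposition is legitimate given the non-i.i.d.\ nature of $S(k)$, and that the final step of removing the $\eta$-conditioning goes in the favorable direction rather than against it. This is exactly where Assumption~\ref{ass23} is doing the heavy lifting, and where the delay case departs from the delay-free arguments of \cite{silva2016characterization,barforooshan2017achievable}: the sensor information entering $u(i)$ is $y^{l_i}$ with the random truncation $l_i=\max S^i$ rather than $y^i$, so the chain-rule step must accommodate a time-varying, delay-dependent information horizon. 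The identification of the limit with the average directed information rate $I_\infty^{h_a}(y\to u)$ of \cite{derpich2013fundamental} is then essentially a matter of matching definitions after averaging over $\mathfrak{H}$ with the weights $Pr(h^{k-1})$.
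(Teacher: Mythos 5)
Your proposal is correct in spirit but takes a genuinely different route from the paper. You rebuild the entire lower bound from first principles -- Kraft's inequality on the prefix-free codewords, the entropy chain rule, data processing, and decoder invertibility -- aiming directly at the truncated horizon $y^{l_i}$. The paper instead does a two-step reduction: (i) it invokes the delay-free result \cite[Theorem~3.1]{silva2016characterization} as a black box on the equivalent system of Fig.~\ref{fig2aux} (Remark~\ref{rem7}, where the channel delay is absorbed into the decoder-controller), obtaining $\mathcal{R}(h^\infty)\geq \lim_k \frac1k\sum_i I(u(i);y^{i}\mid u^{i-1})$ with the \emph{full} sensor history $y^i$; (ii) it splits $I(u(i);y^{i}\mid u^{i-1})=I(u(i);y^{l_i}\mid u^{i-1})+I(u(i);y^{i}_{l_i+1}\mid u^{i-1},y^{l_i})$ by the chain rule and shows the second term vanishes, because given $u^{i-1}$ the surplus samples $y^{i}_{l_i+1}$ depend only on $(x_0,w)$ while $u(i)$ given $y^{l_i}$ depends only on the side information, and these are independent by Assumption~\ref{ass23}; it then averages over $\mathfrak{H}$. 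What the paper's route buys is that all the delicate bookkeeping you identify (which conditioning keeps Kraft's inequality valid, how the non-i.i.d.\ $S(k)$ enters the chain rule, where invertibility is used) is outsourced to the cited theorem, and only the delay-specific vanishing-term argument is new; what your route buys is self-containment and no need for the Remark~\ref{rem7} equivalence. Two cautions if you carry your version through: the passage from $H(y_q^{k-1}\mid\cdot)$ to quantities involving $u$ must stay at the level of mutual informations (e.g.\ $H(y_q^{k-1}\mid\cdot)\geq I(y_q^{k-1};y^{l_{k-1}}\mid\cdot)\geq I(\cdot)$ by data processing), since $u$ is real-valued and its discrete entropy is not defined; and removing the $\eta$-conditioning at the end is not a generic monotonicity of conditional mutual information -- it must be argued from the independence of $(\eta_e,\eta_d,h)$ and $(x_0,w)$ in Assumption~\ref{ass23}, exactly as the paper does when it kills the surplus term.
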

\begin{proof}
	It can be implied from \cite[Theorem~3.1]{silva2016characterization} that, for each realization $h^\infty$ of the delays, the average data rate (\ref{eq4}) in the feedback loop of Fig.~\ref{fig2aux} is bounded from below as
	\begin{equation}\label{eq292}
	\mathcal{R}(h^\infty)
	\geq{{I_{\infty}}(y\to{u})}=\lim_{k \to \infty}
	\tfrac{1}{k}
	\Sigma_{i=0}^{k-1}I(u(i);{y^{i}}\mid{u^{i-1}}).
	\end{equation}
	Based upon the chain rule of mutual information, the bound in (\ref{eq292}) can be restated as 
	\begin{align}\label{eq293}
	\begin{split}
	\mathcal{R}(h^\infty)\geq
	\lim_{k \to \infty}
	\tfrac{1}{k}
	\Sigma_{i=0}^{k-1}[{I(u(i);{y^{l_i}}\mid{u^{i-1}})}+I(u(i);{y_{{l_i}+1}^{i}}\mid{u^{i-1},{y^{l_i}}})],
	\end{split}
	\end{align}
	where the definition of $l_i$ is given in Remark~\ref{rem2}. From the dynamics of the plant, we can easily conclude that the sequence ${y_{{l_i}+1}^{i}}$ is only a function of $x(0)$ and $w$, once $u^{i-1}$ is given. Furthermore, it stems from (\ref{eq7})-(\ref{eq271}) that upon the knowledge of $y^{l_i}$, side informations $\eta_{d}^{i}$ and $\eta_{e}^{l_i}$ will be the only variables describing $u(i)$, $\forall{i}\in{\mathbb{N}_0}$. Latter observations together with the fact that Assumption~\ref{ass23} holds for the system of Fig.~\ref{fig21} yield the conclusion that the rightmost term of (\ref{eq293}) amounts to zero, $\forall{i}\in{\mathbb{N}_0}$. So we have
	\begin{align}\label{eq294}
	\begin{split}
	\mathcal{R}(h^\infty)\geq
	\lim_{k \to \infty}
	\tfrac{1}{k}
	\Sigma_{i=0}^{k-1}&{I(u(i);{y^{l_i}}\mid{u^{i-1}})}
	\end{split}
	\end{align}
	for the NCS of Fig.~\ref{fig2aux}. Now by averaging both sides of (\ref{eq294}) with respect to the delay realizations, as in (\ref{511}), and noting that the feedback loop of Fig.~\ref{fig2aux} is equivalent to the system of Fig.~\ref{fig21}, based on Remark~\ref{rem7}, our claim follows immediately.    
\end{proof}
 The next lemma shows that joint Gaussianity of two signals lowers the directed information rate between them when these are connected through a channel with random delay.
  \begin{lemma}\label{lemma6}
 	Suppose that the NCS of Fig.~\ref{fig21} satisfies Assumption~\ref{ass21} and Assumption~\ref{ass23}. For this system, if $(x(0),w,u,y)$ represents a jointly second-order set of processes, then the following holds:
 	\begin{equation}\label{eq295}
 	{I_{\infty}^{h_a}}(y\to{u})\geq{I_{\infty}^{h_a}}({y_G}\to{u_G}),
 	\end{equation}
 	where $y_G$ and $u_G$ symbolize the Gaussian counterparts of $y$ and $u$, respectively, in a way that $(x(0),w,u_G,y_G)$ are jointly Gaussian with the same first-and second-order (cross-) moments as $(x(0),w,u,y)$. 	
 \end{lemma}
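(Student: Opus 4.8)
The plan is to condition on each realization of the delay process, reduce the per-realization statement to the delay-free Gaussian-extremality result already established in \cite{silva2016characterization} (see also the constant-delay version in \cite{barforooshan2017achievable}), and then average over the finitely many delay realizations using Assumption~\ref{ass23}. This is the ``simpler proof'' route referred to in the introduction, replacing the long chains of information identities of \cite{barforooshan2017achievable}.

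First I would fix a realization $h^\infty$ of the delay and work in the conditional probability space ``$h^\infty$ given''. By Assumption~\ref{ass23}, $(h,\eta_e,\eta_d)$ is independent of $(x(0),w)$, so in this conditional space the law of $(x(0),w)$ is unchanged and the side information stays independent of it. By Remark~\ref{rem7}, for a fixed delay pattern the channel-plus-decoder-controller collapses into a deterministic buffer followed by a delay-free channel; hence the loop becomes an instance of the delay-free rate-limited NCS of \cite{silva2016characterization}, the only difference being that at time $i$ the controller acts on the truncated record $y^{l_i}$, with $l_i$ as in Remark~\ref{rem2} now a deterministic non-decreasing function of $i$, rather than on $y^i$. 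This difference is harmless: exactly as in the proof of Theorem~\ref{th5} (using that $y_{l_i+1}^i$ is, given $u^{i-1}$, a function of $(x(0),w)$ alone, together with Assumption~\ref{ass23}), one has $I(u(i);y^i\mid u^{i-1},h^\infty)=I(u(i);y^{l_i}\mid u^{i-1},h^\infty)$. Applying the Gaussian-extremality argument of \cite{silva2016characterization} in this conditional space then yields, for every $k$,
\[
\Sigma_{i=0}^{k-1} I\bigl(u(i);y^{l_i}\mid u^{i-1},h^\infty\bigr) \geq \Sigma_{i=0}^{k-1} I\bigl(u_G(i);y_G^{l_i}\mid u_G^{i-1},h^\infty\bigr),
\]
where, conditionally on $h^\infty$, $(x(0),w,u_G,y_G)$ is jointly Gaussian with the same first- and second-order (cross-)moments as $(x(0),w,u,y)$.

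Next I would take expectation over $h^{k-1}$ --- a finite average, since $\mathfrak{H}$ has finite support at each time --- divide by $k$, and let $k\to\infty$. Because only finitely many realizations of $h^{k-1}$ occur, the limit commutes with the average; the left-hand side converges to $I_\infty^{h_a}(y\to u)$ by Theorem~\ref{th5}, and the right-hand side converges to $I_\infty^{h_a}(y_G\to u_G)$ for the process $(u_G,y_G)$ obtained by gluing the per-realization conditionally-Gaussian surrogates, yielding (\ref{eq295}). Before concluding I would verify that this glued process is precisely the $(u_G,y_G)$ of the statement and that it inherits Assumptions~\ref{ass21} and~\ref{ass23}, so that the right-hand side is well defined.

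The step I expect to be the main obstacle is the conditional Gaussian-extremality inequality itself, invoked above as ``the argument of \cite{silva2016characterization}''. A term-by-term substitution does not work: the directed-information expansion contains both an additive and a subtractive differential-entropy term, and the maximum-entropy principle bounds each of them by its Gaussian value in the same direction, giving nothing. One must instead decompose each conditional differential entropy into the contribution of the genuinely random part of $(x(0),w)$ --- to which the Gaussian bound is applied, using that $y^{l_i}$ is, given $u^{i-1}$, an affine function of $(x(0),w)$ because $G_{22}$ is strictly proper --- and the contribution of the deterministic linear maps and of the side-information channels, which are untouched by Gaussianization precisely because $(\eta_e,\eta_d,h)\perp(x(0),w)$, so that $I_\infty^{h_a}(y_G\to u_G)$ ends up on the smaller side. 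Carrying this bookkeeping through with the realization-dependent, time-varying window $y^{l_i}$ in place of $y^i$, and confirming that the steady-state and asymptotic-equivalence requirements behind the SAWSS setting survive the Gaussianization, is the technical heart of the argument.
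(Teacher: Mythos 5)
Your overall route is the same as the paper's: fix a delay realization, use the equivalence of Remark~\ref{rem7} to reduce to a delay-free loop, invoke the Gaussian-extremality result of \cite{silva2016characterization} conditionally, and then average over the finitely many realizations of $h^{k-1}$. Where you diverge is in how the truncated record $y^{l_i}$ is reconciled with the full record $y^i$, and this is exactly the point where your proposal has a gap: you assert the per-realization inequality $\sum_i I(u(i);y^{l_i}\mid u^{i-1})\geq\sum_i I(u_G(i);y_G^{l_i}\mid u_G^{i-1})$ as a consequence of ``the Gaussian-extremality argument,'' then concede in your last paragraph that proving Gaussian extremality directly for the time-varying window is the unresolved ``technical heart,'' and sketch a differential-entropy bookkeeping that you never carry out.

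That re-derivation is unnecessary. The paper closes the step with the chain rule alone: apply \cite[Lemma~3.1]{silva2016characterization} verbatim to the \emph{full} records to get $\sum_i I(u(i);y^{i}\mid u^{i-1})\geq\sum_i I(u_G(i);y_G^{i}\mid u_G^{i-1})$; then (i) for the original process the full and truncated directed informations coincide, since $I(u(i);y^{i}_{l_i+1}\mid u^{i-1},y^{l_i})=0$ by the plant dynamics and Assumption~\ref{ass23} (an identity you yourself already establish); and (ii) for the Gaussian surrogate you only need the one-sided bound $I(u_G(i);y_G^{i}\mid u_G^{i-1})\geq I(u_G(i);y_G^{l_i}\mid u_G^{i-1})$, which is free from the chain rule and nonnegativity of conditional mutual information --- no equality and no windowed extremality argument is required on the Gaussian side. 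Chaining these three facts gives (\ref{eq297}), and averaging over delay realizations finishes the proof. So your proposal is salvageable with a one-line fix, but as written the step you flag as the main obstacle is a genuine hole, and the machinery you propose to fill it with is both heavier than needed and not actually executed.
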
 
 \begin{proof}
	According to \cite[Lemma~3.1]{silva2016characterization}, the directed information rate from sensor output to the control input in the auxiliary NCS of Fig.~\ref{fig2aux} is bounded as follows:
	\begin{equation}\label{eq296}
	{{I_{\infty}}(y\to{u})\geq{I_{\infty}}({y_G}\to{u_G})},
	\end{equation}
	where ${I_{\infty}}(y\to{u})$ and ${I_{\infty}}({y_G}\to{u_G})$ are defined as in (\ref{eq292}). We conclude based on (\ref{eq7})-(\ref{eq271}), the dynamics of the plant and the system of Fig.~\ref{fig2aux} satisfying Assumption~\ref{ass23} that $I(u(i);{y_{{l_i}+1}^{i}}\mid{u^{i-1},{y^{l_i}}})=0, \forall{i}\in{\mathbb{N}_0}$. This together with the chain rule of mutual information lead to 
	\begin{equation}\label{eq297}
	\begin{split}
	\lim_{k \to \infty}
	\tfrac{1}{k}
	\Sigma_{i=0}^{k-1}{I(u(i);{y^{l_i}}\mid{u^{i-1}})}\geq\lim_{k \to \infty}
	\tfrac{1}{k}
	\Sigma_{i=0}^{k-1}{I({u_G}(i);{{y_G}^{l_i}}\mid{{u_G}^{i-1}})}.
	\end{split} 
	\end{equation}
	Now the proof is complete by taking average over all possible realizations of the delay from both sides of (\ref{eq297}) and considering that based on Remark~\ref{rem7}, the system of Fig.~\ref{fig21} is equivalent to the feedback loop in Fig.~\ref{fig2aux}. 	
\end{proof}
 If the above Gaussian signals are stationary as well, then the average directed information rate can be stated in terms of the average power spectral density of the involved signals. The next lemma will state such result formally.
 \begin{lemma}\label{lemma8}
 	Suppose that the control input $u$ in the NCS of Fig.~\ref{fig21} is SAWSS for every realization of the channel delay. For each realization, assume that there exists a $\mu>0$ in such a way that $|{{{\lambda}_{\min}}(C_{{u}_{1}^{n}})}|\geq{\mu}$, $\forall{n}\in{\mathbb{N}}$. Let further consider the sensor output $y$ jointly AWSS with $u$. Then the average directed information rate is equal to an integral term as follows:
 	\begin{equation}\label{eq298}
 	\begin{split}
 	{{I_{\infty}^{h_a}}(y\to{u})}=\sum_{\substack{{h^k}\in{\mathfrak{H}}\\
 			k\to\infty}}{Pr(h^k)}[\frac{1}{4\pi}\int_{-\pi}^{\pi} \log\big(\frac{{S_{\check{u}}}(e^{j\omega},h^k)}{{\sigma}_{\psi}^{2}(h^k)}\big)d\omega],
 	\end{split}
 	\end{equation}
 	where $\psi$ is a Gaussian AWSS process that has independent samples. Such a random process is described as 
 	\begin{equation}\label{eq299}
 	\psi(k)\triangleq{u(k)-\tilde{u}(k)},
 	\tilde{u}(k)\triangleq{{\mathbf{E}}[u(k)\mid{{y^{l_k}},{u^{k-1}}}]}
 	\end{equation}
 	for each realization of the random delay. Moreover, ${S_{\check{u}}}$ represents the steady-state power spectral density of $u$.
 \end{lemma}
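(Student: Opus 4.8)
The plan is to compute the directed information rate for an arbitrary fixed realization $h^{\infty}$ of the channel delay, put it into closed spectral form, and then average over $\mathfrak{H}$ exactly as in \eqref{511}. Fix a realization. From the proof of Theorem~\ref{th5}, the relevant per-realization quantity is $I_{\infty}(y\to u)=\lim_{k\to\infty}\tfrac1k\sum_{i=0}^{k-1}I(u(i);y^{l_i}\mid u^{i-1})$, so it suffices to evaluate this limit and then reinstate the weights $Pr(h^{k})$. I would begin by writing each summand as a difference of conditional differential entropies,
\[
I(u(i);y^{l_i}\mid u^{i-1})=h\big(u(i)\mid u^{i-1}\big)-h\big(u(i)\mid u^{i-1},y^{l_i}\big),
\]
and treat the two resulting families of terms separately.

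For the subtracted term I would use the definition $\psi(i)=u(i)-\tilde u(i)$ with $\tilde u(i)=\mathbf{E}[u(i)\mid y^{l_i},u^{i-1}]$ from \eqref{eq299}. Since $\tilde u(i)$ is a deterministic function of the conditioning variables, translation invariance of differential entropy gives $h(u(i)\mid u^{i-1},y^{l_i})=h(\psi(i)\mid u^{i-1},y^{l_i})$. Because $\psi$ is Gaussian with independent samples and $\psi(i)$ is the estimation error, it is orthogonal to — hence independent of — $(u^{i-1},y^{l_i})$; indeed for $j<i$ one has $l_j\le l_i$, so $\psi(j)$ is a function of variables contained in the conditioning set, which forces the innovations to be mutually orthogonal. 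Hence $h(u(i)\mid u^{i-1},y^{l_i})=\tfrac12\log\!\big(2\pi e\,\mathbf{E}[\psi(i)^2]\big)$, and since $u$ — and therefore $\psi$ — is SAWSS, $\mathbf{E}[\psi(i)^2]\to\sigma_{\psi}^{2}(h^{\infty})$; a Ces\`aro argument then yields $\lim_{k}\tfrac1k\sum_{i=0}^{k-1}h(u(i)\mid u^{i-1},y^{l_i})=\tfrac12\log\!\big(2\pi e\,\sigma_{\psi}^{2}(h^{\infty})\big)$.

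For the first family, telescoping via the chain rule gives $\tfrac1k\sum_{i=0}^{k-1}h(u(i)\mid u^{i-1})=\tfrac1k h(u^{k-1})=\tfrac12\log(2\pi e)+\tfrac1{2k}\log\det C_{u_1^{k}}$, using that $u$ is jointly Gaussian here, consistently with $\psi$ being Gaussian. The SAWSS hypothesis is essential at this point: it supplies a WSS process $\check u$ whose covariance sequence is asymptotically equivalent to $\{C_{u_1^{n}}\}$, and the bound $|\lambda_{\min}(C_{u_1^{n}})|\ge\mu>0$ together with the uniform bound on $\lambda_{\max}$ built into asymptotic equivalence confines all eigenvalues to a fixed interval $[\mu,\varrho]$. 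By the Toeplitz/Szeg\H{o} eigenvalue-distribution theorem, applied exactly as in \cite{silva2016characterization}, this forces $\tfrac1k\log\det C_{u_1^{k}}\to\tfrac1{2\pi}\int_{-\pi}^{\pi}\log S_{\check u}(e^{j\omega},h^{\infty})\,d\omega$, so $\lim_k\tfrac1k\sum_{i=0}^{k-1}h(u(i)\mid u^{i-1})=\tfrac12\log(2\pi e)+\tfrac1{4\pi}\int_{-\pi}^{\pi}\log S_{\check u}(e^{j\omega},h^{\infty})\,d\omega$.

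Subtracting the two limits, the $\tfrac12\log(2\pi e)$ terms cancel and, writing $\tfrac12\log\sigma_{\psi}^{2}=\tfrac1{4\pi}\int_{-\pi}^{\pi}\log\sigma_{\psi}^{2}\,d\omega$, I obtain $I_{\infty}(y\to u)=\tfrac1{4\pi}\int_{-\pi}^{\pi}\log\big(S_{\check u}(e^{j\omega},h^{\infty})/\sigma_{\psi}^{2}(h^{\infty})\big)\,d\omega$ for the fixed realization. Finally, multiplying by $Pr(h^{k})$, summing over $h^{k}\in\mathfrak{H}$ and letting $k\to\infty$ as in \eqref{511} — the interchange of limit and finite, uniformly bounded summation being legitimate because every quantity involved is bounded uniformly over realizations under the stated assumptions — gives \eqref{eq298}. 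The step I expect to be the crux is the passage $\tfrac1k\log\det C_{u_1^{k}}\to\tfrac1{2\pi}\int\log S_{\check u}\,d\omega$ for a process that is only SAWSS rather than genuinely WSS: this is precisely where the asymptotic equivalence of the covariance sequences and the two-sided eigenvalue bounds $\mu\le\lambda\le\varrho$ must be invoked, with a secondary, routine concern being the uniformity needed to exchange the per-realization limit with the averaging over $\mathfrak{H}$.
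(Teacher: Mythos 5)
Your proof is correct, but it takes a genuinely different route from the paper's. The paper proves this lemma by reduction: it invokes Lemma~3.2 of \cite{silva2016characterization} on the equivalent delay-free system of Fig.~3 (Remark~4) to get $I_{\infty}(y\to u)=\frac{1}{4\pi}\int\log\bigl(S_{\bar u}/\sigma_n^2\bigr)d\omega$ with the innovations $n(k)=u(k)-\mathbf{E}[u(k)\mid y^{k},u^{k-1}]$ conditioned on the \emph{full} past $y^k$, and then the entire new content of the proof is the identity $\mathbf{E}[u(k)\mid y^{k},u^{k-1}]=\mathbf{E}[u(k)\mid y^{l_k},u^{k-1}]$, which follows from the plant dynamics and the independence of $(x_0,w)$ from the side information in Assumption~2 (so that $n=\psi$), before averaging over $\mathfrak{H}$. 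You instead evaluate $\lim_k\frac1k\sum_i I(u(i);y^{l_i}\mid u^{i-1})$ from first principles: the entropy decomposition, the innovations argument (where your observation that $l_j\le l_i$ for $j<i$ correctly places each earlier $\psi(j)$ inside the conditioning set, giving independence in the Gaussian case), the telescoping to $\frac1k\log\det C_{u_1^k}$, and the Szeg\H{o}/asymptotic-equivalence step using exactly the two-sided eigenvalue bounds $\mu\le\lambda\le\varrho$ supplied by the hypotheses and the SAWSS definition — which is essentially a re-derivation, in the delayed setting, of the cited lemma itself. What your approach buys is self-containedness and the avoidance of the conditional-expectation identity (you work with $y^{l_i}$ throughout); what the paper's buys is brevity and an explicit isolation of the one fact that changes when delay is introduced. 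One presentational caveat: joint Gaussianity of $(u,y)$, which you use both for ``orthogonal hence independent'' and for the Gaussian entropy formulas, is only implicit in the lemma statement (it is how the lemma is applied downstream, to $(y_G,u_G)$), so you should state it as a standing hypothesis rather than infer it from the asserted Gaussianity of $\psi$.
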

 \begin{proof}
	It can be deduced from \cite[Lemma~3.2]{silva2016characterization} that in the NCS of Fig.~\ref{fig2aux}, the following holds for the directed information rate :
	\begin{equation}\label{eq300}
	{{I_{\infty}}(y\to{u})}=\frac{1}{4\pi}\int_{-\pi}^{\pi} \log\big(\frac{{S_{\bar{u}}}(e^{j\omega})}{{\sigma}_{n}^{2}}\big)d\omega,
	\end{equation}
	in which $n$ is a Gaussian AWSS process with independent samples and ${{I_{\infty}}(y\to{u})}$ is defined as in (\ref{eq292}). The noise $n$ is calculated as follows:
	\begin{equation}\label{eq301}
	n(k)\triangleq{u(k)-\hat{u}(k)},
	\hat{u}(k)\triangleq{{\mathbf{E}}[u(k)\mid{{y^{k}},{u^{k-1}}}]}.
	\end{equation} 
	As already mentioned before, based on the plant dynamics, the knowledge of $u^{i-1}$ will render ${y_{{l_i}+1}^{i}}$ dependent only on $x(0)$ and $w^i$ for any $i\in{\mathbb{N}_0}$. Moreover, according to (\ref{eq7})-(\ref{eq271}), knowing $y^{l_i}$, one can determine $u(i)$ by only figuring out  ${\eta_d}^{i}$ and ${\eta_e}^{l_i}$, $\forall{i}\in{\mathbb{N}_0}$. Since, based on Assumption~\ref{ass23}, $(x_0,w)$ and $({\eta_d},{\eta_e})$ are independent, the following is yielded:
	\begin{equation}\label{eq302}
	\begin{split}
	\lim_{k \to \infty}
	\tfrac{1}{k}
	\Sigma_{i=0}^{k-1}{I(u(i);{y^{l_i}}\mid{u^{i-1}})}=\frac{1}{4\pi}\int_{-\pi}^{\pi} \log\big(\frac{{S_{\bar{u}}}(e^{j\omega})}{{\sigma}_{n}^{2}}\big)d\omega,\\
	{{\mathbf{E}}[u(k)\mid{{y^{k}},{u^{k-1}}}]}={{\mathbf{E}}[u(k)\mid{{y^{l_k}},{u^{k-1}}}]}.
	\end{split} 
	\end{equation}
	From (\ref{eq302}), it can be concluded that $n(k)$ is actually equal to $\psi(k)$ as in (\ref{eq299}) for the NCS of Fig.~\ref{fig2aux}. Now,  our claim is given by taking average from both sides of upper (\ref{eq302}) and noting that based on Remark~\ref{rem7}, the systems in Fig.~\ref{fig2aux} and Fig.~\ref{fig21} are equivalent. 
\end{proof}
We are now ready to present a lower bound on ${\mathcal{R}_a}(D)$. A corollary follows: 
\begin{corollary}
	Suppose that the NCS of Fig.~\ref{fig21} satisfies Assumption~\ref{ass21}. Then ${\mathcal{R}_a}(D)$ is lower bounded as follows:
	 	\begin{equation}\label{eq3}
	\begin{split}
	{\mathcal{R}_a}(D)\geq{\inf_{{{\sigma}_{za}^{2}}\leq{D}}}\sum_{\substack{{h^k}\in{\mathfrak{H}}\\
			k\to\infty}}{Pr(h^k)}\frac{1}{4\pi}\int_{-\pi}^{\pi} \log\big(\frac{{S_{\check{u}}}(e^{j\omega},h^k)}{{\sigma}_{\psi}^{2}(h^k)}\big)d\omega,
	\end{split}
	\end{equation}
where $\psi$ and $\check{u}$ are defined as in (\ref{eq299}) and the infimum is restricted to all mappings staisfying (\ref{eq24})-(\ref{eq271}) and Assumption~\ref{ass23}, and producing signals $y$ and $u$ with propoerties as stated in Lemma~\ref{lemma8}.
\end{corollary}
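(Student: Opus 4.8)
The plan is to concatenate Theorem~\ref{th5}, Lemma~\ref{lemma6} and Lemma~\ref{lemma8}, and then impose the performance constraint that defines (\ref{eq51}). First I would fix an arbitrary admissible coding--control scheme, i.e.\ one obeying (\ref{eq24})--(\ref{eq271}), satisfying Assumption~\ref{ass23}, rendering the loop SAWSS, and producing signals $u,y$ with the regularity demanded by Lemma~\ref{lemma8}. Under Assumption~\ref{ass21} such a scheme makes $(x(0),w,u,y)$ a jointly second-order set of processes: Assumption~\ref{ass21} supplies the finite-variance Gaussian $x_0$ and the white $w$ with $C_w=I$, while SAWSS-ness supplies finite-variance $u$ and $y$. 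Theorem~\ref{th5} then gives $\mathcal{R}_a\geq{I_\infty^{h_a}}(y\to u)$, and Lemma~\ref{lemma6} upgrades this to $\mathcal{R}_a\geq{I_\infty^{h_a}}(y_G\to u_G)$, where $(u_G,y_G)$ denotes the Gaussian counterpart sharing all first- and second-order (cross-)moments with $(u,y)$.

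The key observation is that passing to the Gaussian counterpart alters neither the performance level nor the regularity needed downstream. From (\ref{eq1}) the controlled output is $z=G_{11}w+G_{12}u$ with $G_{11},G_{12}$ LTI, so for every realization $h^k$ the steady-state variance $\sigma_z^2(h^k)$, hence its average $\sigma_{za}^2$ in (\ref{511}), is a functional of the second-order statistics of $(w,u)$ alone; consequently $(u_G,y_G)$ produces the same $\sigma_{za}^2$ and in particular meets $\sigma_{za}^2\leq D$ exactly when the original scheme does. Moreover $C_{{u_G}_1^n}=C_{u_1^n}$ for every $n$, so the SAWSS property of $u$, the joint AWSS property of $(y,u)$, and the bound $|\lambda_{\min}(C_{u_1^n})|\geq\mu$ transfer verbatim to $(u_G,y_G)$, which therefore satisfies the hypotheses of Lemma~\ref{lemma8}. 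Applying Lemma~\ref{lemma8} to $(u_G,y_G)$ rewrites ${I_\infty^{h_a}}(y_G\to u_G)$ as $\sum_{h^k\in\mathfrak{H},\,k\to\infty}Pr(h^k)\tfrac{1}{4\pi}\int_{-\pi}^{\pi}\log\big(S_{\check u}(e^{j\omega},h^k)/\sigma_\psi^2(h^k)\big)\,d\omega$, with $\psi$ and $\check u$ built from the Gaussian signals as in (\ref{eq299}); since for Gaussian processes the conditional expectation in (\ref{eq299}) is the linear MMSE estimate, both $S_{\check u}$ and $\sigma_\psi^2$ are once again fixed by the original second-order statistics. Taking the infimum of the resulting expression over all admissible schemes with $\sigma_{za}^2\leq D$ then yields (\ref{eq3}).

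The main obstacle I anticipate is not a hard estimate but the careful bookkeeping that keeps the three displayed bounds living over one optimization domain. Concretely, one must check that each scheme feasible for (\ref{eq51}) contributes, via its Gaussian counterpart, a legitimate term of the right-hand side of (\ref{eq3}); that the spectral quantities in (\ref{eq3}) are to be read as those attached to the Gaussian counterpart (equivalently, that after Gaussianization the conditional mean in (\ref{eq299}) becomes linear and hence a second-order functional), so that the identity furnished by Lemma~\ref{lemma8} is the one actually fed into the bound; and that restricting the infimum to the class over which (\ref{eq3}) is taken can only lower its value, so the inequality points in the right direction. All of these reduce to the equality of the relevant (cross-)covariance sequences under Gaussianization together with the fact that conditional expectation is linear for jointly Gaussian variables; once they are in place, Lemma~\ref{lemma8} closes the argument with no further computation.
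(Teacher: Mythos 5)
Your argument is correct and lands on the same chain the paper itself uses --- Theorem~\ref{th5} to pass from $\mathcal{R}_a$ to the average directed information rate, Lemma~\ref{lemma8} to rewrite that rate spectrally, and then the infimum subject to ${\sigma}_{za}^{2}\leq D$ --- but the paper's own proof is literally one sentence (``immediate from Theorem~\ref{th5} and Lemma~\ref{lemma8}'') and never invokes Lemma~\ref{lemma6}. Your interposition of Lemma~\ref{lemma6} (Gaussianize first, then apply Lemma~\ref{lemma8} to $(u_G,y_G)$) is the one genuine difference, and it buys something real: Lemma~\ref{lemma8} is inherited from \cite[Lemma~3.2]{silva2016characterization}, whose spectral identity presupposes joint Gaussianity (compare the explicit Gaussianity hypothesis in the constant-delay analogue, Lemma~\ref{lemma2}), so applying it directly to an arbitrary feasible scheme is only safe after Gaussianization or after restricting to schemes whose signals are already Gaussian; your route covers all second-order feasible schemes rather than only those. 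Your observations that ${\sigma}_{za}^{2}$, the eigenvalue bound on $C_{u_1^n}$, and the SAWSS/AWSS properties are second-order functionals and hence invariant under Gaussianization are exactly what makes the detour close. The one point you flag but do not fully discharge --- that the Gaussian counterpart is realizable by mappings belonging to the feasible set of the right-hand side of (\ref{eq3}) --- is addressed by the paper only in the constant-delay setting, via the linear realization (\ref{eq194})--(\ref{eq195}) in the proof of Theorem~\ref{th3}; given that the corollary's infimum is stated over arbitrary mappings producing signals with the Lemma~\ref{lemma8} properties, leaving this at the level of a remark matches the paper's own level of rigor.
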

\begin{proof}
	The claim follows immediately from Theorem~\ref{th5} and Lemma~\ref{lemma8}.
\end{proof} 
\section{Lower bound problem in the case of the constant delay}
 In this section, we consider the same NCS as described in Section~III but with a channel that imposes a known constant delay, say $h$ steps, on the transmitted data. The corresponding feedback loop is depicted by Fig.~\ref{fig1}. 
\begin{figure}
\centering
\includegraphics[width=8cm]{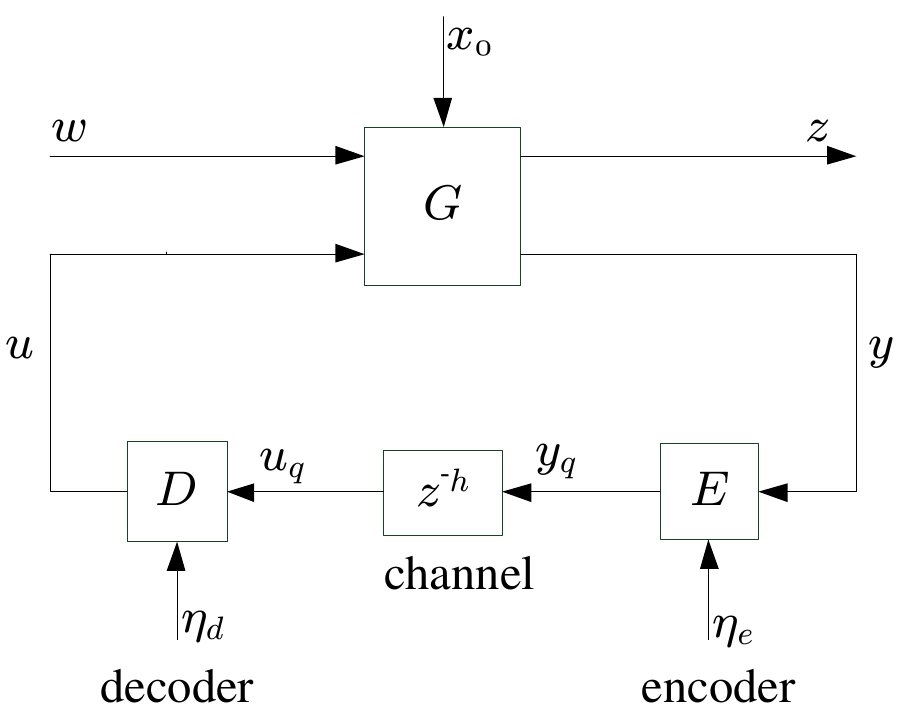}
\caption{Considered NCS in the constant delay case}
\label{fig1}
\end{figure}
 The problem we investigate here is a special case of the problem formalized in (\ref{eq51}) where the channel delay is constant and therefore, there is only one realization for the channel delay. In this case, we consider the notation ${{\mathcal{R}}_{a}}(D)={\mathcal{R}}(D)$ and ${D}_{\inf}(h_r)={D}_{\inf}(h)$. In Appendix~\ref{app0dinfh}, we prove that finding $\mathcal{R}(D)$ is feasible if ${D}\in({D_{\inf}}(h),\infty)$. We show that in order to obtain a lower bound on ${\mathcal{R}}(D)$, one can minimize the directed information rate over an auxiliary coding scheme formed of LTI filters and an AWGN channel with feedback and delay. Inequalities and identities related to the delay-free version of this optimization derived in \cite{silva2016characterization} will be extended to the case with a constant channel delay. We start by deriving a lower bound on the average data rate $\mathcal{R}$ in the following theorem.
 \begin{theorem}\label{th2}
 	Suppose that the feedback system of Fig.~\ref{fig1} satisfies Assumptions~\ref{ass21} and \ref{ass23}. Then the average data rate $\mathcal{R}$ is lower bounded as follows:
 	\begin{equation}\label{eq182}
 	\mathcal{R}
 	\geq{{I_{\infty}^{(h)}}(y\to{u})}=\lim_{k \to \infty}
 	\tfrac{1}{k}
 	\Sigma_{i=0}^{k-1}I(u(i);{y^{i-h}}\mid{u^{i-1}}),
 	\end{equation}
 	 where ${{I_{\infty}^{(h)}}(y\to{u})}$ is the directed information rate across the forward channel from $y$ to $u$ with constant delay $h$ (see \cite[Definition~1]{derpich2013fundamental} for the formal definition).   	
 \end{theorem}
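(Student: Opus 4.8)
The plan is to specialize Theorem~\ref{th5} to the degenerate case where the random delay is a Dirac mass at the constant value $h$. Concretely, the support set $\mathfrak{H}$ collapses to the single realization $h^{k-1}=(h,h,\dots,h)$ with $Pr(h^{k-1})=1$, so that the outer average over $\mathfrak{H}$ in the right-hand side of (\ref{eq291}) disappears. Under this specialization one has $l_i=i-h$ for all $i$ large enough (and $l_i=\max\{i-h,0\}$ in general, but the truncation at $0$ only affects finitely many terms and vanishes in the Ces\`aro limit, also using the convention from Assumption~\ref{ass23} that $I(u(i);y(i-h_j)\mid u^{i-1})=0$ when $i-h_j<0$). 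Hence ${{I_{\infty}^{h_a}}(y\to u)}$ becomes exactly ${{I_{\infty}^{(h)}}(y\to u)}=\lim_{k\to\infty}\tfrac{1}{k}\sum_{i=0}^{k-1}I(u(i);y^{i-h}\mid u^{i-1})$, which is precisely the claimed bound (\ref{eq182}).

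First I would observe that Assumptions~\ref{ass21} and \ref{ass23} are inherited verbatim by the constant-delay system of Fig.~\ref{fig1}, since a constant delay trivially satisfies the i.i.d.\ bounded-support hypothesis on $h(k)$ (with $m=1$, $h_1=h_m=h$; note this does require relaxing the normalization $h_1=0$, $h_m=h_{\max}$, but that normalization was declared to be without loss of generality). Then I would invoke Theorem~\ref{th5} directly: it gives $\mathcal{R}_a\geq{{I_{\infty}^{h_a}}(y\to u)}$, and since in the constant-delay case $\mathcal{R}_a=\mathcal{R}$ (there being a single delay realization, as noted in the text just before the statement), the inequality reads $\mathcal{R}\geq{{I_{\infty}^{h_a}}(y\to u)}$. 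It then remains only to rewrite the right-hand side: carry out the collapse of the sum over $\mathfrak{H}$ described above and identify the resulting expression with ${{I_{\infty}^{(h)}}(y\to u)}$ as defined in (\ref{eq182}), citing \cite[Definition~1]{derpich2013fundamental} for the interpretation as a constant-delay directed information rate.

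Alternatively, and perhaps cleaner, I would re-run the short proof of Theorem~\ref{th5} in the constant-delay setting: apply \cite[Theorem~3.1]{silva2016characterization} to the auxiliary delay-free loop obtained from Fig.~\ref{fig1} by the equivalence of Remark~\ref{rem7} (a decoder-controller applying a constant delay $h$ followed by a delay-free channel), yielding $\mathcal{R}\geq\lim_{k\to\infty}\tfrac1k\sum_{i=0}^{k-1}I(u(i);y^i\mid u^{i-1})$; split off $I(u(i);y_{i-h+1}^{i}\mid u^{i-1},y^{i-h})$ by the chain rule; and argue this term is zero because, given $u^{i-1}$, the block $y_{i-h+1}^{i}$ depends only on $(x(0),w)$, while given $y^{i-h}$ the control $u(i)$ is a function of the side information $(\eta_d^i,\eta_e^{i-h})$ alone, and these are independent by Assumption~\ref{ass23}. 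No averaging step is needed here since the delay is deterministic.

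I do not anticipate a genuine obstacle: the statement is essentially a corollary of Theorem~\ref{th5}, and the only points demanding care are bookkeeping ones — handling the finitely many boundary indices $i<h$ where $l_i=0\ne i-h$ (harmless in the Ces\`aro average, and consistent with the zero-mutual-information convention in Assumption~\ref{ass23}), and making explicit that setting $m=1$ is legitimate despite the earlier normalization. The mildly delicate conceptual step, as in the random case, is the vanishing of the cross term $I(u(i);y_{i-h+1}^{i}\mid u^{i-1},y^{i-h})$, which rests on the conditional-independence structure granted by Assumption~\ref{ass23} together with the causal form (\ref{eq24})--(\ref{eq271}) of the coder-controller maps; but this is exactly the argument already spelled out in the proof of Theorem~\ref{th5}, so it transfers with only cosmetic changes.
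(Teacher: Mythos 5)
Your proposal is correct and follows essentially the same route as the paper: the paper's own proof is a one-line specialization of Theorem~\ref{th5} to the constant-delay case, noting that $l_k=k-h$ there (the paper writes ``$l_k=h$'', apparently a typo), exactly as in your main plan. Your extra bookkeeping about the boundary indices $i<h$ and the $h_1=0$ normalization, and your alternative of re-running the Theorem~\ref{th5} argument directly, are sound but not needed beyond what the paper does.
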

\begin{proof}
	Considering that ${l_k}=h$ holds at any $k\in{{\mathbb{N}}_0}$ for the NCS of Fig.~\ref{fig1}, we can conclude the claim immediately from Theorem~\ref{th5}.  
\end{proof}
The directed information rate in (\ref{eq182}) will be reduced if the involved signals are jointly Gaussian. This result is formalized by the following lemma.
\begin{lemma}\label{lemma1}
	Suppose that Assumptions~\ref{ass21} and \ref{ass23} hold for the NCS of Fig.~\ref{fig1}. Furthermore, consider $(x(0),w,u,y)$ as a jointly second-order set of random processes. Denote the Gaussian counterparts of $y$ and $u$ by $y_G$ and $u_G$, respectively, where $(x(0),w,u_G,y_G)$ are jointly Gaussian with the same first-and second-order (cross-) moments as $(x(0),w,u,y)$. Then ${{I_{\infty}^{(h)}}(y\to{u})\geq{I_{\infty}^{(h)}}({y_G}\to{u_G})}$.
\end{lemma}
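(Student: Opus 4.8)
The plan is to obtain Lemma~\ref{lemma1} as the constant-delay specialization of Lemma~\ref{lemma6}. A known constant delay of $h$ steps is exactly the degenerate instance of the random-delay model of Section~\ref{PF} in which $m=1$ and $h_1=h_{\max}=h$ with $\alpha_1=1$. For this choice the support set $\mathfrak{H}$ consists of the single realization $(h,h,\dots)$, so $Pr(h^k)=1$ for that realization and $S(k)=\{k-h\}$, i.e.\ $l_i=i-h$ for every $i\in\mathbb{N}_0$, by (\ref{eqnew1}) and Remark~\ref{rem2}. Substituting these values into the definition (\ref{eq291}) of the average directed information rate collapses the outer sum over delay realizations and leaves precisely $\lim_{k\to\infty}\tfrac1k\sum_{i=0}^{k-1}I(u(i);y^{i-h}\mid u^{i-1})={I_\infty^{(h)}}(y\to u)$; the same substitution applied to the Gaussian counterparts yields ${I_\infty^{h_a}}(y_G\to u_G)={I_\infty^{(h)}}(y_G\to u_G)$. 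Since Assumptions~\ref{ass21} and \ref{ass23} together with the jointly second-order hypothesis imposed here are exactly the hypotheses of Lemma~\ref{lemma6}, inequality (\ref{eq295}) then reads ${I_\infty^{(h)}}(y\to u)\ge{I_\infty^{(h)}}(y_G\to u_G)$, which is the claim.

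For completeness I would also spell out the self-contained route that mirrors the proof of Lemma~\ref{lemma6}. By Remark~\ref{rem7}, the NCS of Fig.~\ref{fig1} is equivalent to the auxiliary loop of Fig.~\ref{fig2aux} with a constant delay of $h$ steps placed before a delay-free channel; applying \cite[Lemma~3.1]{silva2016characterization} to that loop gives ${I_\infty}(y\to u)\ge{I_\infty}(y_G\to u_G)$ with both rates in the form (\ref{eq292}), conditioning on the full past $y^i$. I would then expand, by the chain rule of mutual information, $I(u(i);y^i\mid u^{i-1})=I(u(i);y^{i-h}\mid u^{i-1})+I(u(i);y_{i-h+1}^{i}\mid u^{i-1},y^{i-h})$ and argue the last term vanishes for every $i$: given $u^{i-1}$ the block $y_{i-h+1}^{i}$ is, by the plant dynamics, a deterministic function of $x(0)$ and $w^i$ alone, whereas given $y^{i-h}$ the sample $u(i)$ is determined only by the side informations $\eta_e^{i-h}$ and $\eta_d^i$, and by Assumption~\ref{ass23} the pair $(x_0,w)$ is independent of $(\eta_e,\eta_d)$, so the conditional mutual information is zero. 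The identical manipulation applied to the Gaussian counterparts (where $y_{G,i-h+1}^{i}$ is still an affine function of $(x(0),w^i)$ once $u_G^{i-1}$ is given, and the Gaussianized side informations remain independent of $(x_0,w)$) removes the corresponding term, and averaging $\tfrac1k\sum_{i=0}^{k-1}(\cdot)$ and letting $k\to\infty$ yields the inequality.

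I do not expect a genuine obstacle here, since the substantive content is already contained in \cite[Lemma~3.1]{silva2016characterization} and in Lemma~\ref{lemma6}; the only point that requires care is the legitimacy of the two reductions. In the first route one must check that passing to the degenerate delay distribution commutes with forming the Gaussian counterpart, i.e.\ that the $(x(0),w,u_G,y_G)$ built for the constant-delay system inherits the matched-moments property and the structural independence used in Lemma~\ref{lemma6}; this is immediate because the Gaussianization is defined purely through first and second (cross-)moments, which are insensitive to the delay distribution. In the second route the delicate step is the vanishing of $I(u_G(i);y_{G,i-h+1}^{i}\mid u_G^{i-1},y_G^{i-h})$ in the Gaussianized loop, which holds for the same reason as in Fig.~\ref{fig2aux}: conditioning on $u_G^{i-1}$ reduces $y_{G,i-h+1}^{i}$ to an affine image of $(x(0),w^i)$, and this is independent of the Gaussian side information that alone determines $u_G(i)$ given $y_G^{i-h}$, by Assumption~\ref{ass23}.
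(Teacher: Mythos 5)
Your proposal is correct and follows essentially the same route as the paper: the paper's own proof simply notes that in the constant-delay case $l_k$ specializes (to $k-h$, though the paper writes ``$l_k=h$'', apparently a typo) and invokes Lemma~\ref{lemma6} directly, which is exactly your first paragraph. Your second paragraph, re-deriving the chain-rule decomposition and the vanishing of $I(u(i);y_{i-h+1}^{i}\mid u^{i-1},y^{i-h})$, is a faithful unfolding of the proof of Lemma~\ref{lemma6} rather than a different argument.
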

\begin{proof}
Recall that ${l_k}=h$, $\forall{k}\in{{\mathbb{N}}_0}$, for the considered case with constant channel delay. The claim follows immediately from Lemma~\ref{lemma6}. 
\end{proof}
It can be implied from Lemma~\ref{lemma1} that by minimizing directed information rate over a scheme that renders $y$ and $u$ jointly Gaussian, one can obtain a lower bound on $\mathcal{R}(D)$. Now, we will show that the directed information rate can be stated in terms of power spectral densities of the involved processes if such signals meet certain stationarity conditions.
\begin{lemma}\label{lemma2}
	Suppose that $u$ is an SAWSS process with $|{{{\lambda}_{\min}}(C_{{u}_{1}^{n}})}|\geq{\mu}$, $\forall{n}\in{\mathbb{N}}$ where $\mu>0$. Moreover, assume that $u$ is jointly Gaussian and AWSS with the sensor output $y$. Then the directed information rate between $u$ and $y$ is expressed as 
	\begin{equation}\label{eq186}
	{{I_{\infty}^{(h)}}(y\to{u})}=\frac{1}{4\pi}\int_{-\pi}^{\pi} \log\big(\frac{{S_{\check{u}}}(e^{j\omega})}{{\sigma}_{\psi}^{2}}\big)d\omega,
	\end{equation}
	where $\psi$ represents a Gaussian AWSS process with independent samples defined by
	\begin{equation}\label{eq187}
	\psi(k)\triangleq{u(k)-\tilde{u}(k)},
	\tilde{u}(k)\triangleq{{\mathbf{E}}[u(k)\mid{{y^{k-h}},{u^{k-1}}}]}.
	\end{equation}
	Furthermore, ${S_{\check{u}}}$ denotes the steady-state power spectral density of $u$.
\end{lemma}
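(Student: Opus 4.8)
The plan is to reduce this lemma to its delay-free analogue, namely \cite[Lemma~3.2]{silva2016characterization}, exactly in the spirit of how Lemma~\ref{lemma8} was obtained from that same result, but now specialized to a \emph{single} deterministic delay realization $h^k\equiv(h,h,\dots,h)$. First I would invoke Lemma~\ref{lemma8} (or, equivalently, re-derive from scratch using \cite[Lemma~3.2]{silva2016characterization} applied to the auxiliary loop of Fig.~\ref{fig2aux} with constant delay): since for the NCS of Fig.~\ref{fig1} we have $l_k=h$ for every $k\in{\mathbb{N}_0}$, the support set $\mathfrak{H}$ is the singleton $\{(h,\dots,h)\}$, so $Pr(h^k)=1$ and the outer sum over $\mathfrak{H}$ in \eqref{eq298} collapses to its single term. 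This immediately yields
\begin{equation*}
{I_{\infty}^{(h)}}(y\to{u})=\frac{1}{4\pi}\int_{-\pi}^{\pi}\log\Big(\frac{S_{\check{u}}(e^{j\omega})}{{\sigma}_{\psi}^{2}}\Big)d\omega,
\end{equation*}
with $\psi(k)=u(k)-\tilde{u}(k)$ and $\tilde{u}(k)={\mathbf{E}}[u(k)\mid y^{l_k},u^{k-1}]={\mathbf{E}}[u(k)\mid y^{k-h},u^{k-1}]$, which is precisely \eqref{eq186}--\eqref{eq187}.

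The substantive hypotheses of Lemma~\ref{lemma8} that must be checked are: (i) $u$ is SAWSS for the (unique) delay realization, which is assumed here; (ii) the uniform eigenvalue lower bound $|\lambda_{\min}(C_{u_1^n})|\ge\mu>0$ for all $n$, which is assumed here; and (iii) $y$ is jointly AWSS with $u$, also assumed. Under these, \cite[Lemma~3.2]{silva2016characterization} guarantees that the conditional-mean estimation error $\psi$ is a Gaussian AWSS process with independent (in steady state, asymptotically white) samples, and that the Cesàro average $\lim_{k\to\infty}\frac{1}{k}\sum_{i=0}^{k-1}I(u(i);y^{i-h}\mid u^{i-1})$ equals the stated integral; note that joint Gaussianity of $(u,y)$ is what makes the conditional expectation linear and the error $\psi$ Gaussian, so that $\frac12\log(2\pi e\,{\sigma}_{\psi}^2)$ is genuinely the conditional differential entropy rate $h_\infty(u\mid y^{\cdot-h},u^{\cdot-1})$, while the unconditional differential entropy rate $h_\infty(u)$ is given by the classical Kolmogorov formula $\frac{1}{4\pi}\int_{-\pi}^{\pi}\log(2\pi e\,S_{\check u}(e^{j\omega}))d\omega$; subtracting gives \eqref{eq186}.

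The one place needing a little care — and the main (minor) obstacle — is justifying that the side-information argument used in the proof of Lemma~\ref{lemma8} still applies, i.e. that ${\mathbf{E}}[u(k)\mid y^{k},u^{k-1}]={\mathbf{E}}[u(k)\mid y^{k-h},u^{k-1}]$ so that conditioning on the full past $y^k$ versus the delayed past $y^{k-h}$ makes no difference to the estimator (hence to $\psi$). This follows, as in \eqref{eq302}, from the plant dynamics (given $u^{k-1}$, the block $y_{k-h+1}^{k}$ depends only on $x(0)$ and $w$), from \eqref{eq7}--\eqref{eq271} (given $y^{k-h}$, the value $u(k)$ is determined by the side informations $\eta_d^{k},\eta_e^{k-h}$), and from Assumption~\ref{ass23} ($(x_0,w)$ independent of $(\eta_d,\eta_e)$); together these force $I(u(k);y_{k-h+1}^{k}\mid u^{k-1},y^{k-h})=0$. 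With that identity in hand the chain-rule step is immediate and the integral representation transfers verbatim from the delay-free setting. I would therefore write the proof as: ``Recall $l_k=h$ for all $k\in{\mathbb{N}_0}$; specializing Lemma~\ref{lemma8} to the single delay realization $h^k=(h,\dots,h)$, for which $Pr(h^k)=1$, yields \eqref{eq186}--\eqref{eq187} directly.''
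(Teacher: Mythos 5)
Your proposal is correct and follows essentially the same route as the paper, whose proof is the one-line observation that the result is immediate from Lemma~\ref{lemma8} once one notes that $l_k=h$ for all $k\in\mathbb{N}_0$ in the constant-delay NCS, so that the average over delay realizations collapses to a single term. Your additional verification of the hypotheses and of the side-information identity is a sound (if redundant) re-derivation of what Lemma~\ref{lemma8} already provides.
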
  
\begin{proof}
Immediate from Lemma~\ref{lemma8} by noting that ${l_k}=h$ holds for the NCS of Fig.~\ref{fig1} at every time instant ${k}\in{{\mathbb{N}}_0}$.  
\end{proof}
It can be implied from Theorem~\ref{th2} and Lemma~\ref{lemma2} that the rate-performance pair yielded by any coding and control 
scheme satisfying Assumption~\ref{ass23} which renders the NCS of Fig.~\ref{fig1} SAWSS is attainable with a lower or equal rate if there exists a scheme that generates $(y,u)$ jointly Gaussian with
\begin{figure}[thpb]
	\centering
	\includegraphics[width=8cm]{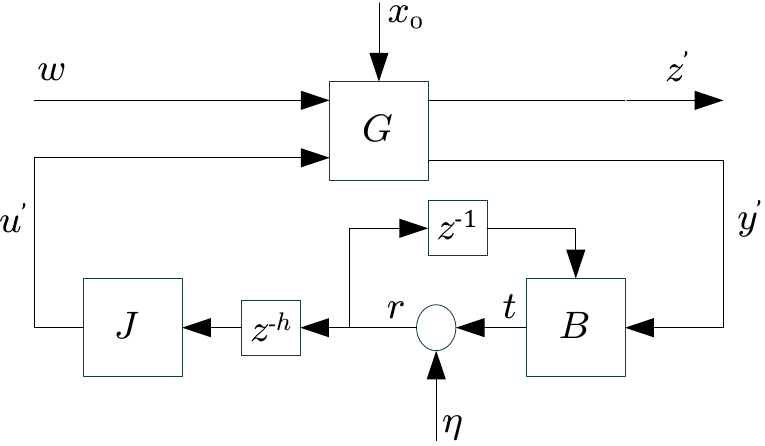}
	\caption{The LTI structure giving the lower bound in the constant delay case}
	\label{fig6}
\end{figure}
 $({x_0},w)$ while rendering the system SAWSS. Due to the Gaussianity of $(x_0,w)$ and the fact that the plant is LTI, a jointly Gaussian pair $(y,u)$ can be produced by a coding-control scheme comprised of LTI filters and an AWGN noise source. Such a scheme is depicted in Fig.~\ref{fig6}. The NCS of Fig.~\ref{fig6} satisfies all of the assumptions and conditions that hold for the system of Fig.~\ref{fig1}. However, the arbitrary mappings are replaced by proper LTI
filters $B$ and $J$ in the auxiliary feedback loop of Fig.~\ref{fig6}.

 In addition, for such an NCS, a delayed AWGN channel with noiseless one-sample-delayed feedback serves as communication channel. The coding-control scheme in the NCS of Fig.~\ref{fig6} is described via the following dynamics:
\begin{equation}\label{eq190}
u'=J{z^{-h}}r,\quad r=t+\eta,\quad t=B\mathrm{diag}\{{z^{-1}},1\} \begin{bmatrix}
r\\
y'
\end{bmatrix},
\end{equation}
where $\eta$ is a zero-mean white Gaussian noise with variance ${{\sigma}_{\eta}^{2}}$ and independent of $(x_0,w)$, and $B=[B_{r}\quad B_{y}]$. It should be emphasized that Assumption~\ref{ass21} holds for the initial states $x_0$, the plant $G$ and the disturbance input $w$ in the NCS of Fig.~\ref{fig6}. Furthermore, the initial states of the filters $B$ and $J$, and the delay blocks are deterministic. As the system depicted in Fig.~\ref{fig6} is a special case of the structure of Fig.~\ref{fig1}, we use apostrophes for presenting signals in Fig.~\ref{fig6} that have counterparts in the NCS of Fig.~\ref{fig1}.
\begin{theorem}\label{th3}
	If the NCS of Fig.~\ref{fig1} satisfies Assumption~\ref{ass21} and Assumption~\ref{ass23} and ${D}\in({D_{\inf}}(h),\infty)$ holds, then 
	\begin{equation}\label{eq191}
	{{\mathcal{R}}(D)}\geq{{{\vartheta}'}_u}(D)\triangleq\inf_{{{\sigma}_{z'}^{2}}\leq{D}}{\frac{1}{4\pi}\int_{-\pi}^{\pi} \log\big(\frac{{S_{u'}}(e^{j\omega})}{{\sigma}_{\eta}^{2}}\big)},
	\end{equation}
	where ${{\sigma}_{z'}^{2}}$ and ${S_{u'}}$ represent the steady-state variance of $z'$ and the steady-state power spectral density of $u'$ in Fig.~\ref{fig6}, respectively. Moreover, the feasible set for the optimization in (\ref{eq191}) is the set comprised of all LTI filters $B$ and the noise $\eta$ with ${{\sigma}_{{\eta}}^{2}}\in{{\mathbb{R}}_{+}}$ that render the system of Fig.~\ref{fig6} internally stable and well-posed with $J=1$.
\end{theorem}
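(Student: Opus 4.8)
The plan is to upgrade the bound $\mathcal{R}\ge I_\infty^{(h)}(y\to u)$ of Theorem~\ref{th2} into the SNR-type bound $\vartheta'_u(D)$ in three moves: (i) Gaussianise the pair $(y,u)$ so that Lemma~\ref{lemma1} applies and restricts attention to jointly Gaussian schemes; (ii) invoke Lemma~\ref{lemma2} to rewrite the resulting directed information rate as the spectral integral $\tfrac{1}{4\pi}\int_{-\pi}^{\pi}\log\big(S_{\check u}(e^{j\omega})/\sigma_\psi^2\big)\,d\omega$ with $\psi$ the innovation in (\ref{eq187}); and (iii) exhibit, for every admissible scheme of Fig.~\ref{fig1} attaining $\sigma_z^2\le D$, a concrete choice of $B=[B_r\ B_y]$ and $\sigma_\eta^2\in\mathbb{R}_+$ in the feasible set of (\ref{eq191}) whose induced output spectra make that integral equal to $\tfrac{1}{4\pi}\int_{-\pi}^{\pi}\log\big(S_{u'}(e^{j\omega})/\sigma_\eta^2\big)\,d\omega$ while keeping $\sigma_{z'}^2=\sigma_z^2\le D$. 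Chaining (i)--(iii) gives $\mathcal{R}\ge\vartheta'_u(D)$ for every admissible scheme with $\sigma_z^2\le D$, and the infimum over such schemes yields the claim; the nonemptiness of both feasible sets for $D\in(D_{\inf}(h),\infty)$, which makes the statement non-vacuous, is what Appendix~\ref{app0dinfh} establishes.

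For move (iii), fix an admissible coding--control scheme rendering the NCS of Fig.~\ref{fig1} SAWSS with $\sigma_z^2\le D$, and pass to the jointly Gaussian surrogate $(y_G,u_G)$ of Lemma~\ref{lemma1}. Since Gaussianisation preserves all first- and second-order statistics jointly with $(x_0,w)$, the quadruple $(x_0,w,u_G,y_G)$ remains SAWSS with $\sigma_{z_G}^2=\sigma_z^2\le D$ and with the steady-state power spectral density of $u_G$ still equal to $S_{\check u}$, so Lemmas~\ref{lemma1} and \ref{lemma2} give $I_\infty^{(h)}(y\to u)\ge I_\infty^{(h)}(y_G\to u_G)=\tfrac{1}{4\pi}\int_{-\pi}^{\pi}\log\big(S_{\check u}(e^{j\omega})/\sigma_\psi^2\big)\,d\omega$. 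Because $(x_0,w,u_G,y_G)$ is jointly Gaussian and SAWSS, the conditional mean $\tilde u(k)=\mathbf{E}[u_G(k)\mid y_G^{k-h},u_G^{k-1}]$ is, in steady state, an asymptotically time-invariant linear functional of $(y_G^{k-h},u_G^{k-1})$; hence $u_G=\Phi_1\,y_G+\Phi_2\,u_G+\psi$, where $\Phi_2=z^{-1}\tilde\Phi_2$ is strictly proper and $\Phi_1=z^{-h}\tilde\Phi_1$ carries the $h$-step channel delay, $\tilde\Phi_1,\tilde\Phi_2$ being proper. Writing the dynamics (\ref{eq190}) with $J=1$ as $u'=(1-B_rz^{-1})^{-1}\big(z^{-h}B_y\,y'+z^{-h}\eta\big)$ and comparing term by term with $u_G=(1-z^{-1}\tilde\Phi_2)^{-1}\big(z^{-h}\tilde\Phi_1\,y_G+\psi\big)$, the choices $B_r=\tilde\Phi_2$, $B_y=\tilde\Phi_1$ and $\eta$ a zero-mean white Gaussian process of variance $\sigma_\eta^2=\sigma_\psi^2$ independent of $(x_0,w)$ reproduce the loop equation for $u_G$ exactly, so that $S_{u'}=S_{\check u}$ and $\sigma_{z'}^2=\sigma_{z_G}^2=\sigma_z^2\le D$. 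That this triple lies in the feasible set of (\ref{eq191}) is checked as follows: properness of $B_r,B_y$ follows from the uniform bound $|\lambda_{\min}(C_{u_1^n})|\ge\mu>0$, equivalently $S_{\check u}(e^{j\omega})$ bounded away from zero, which makes the conditional-mean operators well defined and stable; internal stability holds because the closed loop is the one reproducing the SAWSS signals $(u_G,y_G,z_G)$, all of finite steady-state variance; and well-posedness holds because the only feedback path, through $B_r z^{-1}$, is strictly causal, so no algebraic loop arises. Therefore $\tfrac{1}{4\pi}\int_{-\pi}^{\pi}\log\big(S_{\check u}(e^{j\omega})/\sigma_\psi^2\big)\,d\omega\ge\vartheta'_u(D)$.

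Assembling the chain, $\mathcal{R}\ge I_\infty^{(h)}(y\to u)\ge I_\infty^{(h)}(y_G\to u_G)=\tfrac{1}{4\pi}\int_{-\pi}^{\pi}\log\big(S_{\check u}(e^{j\omega})/\sigma_\psi^2\big)\,d\omega\ge\vartheta'_u(D)$ holds for every admissible scheme with $\sigma_z^2\le D$, and taking the infimum over all such schemes gives $\mathcal{R}(D)\ge\vartheta'_u(D)$. Setting $h=0$ recovers the delay-free bound of \cite{silva2016characterization}; the only structural novelty is the forced factorisation $\Phi_1=z^{-h}\tilde\Phi_1$ of the measurement-to-input map, which is precisely what pins the $z^{-h}$ block into the auxiliary LTI model of Fig.~\ref{fig6}.

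The main obstacle I anticipate is the pair of book-keeping facts that Lemma~\ref{lemma2} tacitly requires and that the reduction must supply. First, restricting the infimum defining $\mathcal{R}(D)$ to schemes with $u$ SAWSS, jointly AWSS with $y$, and $|\lambda_{\min}(C_{u_1^n})|\ge\mu>0$ must be shown to be without loss of generality; I would do this by a vanishing-dither/limiting argument that adds an arbitrarily small independent perturbation to make the covariance uniformly positive definite while changing rate and performance by arbitrarily little. Second, one must verify that the innovation $\psi$ of (\ref{eq187}) is uncorrelated with $(x_0,w)$, so that driving the loop of Fig.~\ref{fig6} with a \emph{fresh} AWGN source $\eta$ reproduces not only $S_{\check u}$ but also $\sigma_z^2$; otherwise cross-spectral terms between $\psi$ and $(x_0,w)$ would spoil the identity $\sigma_{z'}^2=\sigma_z^2$. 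This is inherited from Assumption~\ref{ass23}: conditioned on $(y^{k-h},u^{k-1})$, the fresh randomness in $u(k)$ comes only from the side information $(\eta_e(k-h),\eta_d(k))$, which is independent of $(x_0,w)$; making this rigorous, along the lines of \cite{silva2016characterization}, is the step I would allocate the most care to.
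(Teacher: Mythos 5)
Your proposal is correct and follows essentially the same route as the paper's proof in Appendix~\ref{app0th3}: chain Theorem~\ref{th2}, Lemma~\ref{lemma1} and Lemma~\ref{lemma2}, then realize the jointly Gaussian surrogate $(\hat y_G,\hat u_G)$ by an LTI loop of the form of Fig.~\ref{fig6} with $J=1$ and $\sigma_\eta^2=\sigma_{\psi}^2$ so that $S_{u'}=S_{\check u}$ and $\sigma_{z'}^2=\sigma_{z}^2\le D$, and take the infimum. The only difference is cosmetic: you phrase the realization in transfer-function form, whereas the paper writes $\hat u_G(k)=L_k(\hat y_G^{k-h},\hat u_G^{k-1})+\hat\psi_G(k-h)$ and passes to steady-state filters via asymptotic Toeplitz equivalence; the two book-keeping caveats you flag (independence of the innovation from $(x_0,w)$, and the without-loss-of-generality restriction to schemes meeting the stationarity and minimum-eigenvalue conditions of Lemma~\ref{lemma2}) are likewise left implicit in the paper.
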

\begin{proof}
See Appendix~\ref{app0th3}.
\end{proof}
Theorem~\ref{th3} implies that doing the optimization in (\ref{eq191}) over the auxiliary LTI system of Fig.~\ref{fig6}, with the AWGN channel and delay, will give a lower bound on the minimal data rate required to achieve a certain performance level in the arbitrary (possibly nonlinear and time-varying) structure of Fig.~\ref{fig1}. The following results show how the lower bound derived in (\ref{eq191}) can be simplified to a bound which is easier to compute. 
\begin{lemma}\label{lemma3}
	For the NCS of Fig.~\ref{fig6}, let describe ${\vartheta}_{r}^{'}$ by
	\begin{equation}\label{eq197}
	{\vartheta}_{r}^{'}(B,J,{{\sigma}_{{\eta}}^{2}})\triangleq\frac{1}{4\pi}\int_{-\pi}^{\pi} \log\big(\frac{{S_{r}}(e^{j\omega})}{{\sigma}_{\eta}^{2}}\big),
	\end{equation}
	where ${{\sigma}_{{\eta}}^{2}}\in{{\mathbb{R}}^{+}}$ is fixed and ${S_{r}}$ denotes the steady-state power spectral density of $r$. Moreover, suppose that the pair $(B,J)=({B_{1}},{J_{1}})$ renders the feedback loop of Fig.~\ref{fig6} internally stable and well-posed. Then for any $\rho>0$, there exists another pair with a biproper filter, say ${J_{2}}$, and a proper one, say ${B_{2}}$, that renders the system of Fig.~\ref{fig6} internally stable and well-posed, and preserves the steady-state power spectral density of $z'$ in a way that the following holds:
	\begin{IEEEeqnarray}{rl}\label{eq198}
		\begin{split}
			{\vartheta}_{r}^{'}({B_{1}},{J_{1}},{{\sigma}_{{\eta}}^{2}})={\vartheta}_{r}^{'}({B_{2}},{J_{2}},{{\sigma}_{{\eta}}^{2}})={\frac{1}{2}}\log(1+{\frac{{\sigma}_{t}^{2}}{{\sigma}_{\eta}^{2}}}){{\mid}_{(B,J)=({B_2},{J_2})}´}-\rho.
		\end{split}	
	\end{IEEEeqnarray}   
\end{lemma}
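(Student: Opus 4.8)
The plan is to mirror the delay-free construction of \cite{silva2011framework,silva2016characterization}, carrying the channel delay $z^{-h}$ through the derivation unchanged; since $z^{-h}$ is all-pass on the unit circle it never enters a spectral identity and only has to be tracked for realizability and well-posedness. Concretely, I would reshape the channel-input signal $r$ by a biproper, minimum-phase, stable filter $\Psi$ normalized so that $\Psi(\infty)=1$, absorb its inverse $\Psi^{-1}$ into the reconstruction filter $J$ (which keeps $J$ biproper and leaves the control signal $u'$, hence the performance output $z'$, literally unchanged), and compensate the encoder-side filter $B$ so that the loop equations (\ref{eq190}) still hold with the reshaped $r$; the amount of reshaping is the free parameter that produces the gap $\rho$.

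Two preliminary observations about the given pair $(B_1,J_1)$ would come first. Because the loop of Fig.~\ref{fig6} is internally stable and every block rational, $S_r$ is a rational spectral density with $\int_{-\pi}^{\pi}\log S_r(e^{j\omega})\,d\omega>-\infty$; writing $\kappa\triangleq\exp\!\big(\tfrac{1}{2\pi}\int_{-\pi}^{\pi}\log S_r(e^{j\omega})\,d\omega\big)$ for its geometric mean, we have $\vartheta_r'(B_1,J_1,\sigma_\eta^2)=\tfrac12\log(\kappa/\sigma_\eta^2)$. And from $t=B_r z^{-1}r+B_y y'$ together with the strict properness of $G_{22}$ (and of $z^{-h}$) one gets that $t(k)$ depends only on $(r^{k-1},w^k,x_0)$, hence is uncorrelated with $\eta(k)$, so that in steady state $\sigma_r^2=\sigma_t^2+\sigma_\eta^2$ for every admissible pair. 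Next I would choose a rational, biproper, minimum-phase, stable filter $\Psi_\rho$ with $\Psi_\rho(\infty)=1$ and $\tfrac{1}{2\pi}\int_{-\pi}^{\pi}|\Psi_\rho(e^{j\omega})|^2 S_r(e^{j\omega})\,d\omega=\kappa\,e^{2\rho}$: such $\Psi_\rho$ exists because, over this class, the map $\Psi\mapsto\tfrac{1}{2\pi}\int|\Psi|^2 S_r$ is continuous, has infimum $\kappa$ (approached by the canonical spectral factor of $S_r$) and is unbounded above, so it attains $\kappa e^{2\rho}>\kappa$. I would then set $J_2\triangleq J_1\Psi_\rho^{-1}$, $B_{y,2}\triangleq B_{y,1}$, $B_{r,2}\triangleq B_{r,1}-z\big(\Psi_\rho^{-1}-1\big)\big(1-B_{r,1}z^{-1}\big)$ (so that $B_2=[\,B_{r,2}\ \ B_{y,2}\,]$), and verify by substitution in (\ref{eq190}) that the loop then realizes $r_2=\Psi_\rho r_1$, $u_2'=u_1'$, $y_2'=y_1'$ and $z_2'=z_1'$ for the same realization of $(w,x_0,\eta)$.

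What then remains is to check the three conclusions. First, $J_2$ is biproper (biproper $J_1$ times biproper $\Psi_\rho^{-1}$) and $B_2$ proper (since $\Psi_\rho^{-1}-1$ is strictly proper, $z(\Psi_\rho^{-1}-1)$ is proper), and internal stability and well-posedness are inherited from $(B_1,J_1)$ because $\Psi_\rho$ and $\Psi_\rho^{-1}$ are stable, $1-B_{r,2}z^{-1}=\Psi_\rho^{-1}(1-B_{r,1}z^{-1})$ is biproper, and $u',y',z'$ are the original stable processes. Second, $S_{z'}$ is preserved because $z_2'=z_1'$. Third, Jensen's formula gives $\tfrac{1}{4\pi}\int_{-\pi}^{\pi}\log|\Psi_\rho(e^{j\omega})|^2\,d\omega=0$, so $\vartheta_r'(B_2,J_2,\sigma_\eta^2)=\tfrac{1}{4\pi}\int\log(S_{r_2}/\sigma_\eta^2)=\tfrac{1}{4\pi}\int\log(S_{r_1}/\sigma_\eta^2)=\vartheta_r'(B_1,J_1,\sigma_\eta^2)$; and, using $\sigma_{r_2}^2=\sigma_{t_2}^2+\sigma_\eta^2$ together with $\sigma_{r_2}^2=\kappa e^{2\rho}$,
\[
\tfrac12\log\!\Big(1+\tfrac{\sigma_{t_2}^2}{\sigma_\eta^2}\Big)
=\tfrac12\log\!\Big(\tfrac{\sigma_{r_2}^2}{\sigma_\eta^2}\Big)
=\tfrac12\log\!\Big(\tfrac{\kappa}{\sigma_\eta^2}\Big)+\rho
=\vartheta_r'(B_2,J_2,\sigma_\eta^2)+\rho,
\]
which is (\ref{eq198}).

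I expect the main obstacle to be the existence claim for $\Psi_\rho$ within the rational, biproper, minimum-phase, stable class at the exact target $\kappa e^{2\rho}$ — i.e., making the continuity/range argument above rigorous, including the observation that a zero of $S_r$ on the unit circle only prevents the infimum $\kappa$ itself from being attained, which is never needed since $\kappa e^{2\rho}>\kappa$ — and, relatedly, the bookkeeping that the compensated $B_2$ keeps the delayed loop proper and well-posed. The delay $z^{-h}$ itself is harmless: being all-pass it cancels out of every spectral identity and commutes with $\Psi_\rho$ inside the loop equations, which is precisely why the delay-free argument transfers once the realizability bookkeeping is in place.
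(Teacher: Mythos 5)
Your proposal is correct, but it takes a genuinely different route from the paper. The paper's proof never constructs $(B_2,J_2)$ explicitly: it first moves the delay block into the plant, obtaining the system of Fig.~\ref{fig10} with plant $G_a$ as in (\ref{eq47}), verifies that the closed-loop transfer matrix $T_a$ coincides with $T$ in (\ref{eq27}) (so internal stability, well-posedness, ${\vartheta}_r'$, the SNR ${\sigma_t^2}/{\sigma_\eta^2}$ and $\sigma_{z'}^2$ all match via the $H_2$-norm expressions (\ref{eq28})--(\ref{eq29})), and then invokes \cite[Lemma~4.1]{silva2016characterization} on the delay-free-channel system of Fig.~\ref{fig10}. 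You instead re-derive that cited lemma directly in the delayed loop: the spectral-shaping filter $\Psi_\rho$ absorbed into $J$ and compensated in $B_r$ via $1-B_{r,2}z^{-1}=\Psi_\rho^{-1}(1-B_{r,1}z^{-1})$ gives $M_2=M_1\Psi_\rho$, so every entry of $T$ is the original one multiplied by $\Psi_\rho$ or $\Psi_\rho^{-1}$ (both stable), and your identities $u_2'=u_1'$, $S_{r_2}=|\Psi_\rho|^2S_{r_1}$, the Jensen/Szeg\H{o} normalization, and the orthogonality $\sigma_r^2=\sigma_t^2+\sigma_\eta^2$ all check out. What the paper's route buys is brevity and reuse of a verified result, at the cost of the reader having to confirm the $T_a=T$ equivalence and that $G_a$ satisfies the cited lemma's hypotheses; what yours buys is a self-contained argument that makes explicit \emph{why} the delay is harmless (it is all-pass and commutes with the LTI shaping). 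The one piece you leave at sketch level --- existence of a rational, stable, minimum-phase, biproper $\Psi_\rho$ with $\Psi_\rho(\infty)=1$ hitting the exact target $\tfrac{1}{2\pi}\int|\Psi_\rho|^2S_{r_1}=\kappa e^{2\rho}$ --- is the same spectral-factorization/continuity argument underlying the cited Lemma~4.1 and is standard, but it would need to be written out for the proof to be complete on its own terms.
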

\begin{proof}
See Appendix~\ref{app0lemma3}.
\end{proof}
Intuitively speaking, the results of Theorem~\ref{th3} and Lemma~\ref{lemma3} imply that $\mathcal{R}(D)$ can be bounded from below by a logarithmic term as in (\ref{eq198}) which is a function of channel SNR in the NCS of Fig.~\ref{fig6}. Such an intuition will assist us with deriving a lower bound which is computationally appealing in the following corollary.
\begin{corollary}\label{cor1}
	Take the feedback loop of Fig.~\ref{fig1} into account as an NCS that satisfies Assumptions~\ref{ass21} and~\ref{ass23}. Then for every ${D}\in({D_{\inf}}(h),\infty)$, the following holds:
	\begin{equation}\label{eq23}
	{{\mathcal{R}}(D)}\geq{\frac{1}{2}}\log(1+{{\varphi'}(D)}), \quad{{\varphi'}(D)}\triangleq{{\inf_{{{\sigma}_{z'}^{2}}\leq{D}}}{\frac{{\sigma}_{t}^{2}}{{\sigma}_{\eta}^{2}}}}, 
	\end{equation}
	in which ${{\sigma}_{t}^{2}}$ and ${{\sigma}_{z'}^{2}}$ symbolize the steady-state variances of $t$ and $z'$ in the auxiliary system of $Fig.~\ref{fig6}$, respectively. For the optimization problem in (\ref{eq23}), a candidate solution is an LTI filter pair $(B,J)$ together with noise variance ${{\sigma}_{{\eta}}^{2}}\in{{\mathbb{R}}^{+}}$ that cause the system in Fig.~\ref{fig6} to become internally stable and well-posed.  
\end{corollary}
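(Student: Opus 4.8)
The plan is to obtain the corollary by chaining Theorem~\ref{th3} with Lemma~\ref{lemma3}, so that the only genuine work is a careful bookkeeping of the competing feasible sets together with a limiting argument in the slack parameter $\rho$ of Lemma~\ref{lemma3}.

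First I would invoke Theorem~\ref{th3} to get $\mathcal{R}(D)\geq{\vartheta'}_{u}(D)$, and then rewrite ${\vartheta'}_{u}(D)$ into the form handled by Lemma~\ref{lemma3}. In the feasible set of (\ref{eq191}) one has $J=1$, so from (\ref{eq190}) the channel input satisfies $u'=z^{-h}r$; since $|e^{-j\omega h}|=1$ the pure delay leaves steady-state spectra unchanged, giving ${S_{u'}}(e^{j\omega})={S_{r}}(e^{j\omega})$. Hence the integrand of (\ref{eq191}) coincides with that of (\ref{eq197}), and ${\vartheta'}_{u}(D)=\inf{\vartheta}_{r}^{'}(B,1,{\sigma}_{\eta}^{2})$, the infimum being taken over all pairs $(B,{\sigma}_{\eta}^{2})$ that render the system of Fig.~\ref{fig6} internally stable and well-posed with $J=1$ and $\sigma_{z'}^2\leq D$.

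Next I would fix an arbitrary such feasible triple $(B_1,1,{\sigma}_{\eta}^{2})$ and an arbitrary $\rho>0$, and apply Lemma~\ref{lemma3} to obtain a proper filter $B_2$ and a biproper filter $J_2$ that render Fig.~\ref{fig6} internally stable and well-posed, preserve the steady-state spectrum $S_{z'}$ (hence keep $\sigma_{z'}^2\leq D$), and satisfy ${\vartheta}_{r}^{'}(B_1,1,{\sigma}_{\eta}^{2})=\frac{1}{2}\log\big(1+{\sigma}_{t}^{2}/{\sigma}_{\eta}^{2}\big)\big|_{(B_2,J_2)}-\rho$. Because $(B_2,J_2,{\sigma}_{\eta}^{2})$ is internally stable, well-posed and satisfies $\sigma_{z'}^2\leq D$, it is a candidate in the infimum defining ${\varphi'}(D)$ in (\ref{eq23}), so ${\sigma}_{t}^{2}/{\sigma}_{\eta}^{2}\big|_{(B_2,J_2)}\geq{\varphi'}(D)$; monotonicity of $\log$ then gives ${\vartheta}_{r}^{'}(B_1,1,{\sigma}_{\eta}^{2})\geq\frac{1}{2}\log(1+{\varphi'}(D))-\rho$. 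Since the right-hand side is independent of $(B_1,{\sigma}_{\eta}^{2})$, taking the infimum over all feasible triples yields ${\vartheta'}_{u}(D)\geq\frac{1}{2}\log(1+{\varphi'}(D))-\rho$, and letting $\rho\downarrow 0$ gives ${\vartheta'}_{u}(D)\geq\frac{1}{2}\log(1+{\varphi'}(D))$, which combined with Theorem~\ref{th3} proves the claim.

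The step I expect to require the most care is the admissibility bookkeeping in the last paragraph: one must check that the filter pair delivered by Lemma~\ref{lemma3} is a legitimate candidate for the optimization defining ${\varphi'}(D)$ and that it still satisfies the performance constraint $\sigma_{z'}^2\leq D$ (this is exactly the \emph{preserves $S_{z'}$} clause of Lemma~\ref{lemma3}), together with the harmless but necessary observation that $z^{-h}$ has unit magnitude on the unit circle so that ${\vartheta'}_{u}$ may be expressed through $S_r$ rather than $S_{u'}$. Everything beyond that is a direct concatenation of the two cited results and a routine $\rho\to 0$ passage.
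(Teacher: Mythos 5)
Your proposal is correct and follows essentially the same route as the paper: both chain Theorem~\ref{th3} with Lemma~\ref{lemma3}, use the fact that with $J=1$ the delay $z^{-h}$ leaves the spectrum unchanged so that ${\vartheta'}_{u}(D)$ can be expressed through ${\vartheta}_{r}^{'}$, observe that the pair $(B_2,J_2)$ produced by Lemma~\ref{lemma3} is feasible for the infimum defining ${\varphi'}(D)$ while preserving $\sigma_{z'}^2\leq D$, and pass to the limit in the slack $\rho$. The only cosmetic difference is that the paper introduces an explicit $\zeta$-near-optimal triplet before applying Lemma~\ref{lemma3} whereas you work with an arbitrary feasible triple and infimize at the end; the two orderings are equivalent.
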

\begin{proof}
See Appendix~\ref{app0cor1}.
\end{proof}
\section{Upper bound problem in the presence of constant delay}
In this section, we show that for any ${D}\in({D_{\inf}}(h),\infty)$, one can always find a scheme that guarantees attaining ${{{\sigma}_{z}^{2}}\leq{D}}$ with an average data rate which has a distance of about $1.254$ bits per sample from the theoretical lower bound. For such a scheme, we propose a design approach which utilizes the filters that together with an AWGN with feedback and delay, render the directed information rate over the channel equal to the lower bound on ${{\mathcal{R}}(D)}$.
\begin{definition}
	We call a coding scheme with input-output relationship as in (\ref{eq24})-(\ref{eq271}) in the constant channel delay case linear if and only if its dynamics can be restated as follows:
	\begin{equation}\label{eq38}
	u=J{z^{-h}}r,\quad r=t+\eta,\quad t=B\mathrm{diag}\{{z^{-1}},1\} \begin{bmatrix}
	r\\
	y
	\end{bmatrix},
	\end{equation}
where $B=[B_{r}\quad B_{y}]$ and $J$ are proper LTI filters with deterministic initial condition. Moreover, $\eta$ represents a zero-mean i.i.d random sequence independent of $(x_0,w)$. The initial state of the one-step-delay feedback channel  is assumed to be deterministic. 
\end{definition}
The realization of linear source coding schemes can be carried out by using entropy-coded dithered quantizers (ECDQs) together with LTI filters. First, implementing an ECDQ causes the following relationship between $({u_q},{y_q})$ in (\ref{eq7}) and (\ref{eq25}), and $(r,t)$ in (\ref{eq38}):
\begin{IEEEeqnarray}{rcl}\label{eq39}
	\begin{split}
		{{y_{\mathcal{E}}}(k)}&={{\mathcal{F}}_q}(t(k)+d(k))\\
		{y_q}(k)&={\mathcal{O}_k}({{y_{\mathcal{E}}}(k)},d(k))\\
		{{u_{\mathcal{D}}}(k)}&={\mathcal{O}_{k-h}^{-1}}({u_q}(k),d(k-h))\\
		{r_h}(k)&={{u_{\mathcal{D}}}(k)}-d(k-h),
	\end{split}	
\end{IEEEeqnarray}
\begin{figure}[thpb]
	\centering
	\includegraphics[width=8cm]{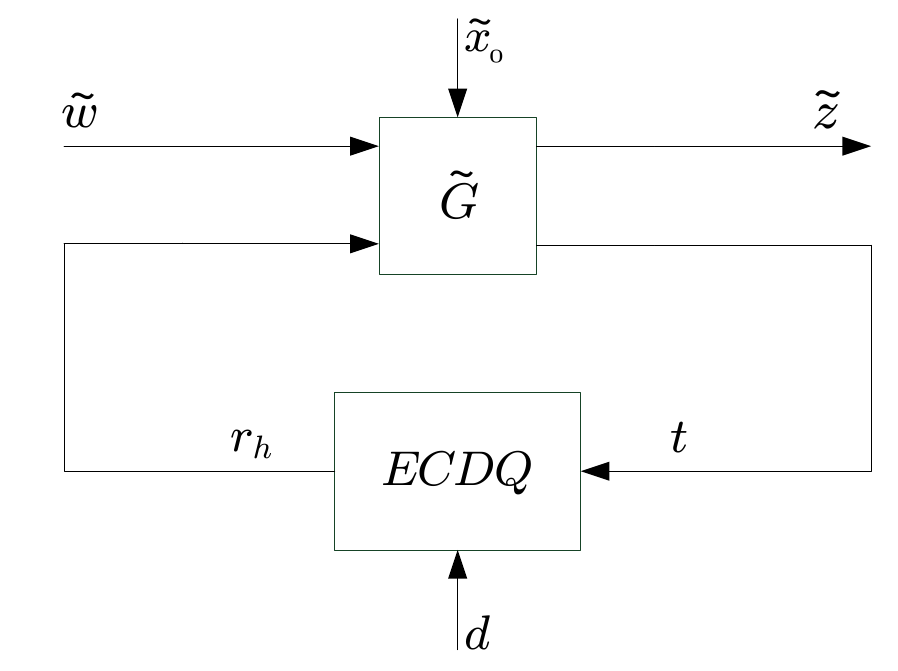}
	\caption{ECDQ setup in the feedback path}
	\label{fig8}
\end{figure}
in which by ${{\mathcal{F}}_q}$, we denote a uniform quantizer with resolution $\Delta\in\mathbb{R}^{+}$, ${{\mathcal{F}}_q}:\mathbb{R}\to\{i\Delta;i\in\mathbb{Z}\}$. Additionally, $d(k)$ represents a dither signal whose access are provided to both encoder and decoder. The mapping ${\mathcal{O}_k}$ and its complementary ${\mathcal{O}_{k}^{-1}}$ formalize entropy coding for the lossless parts at the encoder and decoder, respectively. The following lemma presents an interesting property of ECDQs when being set up in an LTI feedback loop. 
 \begin{lemma}\label{lemma4}
	Consider the feedback loop depicted in Fig.~\ref{fig8} and suppose that the plant $\tilde{G}$ is described by a proper real rational transfer-function matrix in which the transfer function from $r_h$ to $t$ is scalar and strictly proper. For such a system, assume that the input-output relationship of the ECDQ in the feedback path is given by (\ref{eq39}) with finite and positive quantization step size $\Delta$. Moreover, take the disturbance $\tilde{w}$ into account as a white noise process jointly second-order with $\tilde{x}_0$, the initial state of $\tilde{G}$. Then if the dither $d$ is an i.i.d process with a uniform distribution over $(-\Delta/2,\Delta/2)$ and independent of ($\tilde{w},\tilde{x}_0$), the error $r-t$ is i.i.d, uniformly distributed over $(-\Delta/2,\Delta/2)$ and independent of ($\tilde{w},\tilde{x}_0$). 
\end{lemma}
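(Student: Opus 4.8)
The plan is to argue by induction on the time index $k$ that the ECDQ error samples $e(i)\triangleq r(i)-t(i)$, which by (\ref{eq38})--(\ref{eq39}) satisfy $e(i)=\mathcal{F}_q\big(t(i)+d(i)\big)-d(i)-t(i)=\mathcal{F}_q\big(t(i)+d(i)\big)-\big(t(i)+d(i)\big)$, form an i.i.d.\ sequence, uniform on $(-\Delta/2,\Delta/2)$, and jointly independent of $(\tilde{w},\tilde{x}_0)$. The engine is the elementary subtractive-dither identity: if $U$ is uniform on $(-\Delta/2,\Delta/2)$ and independent of a random variable $V$, then $\mathcal{F}_q(V+U)-(V+U)$ is uniform on $(-\Delta/2,\Delta/2)$ and independent of $V$. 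I would prove this in one line by conditioning on $V=v$: the map $x\mapsto\mathcal{F}_q(x)-x$ is $\Delta$-periodic, so it pushes the uniform law on any length-$\Delta$ interval forward to the uniform law on $(-\Delta/2,\Delta/2)$; hence the conditional law of $\mathcal{F}_q(v+U)-(v+U)$ is uniform and does not depend on $v$. Taking $V=t(i)$, $U=d(i)$ matches the expression for $e(i)$ exactly.

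The structural lemma that makes the induction run is a causality observation. Because the transfer function from $r_h$ to $t$ is strictly proper and $\tilde{G}$ is proper, $t(k)$ is a deterministic function of $(\tilde{x}_0,\tilde{w}^k,r_h^{k-1})$; since $r_h$ is the decoder-side copy of $r$ fed back into $\tilde{G}$ (with $r_h(k)=r(k-h)$ in the constant-delay case) and $r(j)=t(j)+e(j)$, substituting recursively shows that $t(k)$ is a deterministic function of $(\tilde{x}_0,\tilde{w}^k,e^{k-1})$ and, unrolling once more, that each $e(i)$ is a deterministic function of $(\tilde{x}_0,\tilde{w}^i,d^i)$. Hence $e^{k-1}$ is a function of $(\tilde{x}_0,\tilde{w},d^{k-1})$ alone; and since $d$ is i.i.d.\ uniform and independent of $(\tilde{w},\tilde{x}_0)$, the new dither sample $d(k)$ is independent of $(\tilde{x}_0,\tilde{w},d^{k-1})$, hence of $(\tilde{x}_0,\tilde{w},e^{k-1})$, while still being uniform on $(-\Delta/2,\Delta/2)$.

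The inductive step then goes as follows. Assume $(e(0),\dots,e(k-1))$ are i.i.d.\ uniform on $(-\Delta/2,\Delta/2)$ and independent of $(\tilde{x}_0,\tilde{w})$. Condition on $(\tilde{x}_0,\tilde{w},e^{k-1})=(\xi_0,\omega,\epsilon)$; then $t(k)$ is a constant $\tau$, $d(k)$ is still uniform on $(-\Delta/2,\Delta/2)$ by the previous paragraph, and $e(k)=\mathcal{F}_q(\tau+d(k))-(\tau+d(k))$ is, by the dither identity, uniform on $(-\Delta/2,\Delta/2)$ regardless of $(\xi_0,\omega,\epsilon)$. Thus $e(k)$ is uniform and independent of $(\tilde{x}_0,\tilde{w},e^{k-1})$, which together with the induction hypothesis gives that $(e(0),\dots,e(k))$ are i.i.d.\ uniform on $(-\Delta/2,\Delta/2)$ and independent of $(\tilde{x}_0,\tilde{w})$. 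The base case $k=0$ is the same computation with $e^{-1}$ empty and $t(0)$ a function of $(\tilde{x}_0,\tilde{w}^0)$ only; letting $k\to\infty$ yields the claim for the whole process $r-t$.

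The hard part is not an estimate but the bookkeeping around the loop: one must verify carefully that strict properness of $r_h\mapsto t$ (together with properness of $\tilde{G}$) genuinely removes any instantaneous dependence of $t(k)$ on $e(k)$, so that $t(k)$ can be written using $e^{k-1}$ only, and --- dually --- that $e^{k-1}$ touches the dither only through $d^{k-1}$, which is what keeps $d(k)$ ``fresh'' at every step. I also expect it worth stressing that the dither identity by itself only yields independence of $e(k)$ from $t(k)$; promoting this to independence from the entire past $(\tilde{x}_0,\tilde{w},e^{k-1})$ is precisely the role of the conditioning step above, and that promotion --- rather than the scalar dithered-quantization fact --- is the actual content of the lemma.
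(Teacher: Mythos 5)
Your proof is correct, but it takes a genuinely different route from the paper's. The paper's argument is a two-line reduction: it defines $\tilde{G}_h$ as the transfer-function matrix from $[\tilde{w}\;\; r]^T$ to $[\tilde{z}\;\; t]^T$ (i.e., it absorbs the delay $z^{-h}$ relating $r$ to $r_h$ into the plant), notes that this composition is still proper, real rational, and strictly proper from $r$ to $t$, and then invokes the delay-free ECDQ-in-feedback result \cite[Lemma~5.1]{silva2016characterization} verbatim. You instead re-prove that underlying result from first principles: the $\Delta$-periodicity argument for the subtractive-dither identity, the causality bookkeeping showing that $t(k)$ is measurable with respect to $(\tilde{x}_0,\tilde{w}^k,e^{k-1})$ while $e^{k-1}$ is measurable with respect to $(\tilde{x}_0,\tilde{w}^{k-1},d^{k-1})$ so that $d(k)$ stays ``fresh'', and the conditional-independence induction are precisely the ingredients of the cited lemma, and your observation that the delay only strengthens causality (so the induction is unaffected) is exactly the point the paper's reduction exploits. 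What the paper's approach buys is brevity and a modular dependence on prior work; what yours buys is self-containedness and an explicit identification of where strict properness is actually needed --- namely, to promote the per-sample dither identity to joint independence from the entire past, which you rightly flag as the real content. The only cosmetic caveat is the usual tie-breaking convention for $x\mapsto\mathcal{F}_q(x)-x$ at cell boundaries, a measure-zero issue that does not affect the claim.
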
 
\begin{proof}
	See Appendix~\ref{app0lemma4}.
\end{proof}
It can be implied from above that combining the LTI filters in (\ref{eq38}) with the ECDQ in (\ref{eq39}) in a setting as depicted in Fig.~\ref{fig9} will lead to a linear coding scheme for the NCS of Fig.~\ref{fig1} as long as $d(k)$ 
\begin{figure*}[thpb]
	\centering
	\includegraphics[width=15cm]{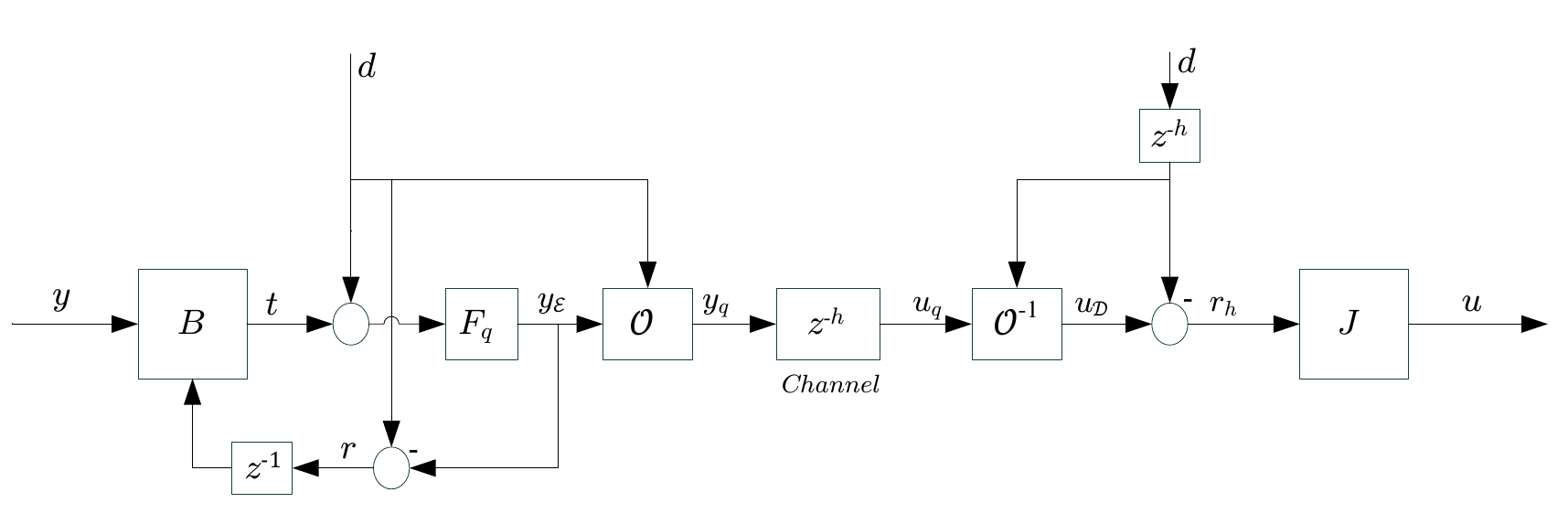}
	\caption{The proposed ECDQ-based linear coding scheme}
	\label{fig9}
\end{figure*}
meets the same criteria as for the dither in Lemma~\ref{lemma4}. If so, the obtained coding scheme is called a linear ECDQ-based coding scheme. If such a scheme is implemented on the feedback path of the main NCS of Fig.~\ref{fig1}, the average data rate is bounded from above by a certain value which is shown in the following lemma.
\begin{lemma}\label{lemma5}
	Suppose that Assumption~\ref{ass21} holds for the NCS of Fig.~\ref{fig1}. Then the existence of an ECDQ-based linear source-coding scheme rendering the NCS of Fig.~\ref{fig1} SAWSS is certified in such a way that the average data rate satisfies
	\begin{equation}\label{eq41}
	\mathcal{R}<{\frac{1}{2}}\log\left(1+{\frac{{\sigma}_{t}^{2}}{{\sigma}_{\eta}^{2}}}\right)+{\frac{1}{2}}\log\left(\frac{2{\pi}e}{12}\right)+\log{2}.
	\end{equation}
	In (\ref{eq41}), the variance of the quantization error (noise) of the ECDQ-based linear source-coding scheme is set as ${{\sigma}_{\eta}^{2}}={{\Delta}^{2}}/12$. Moreover, ${{\sigma}_{t}^{2}}$ represents the steady-state variance of the signal $t$ in (\ref{eq38}). 
\end{lemma}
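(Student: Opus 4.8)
The plan is to exhibit the scheme and then bound its rate with the classical entropy‑coded dithered quantization (ECDQ) argument. First I would fix any pair of proper LTI filters $(B,J)$ and any noise variance $\sigma_\eta^2>0$ small enough that the loop of Fig.~\ref{fig6} is internally stable and well‑posed; such a choice exists under Assumption~\ref{ass21}, since ``no unstable hidden modes'' together with $G_{22}$ strictly proper makes the plant stabilizable and detectable through output feedback, and the channel SNR can be made as large as needed by shrinking $\Delta$ (equivalently $\sigma_\eta^2=\Delta^2/12$). I would then realize this pair by the ECDQ interconnection of Fig.~\ref{fig9}, with (\ref{eq38})--(\ref{eq39}) describing the dynamics. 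Applying Lemma~\ref{lemma4} to the augmented plant $\tilde G$ obtained by absorbing $G$, $B$, $J$ and the one‑step‑delayed feedback channel --- whose $r_h\to t$ transfer function is scalar and strictly proper because of the $z^{-1}$ preceding the $r$‑input of $B$ in (\ref{eq38}) --- the quantization error $\eta=r-t$ is i.i.d., uniform on $(-\Delta/2,\Delta/2)$, and independent of $(x_0,w)$, so $\sigma_\eta^2=\Delta^2/12$. Consequently the closed loop of Fig.~\ref{fig9} has exactly the second‑order description of the Gaussian loop of Fig.~\ref{fig6}; internal stability forces every internal signal to be AWSS, and since the transient covariances of an exponentially stable LTI system converge geometrically to their stationary values, the covariance and cross‑covariance sequences are asymptotically equivalent to those of the limiting WSS processes, i.e.\ the NCS of Fig.~\ref{fig1} is SAWSS and $\sigma_t^2$ is well defined.

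Next I would bound the expected codeword length $R(i)$. Because the dither $d(i)$ is prearranged at both ends and, by (\ref{eq39}), $y_q(i)=\mathcal{O}_i(y_{\mathcal{E}}(i),d(i))$ is built from $y_{\mathcal{E}}(i)$ alone while $u_{\mathcal{D}}(k)=\mathcal{O}^{-1}_{k-h}(u_q(k),d(k-h))$ reconstructs it causally $h$ steps later, one may entropy‑code $y_{\mathcal{E}}(i)$ with a prefix‑free code matched to its conditional law given $d(i)$, giving $R(i)\le H\!\left(y_{\mathcal{E}}(i)\mid d(i)\right)+\log 2$, the $\log 2$ being the one‑bit prefix‑free redundancy expressed in nats. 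The Zamir--Feder identity for subtractively dithered uniform quantizers gives $H\!\left(y_{\mathcal{E}}(i)\mid d(i)\right)=h\!\left(t(i)+\eta(i)\right)-h(\eta(i))$, and since $t(i)$ is a function of $(x_0,w^i,\eta^{i-1})$ only while $\eta$ is i.i.d.\ and independent of $(x_0,w)$, $\eta(i)$ is independent of $t(i)$, so $\mathrm{Var}(t(i)+\eta(i))=\sigma_t^2(i)+\sigma_\eta^2$. Bounding $h(t(i)+\eta(i))$ by the Gaussian entropy of the same variance and using $h(\eta(i))=\log\Delta=\tfrac12\log(12\sigma_\eta^2)$ yields $R(i)<\tfrac12\log\!\big(1+\sigma_t^2(i)/\sigma_\eta^2\big)+\tfrac12\log(2\pi e/12)+\log 2$.

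Finally, averaging over $i$ and letting $k\to\infty$ in (\ref{eq4}): by SAWSS‑ness $\sigma_t^2(i)\to\sigma_t^2$, so the Ces\`aro mean of the right‑hand side converges to $\tfrac12\log(1+\sigma_t^2/\sigma_\eta^2)+\tfrac12\log(2\pi e/12)+\log 2$, which is exactly (\ref{eq41}); the inequality remains strict because $t(i)+\eta(i)$ is never Gaussian (its summand $\eta(i)$ is uniform), so the Gaussian entropy bound is never attained. The hard part will be the bookkeeping needed to verify that Lemma~\ref{lemma4} applies verbatim to the interconnection of $G$, $B$, $J$ and the delayed feedback channel of Fig.~\ref{fig9} --- in particular that the resulting $\tilde G$ is proper with a scalar, strictly proper $r_h\to t$ map and that the channel delay $h$ does not break the causal decodability of the conditionally coded words --- together with a clean argument that internal stability of the LTI realization upgrades AWSS‑ness to SAWSS‑ness.
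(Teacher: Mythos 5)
Your proposal is correct and takes essentially the same route as the paper: the paper's proof simply invokes Corollary~5.3 of \cite{silva2011framework} together with Lemma~\ref{lemma4}, and your chain --- the prefix-free redundancy $\log 2$, the dithered-quantizer identity $H(y_{\mathcal{E}}(i)\mid d(i))=h(t(i)+\eta(i))-h(\eta(i))$, the Gaussian maximum-entropy bound, and the Ces\`aro limit --- is precisely the content of that cited corollary, unpacked. The only step the paper makes explicit that you leave implicit is that internal stability and well-posedness persist when the filters $B$, $J$ are kept and $\eta$ is set to zero (i.e.\ under unity feedback $t=r$), which is the hypothesis under which the cited corollary applies.
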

\begin{proof}
	See Appendix~\ref{app0lemma5}.
\end{proof}
Now, through the following theorem, we use the result of Lemma~\ref{lemma5} to show that utilizing ECDQ-based linear coding schemes can lead to an upper bound on the desired minimal average data rate $\mathcal{R}(D)$. 
\begin{theorem}\label{th4}
	Let Assumption~\ref{ass21} hold for the closed-loop system of Fig.~\ref{fig1}. Then for each $D\in{(D_{\inf}(h),\infty)}$, one can always find an ECDQ-based linear source-coding scheme satisfying Assumption~\ref{ass23} and rendering the feedback loop of Fig.~\ref{fig1} SAWSS in such a way that ${{{\sigma}_{z}^{2}}\leq{D}}$ is resulted and the average data rate is bounded as  
	\begin{equation}\label{eq42}
	\mathcal{R}<{\frac{1}{2}}\log\left(1+{{\varphi'}(D)}\right)+{\frac{1}{2}}\log\left(\frac{2{\pi}e}{12}\right)+\log{2},
	\end{equation}
where the definition of ${{\varphi'}(D)}$ is given in (\ref{eq23}).	 
\end{theorem}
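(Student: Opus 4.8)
The plan is to combine the lower-bound characterization from Corollary~\ref{cor1} with the constructive upper bound of Lemma~\ref{lemma5}. First I would invoke Corollary~\ref{cor1}: for the given $D\in(D_{\inf}(h),\infty)$, the quantity ${\varphi'}(D)=\inf_{{\sigma}_{z'}^{2}\leq D}({\sigma}_{t}^{2}/{\sigma}_{\eta}^{2})$ is well-defined and is approached by a sequence of LTI filter pairs $(B,J)$ together with noise variances ${\sigma}_{\eta}^{2}\in{\mathbb{R}}^{+}$ that render the auxiliary system of Fig.~\ref{fig6} internally stable and well-posed while satisfying the performance constraint ${\sigma}_{z'}^{2}\leq D$. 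In particular, for any $\epsilon>0$ there exists such a configuration with ${\sigma}_{t}^{2}/{\sigma}_{\eta}^{2}<{\varphi'}(D)+\epsilon$ and ${\sigma}_{z'}^{2}\leq D$.

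Next I would transfer this configuration to an ECDQ-based linear source-coding scheme. The key observation, already developed in the excerpt around (\ref{eq38})--(\ref{eq39}) and in Lemma~\ref{lemma4}, is that replacing the AWGN source $\eta$ in Fig.~\ref{fig6} by the dithered quantizer of (\ref{eq39}) with step size $\Delta$ chosen so that ${\sigma}_{\eta}^{2}=\Delta^{2}/12$ produces a genuine coding scheme whose quantization error $r-t$ is i.i.d., uniform on $(-\Delta/2,\Delta/2)$, and independent of $(x_0,w)$; hence the closed-loop second-order statistics — and in particular $\sigma_z^2$ — coincide with those of the linear model of Fig.~\ref{fig6}. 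Therefore the resulting scheme satisfies Assumption~\ref{ass23}, renders the NCS of Fig.~\ref{fig1} SAWSS, and achieves ${\sigma}_{z}^{2}={\sigma}_{z'}^{2}\leq D$. Applying Lemma~\ref{lemma5} to this scheme gives
\begin{equation*}
\mathcal{R}<{\tfrac{1}{2}}\log\left(1+{\tfrac{{\sigma}_{t}^{2}}{{\sigma}_{\eta}^{2}}}\right)+{\tfrac{1}{2}}\log\left(\tfrac{2{\pi}e}{12}\right)+\log{2}
<{\tfrac{1}{2}}\log\left(1+{\varphi'}(D)+\epsilon\right)+{\tfrac{1}{2}}\log\left(\tfrac{2{\pi}e}{12}\right)+\log{2}.
\end{equation*}
Since the bound in (\ref{eq42}) is strict and $\epsilon>0$ is arbitrary, taking $\epsilon$ small enough (and noting $\log$ is continuous and increasing) yields a single scheme satisfying (\ref{eq42}); if one prefers, one can argue that ${\varphi'}(D)$ is actually attained, or simply observe that the strictness of the inequality in Lemma~\ref{lemma5} leaves room to absorb a sufficiently small $\epsilon$.

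The main obstacle I anticipate is the careful handling of the passage from the infimum in ${\varphi'}(D)$ to an actual attaining (or near-attaining) scheme that simultaneously (i) meets ${\sigma}_{z}^{2}\leq D$ with the \emph{closed} inequality, (ii) keeps the loop internally stable and well-posed after the AWGN channel is replaced by the ECDQ, and (iii) preserves SAWSS-ness of all internal signals. Item (ii) is where the constant delay $h$ matters: one must check that the delay block $z^{-h}$ in (\ref{eq38}) together with the one-step-delayed feedback channel does not destroy well-posedness when the nominal configuration from Fig.~\ref{fig6} is instantiated — this is precisely the content guaranteed by Lemma~\ref{lemma4} and the well-posedness bookkeeping in the definition of linear coding schemes, so the argument reduces to quoting those results rather than re-deriving them. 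The rest is a routine chaining of the already-established inequalities.
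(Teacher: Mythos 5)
Your proposal follows essentially the same route as the paper's proof, which itself condenses the argument to: feasibility of $\varphi'(D)$, Lemmas \ref{lemma3}--\ref{lemma5}, invertibility of the decoder, and then ``the same steps as in'' the cited reference. Two remarks. First, your fallback for handling the infimum --- that the strictness of the inequality in Lemma~\ref{lemma5} ``leaves room to absorb a sufficiently small $\epsilon$'' --- does not work as stated: Lemma~\ref{lemma5} gives a strict inequality for each fixed scheme but no uniform gap, and since every feasible scheme satisfies ${\sigma_t^2}/{\sigma_\eta^2}\geq{\varphi'}(D)$ by definition of the infimum, the resulting bound $\mathcal{R}<\frac{1}{2}\log(1+{\varphi'}(D)+\epsilon)+c$ cannot be pushed down to $\frac{1}{2}\log(1+{\varphi'}(D))+c$ by shrinking $\epsilon$ alone. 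The step that actually closes this is your first alternative: the SNR--performance infimum ${\varphi'}(D)$ is attained for $D>D_{\inf}(h)$ (this is what the referenced Theorem~5.1 machinery supplies and what the paper implicitly invokes), so the ECDQ should be instantiated on an optimal, not merely near-optimal, filter pair. Second, the paper explicitly lists invertibility of the decoder among the ingredients, since Assumption~\ref{ass23} requires it; you assert that the constructed scheme satisfies Assumption~\ref{ass23} but should note that the ECDQ's lossless decoder is invertible by construction given the dither and the delay realization. Otherwise your chaining of Corollary~\ref{cor1}, Lemma~\ref{lemma4} (i.i.d.\ uniform quantization error independent of $(x_0,w)$, hence identical second-order closed-loop statistics and $\sigma_z^2=\sigma_{z'}^2\leq D$), and Lemma~\ref{lemma5} is exactly the paper's intended argument.
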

\begin{proof}
See Appendix~\ref{app0th4}.
\end{proof}
In the following remark, we state how the upper bound derived in Theorem~\ref{th4} can be considered as an upper bound on ${\mathcal{R}_a}(D)$ in the case of random channel delay.
\begin{remark}
The upper bound in (\ref{eq42}) will be an upper bound on ${\mathcal{R}_a}(D)$ in the random channel delay case if coding and control schemes are linear ECDQ-based schemes designed as in the proof of Theorem~\ref{th4} for the delay ${h_{\max}}$ where the decoder-controllers have buffers installed at their inputs sending only ${y_q}(k-h_{\max})$ for prcessing at each time instant $k\in{\mathbb{N}_0}$. 
	\begin{figure}[h]
	\centering
	\includegraphics[width=8cm]{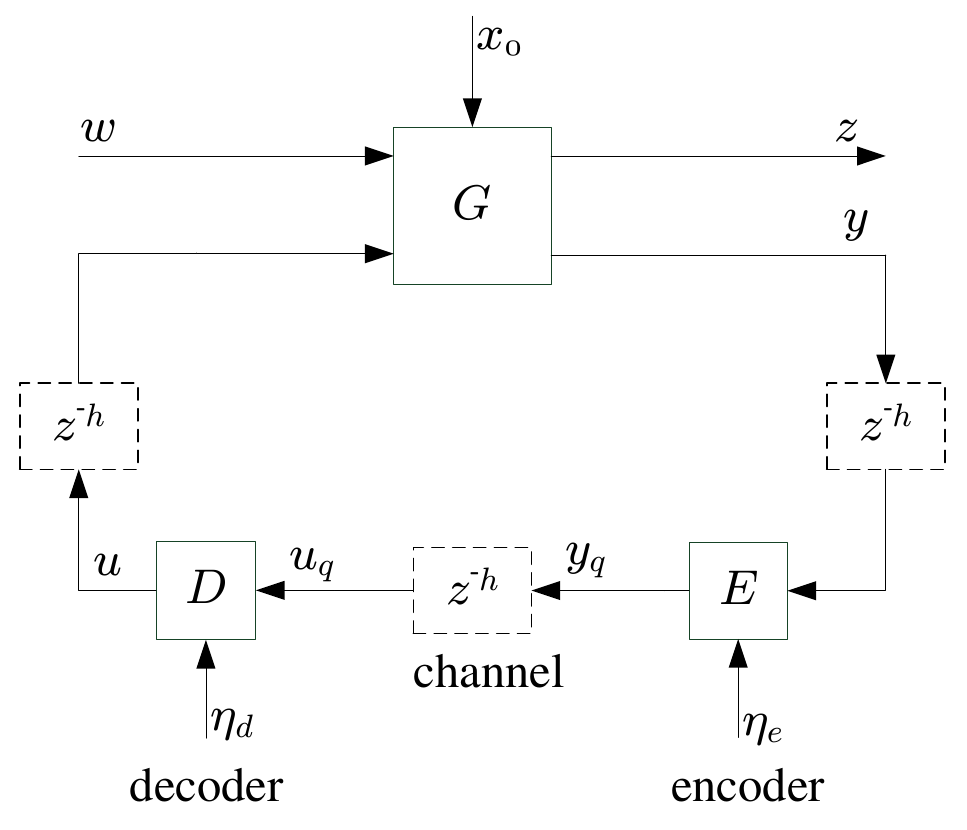}
	\caption{Three possible locations for the delay component in the case with constant channel delay}
	\label{fig14}
\end{figure}
Clearly, this is due to the fact that at every time step $k\in{\mathbb{N}_0}$, ${y_q}(k-h_{\max})$ is available at the decoder. Such an upper bound does not seem to be tight since imposing a delay of $h_{\max}$ steps on transmitted data is actually a worst-case scenario.     
\end{remark}
The bounds derived in this section and the previous section limit the desired average data rate $\mathcal{R}(D)$ in the NCS of Fig.~\ref{fig1}. In this system, the constant delay is induced by the digital communication channel between the encoder-controller and the decoder-controller. One concern is the effect of delay location on the derived bounds. The following lemma takes a step in addressing this issue by showing how the system signals change when the time delay block is moved to a different location in the feedback loop of Fig.~\ref{fig1}.  
\begin{lemma}\label{lemma01}
	Consider the NCS of Fig.~\ref{fig1} and two other systems each of which yielded by moving the delay component in the NCS of Fig.~\ref{fig1} to either the measurement path (between the sensor and the encoder-controller) or the actuation path (between the decoder-controller and the plant). Fig.~\ref{fig14} depicts the locations where the time delay occurs in these cases. 
	Then systems are not necessarily equivalent across the cases if the only difference between them is the delay location. 
	However, the equivalence can be assured by allowing the side information to change across the cases.   
\end{lemma}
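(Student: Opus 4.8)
The plan is to read off, for each of the three configurations in Fig.~\ref{fig14}, the information pattern of the controller, i.e., the largest time index up to which the control $u(k)$ may depend on the sensor output $y$, on the encoder side information $\eta_e$, and on the decoder side information $\eta_d$. Repeating the bookkeeping of Remark~\ref{rem2} with the delay block relocated, I expect to obtain the following. With the delay inside the channel (the original NCS of Fig.~\ref{fig1}), $u(k)$ is a deterministic function of $(y^{k-h},\eta_e^{k-h},\eta_d^{k})$. With the delay on the measurement path, the encoder sees only $y^{k-h}$ at instant $k$ whereas its own side information is undelayed, so $u(k)$ becomes a function of $(y^{k-h},\eta_e^{k},\eta_d^{k})$. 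With the delay on the actuation path, the control computed by the decoder at instant $k-h$ is the one that reaches the plant at instant $k$, so $u(k)$ is a function of $(y^{k-h},\eta_e^{k-h},\eta_d^{k-h})$. In every case the freshest measurement influencing $u(k)$ is $y(k-h)$, so the dependence on the plant signals is the same; what changes is only the time window of side information available at each node.

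For the negative part I would observe that these windows genuinely differ --- wider at the encoder in the measurement-path case, narrower at the decoder in the actuation-path case --- so a scheme for the original NCS whose control law truly uses $\eta_d(k)$ at instant $k$ cannot be reproduced by any scheme for the actuation-path NCS that keeps $\eta_d$ unchanged, because there $\eta_d(k)$ is not available until instant $k+h$. (There is also a trivial type mismatch: the lossy map $\mathcal{E}_k$ takes $k+1$ sensor arguments in Fig.~\ref{fig1} but only $k-h+1$ once the delay precedes it, so ``the same mappings'' is not even well defined across the cases.) Hence relocating the delay block alone does not preserve equivalence, which is the first assertion of the lemma.

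For the positive part I would construct, from any admissible scheme for one configuration, an admissible scheme for another with the identical joint law of $(x_0,w,y,u,z)$, at the price of re-indexing the side information in time. To emulate the original NCS on the measurement-path NCS, let the encoder at instant $k$ apply the original lossy and lossless encoder maps with time index shifted back by $h$ to the inputs $y^{k-h}$ and $\eta_e^{k-h}$ (a sub-window of its undelayed side information), so that the binary word which the original encoder emitted at $k-h$ and which arrived at $k$ is now emitted and delivered at $k$; the decoder is left unchanged. To emulate the original NCS on the actuation-path NCS, the decoder at instant $k$ must output what the original decoder output at instant $k+h$, which requires $\eta_d^{k+h}$; I therefore replace the decoder side information by the time-advanced process $\tilde\eta_d(k)\triangleq\eta_d(k+h)$ --- exactly the step that ``changes the side information'' --- and then use the original decoder map with its index advanced by $h$. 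In each construction the composed map from $(x_0,w)$ to $(y,u,z)$ coincides with that of the original NCS, so $\sigma_z^2$ is preserved; and since over any horizon the transmitted binary words are merely relabelled in time, the average data rate $\mathcal{R}$ of (\ref{eq4}) is preserved as well. Finally, Assumption~\ref{ass23} is inherited, because a deterministic time shift of an $(x_0,w)$-independent process is still $(x_0,w)$-independent and the decoder-invertibility clause transfers verbatim under the relabelling.

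The step I expect to be the main obstacle is making ``equivalence'' precise enough that a single re-indexing argument simultaneously accounts for the input--output law, the performance $\sigma_z^2$ and the rate $\mathcal{R}$ across all pairwise comparisons, together with the harmless but fiddly handling of the initial transient $k<h$, where the shifted or sub-windowed maps are fed truncated inputs; I would dispatch this by noting the transient has finite length and is therefore invisible to the time-average limits defining $\sigma_z^2$ and $\mathcal{R}$.
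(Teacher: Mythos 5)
Your proposal is correct and follows essentially the same route as the paper: both derive, case by case, the time windows of $(y,\eta_e,\eta_d)$ on which $u(k)$ depends for each delay placement (and your three information patterns match the paper's $N_k$, $M_k$, $S_k$ exactly), conclude non-equivalence from the mismatch of the side-information windows, and restore equivalence by re-indexing the side information. The only difference is that you make explicit the time-shift construction and the preservation of $\sigma_z^2$, $\mathcal{R}$ and Assumption~\ref{ass23}, which the paper's proof leaves as ``straightforward to see.''
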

\begin{proof}
	See Appendix~\ref{app0lemma01}.
\end{proof}
\section{numerical Simulation}
 Take the following transfer function into account as the model describing the generalized plant $G$ in the NCS of Fig.~\ref{fig1}:
\begin {equation}\label{eq30}
 z=	\frac{0.165}{(z-2)(z-0.5789)}(w+u),\quad y=z,
\end{equation}
Let us set the disturbance signal $w$ and initial states $x_0$ in such a way that Assumption~\ref{ass21} is satisfied. We calculated lower and upper bounds on $\mathcal{R}(D)$ as derived in (\ref{eq23}) and (\ref{eq42}). For computing these bounds, we made use of the equivalence between the NCSs of Fig.~\ref{fig6} and Fig.~\ref{fig10}, shown in the proof of Lemma~\ref{lemma3}, in that we adopted the method in \cite{silva2016characterization} which solves  SNR-performance optimization problems similar to the one defining ${{\varphi'}(D)}$ for such systems as the NCS of Fig.~\ref{fig10}. The bounds are computed for three different values of channel delay, $h=\{0,1,2\}$, with respect to $D$ varying over a range from ${D_{\inf}}(h)$ to $50$ for each $h$. Moreover, we designed actual linear ECDQ-based coding schemes, and for each selected $D$ in the latter interval, we simulated the NCS of Fig.~\ref{fig1}. To do so, we utilized the filters giving the lower bound on $\mathcal{R}(D)$ according to the procedure suggested in \cite[Theorem~5.1]{silva2016characterization}. The results are demonstrated in Fig.~\ref{fig13}. In this figure,
		\begin{figure}[thpb]
	\centering
	\includegraphics[width=11cm]{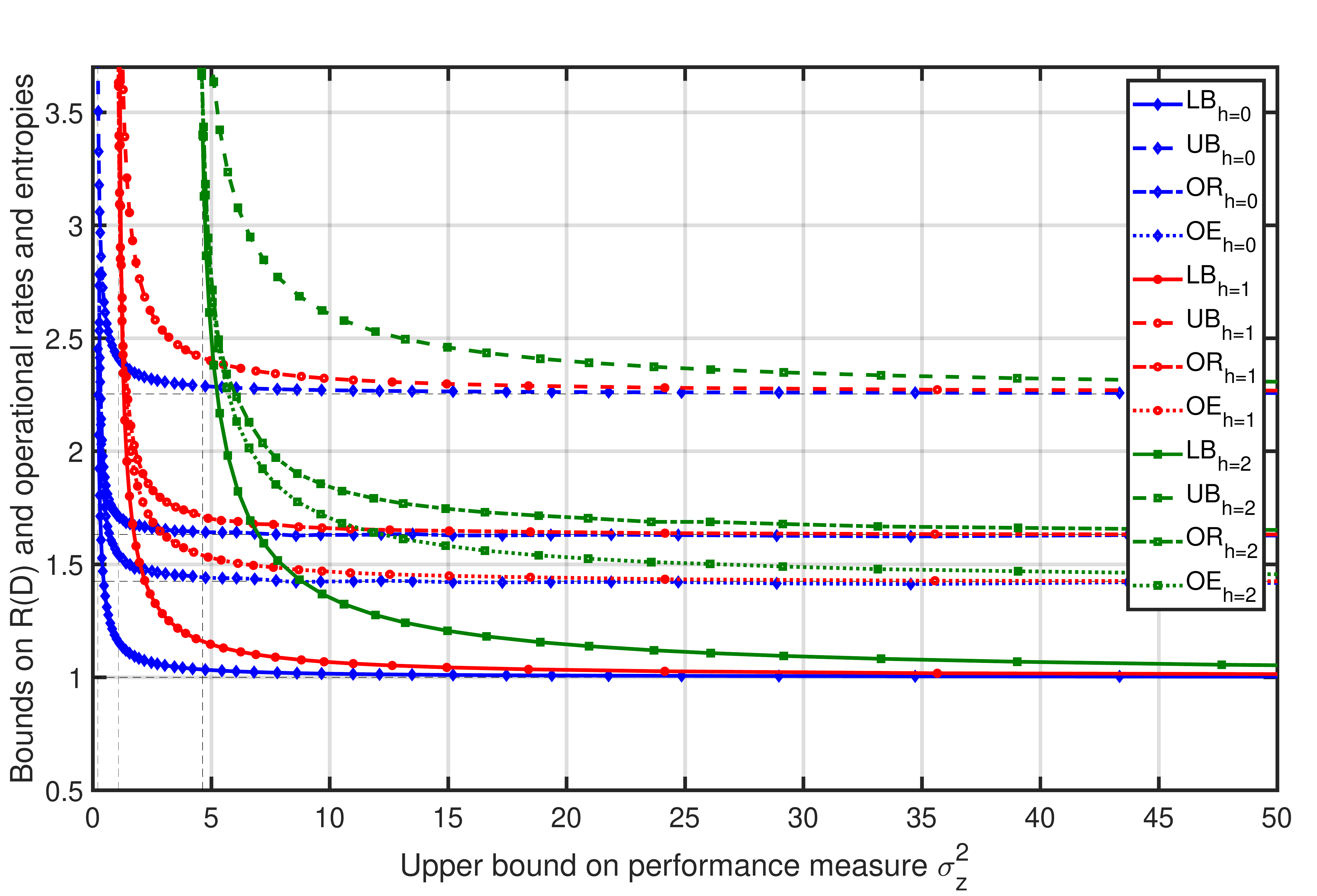}
	\caption{Bounds on $\mathcal{R}(D)$ in (\ref{eq23}) and actual data rates and entropies for different values of time delay $h$}
	\label{fig13}
\end{figure}
 the curves referred to as LB and UB present the lower and upper bounds on $\mathcal{R}(D)$, respectively. We can compare ${D_{\inf}}(h)$ among cases with different values of channel delay as well. As shown, greater ${D_{\inf}}(h)$ is associated with larger channel delay, as expected according to \cite{hashikura2014h}. Evaluating how the bounds change in response to changes in the delay is one of the main purposes of this simulation study. We can observe from the bounds plotted in Fig.~\ref{fig13} that when $D$ is fixed, increasing the delay will enlarge the bounds on $\mathcal{R}(D)$. In other words, the greater the delay is, the higher average data rate is to be used in order to achieve a fixed quadratic performance level. Moreover, Fig.~\ref{fig13} shows that the lower (upper) bound curves converge to the minimum data rate required for mean square stability as $D$ grows larger. From\cite{johannesson2011control}, we know that the minimal data rate guaranteeing stabilizability of the NCS of Fig.~\ref{fig6} is only a function of unstable poles of the plant $G$. On the other hand, we use the equivalent system of Fig.~\ref{fig10} for the purpose of calculating bounds. So the observation with convergence of bounds to the minimal data rate needed for stability comes from the fact that incorporating time delay into the model of the plant $G_a$ will not affect its unstable poles.

Simulation results are illustrated in Fig.~\ref{fig13} as well. The curves referred to as OR and OE present the average data rates and entropies achieved by using actual linear coding schemes. Furthermore, $10^6$-sample-long realizations have been considered for the dither. The coding task in all utilized schemes is done by memory-less Huffman coders which do not take the past information of the dither into account as prior knowledge for coding. In addition to the average data rate, the entropy of the output of the quantizer has been estimated for the aforementioned setup. The gap of around $0.4$ bits per sample between the measured entropy and the lower bound indicates that for each $h\in\{0,1,2\}$, $0.4$ bits per sample of the gap between the actual rates and lower bound is caused by replacing the AWGN with uniform dither and the remainder $0.25$ bits per sample corresponds to sample-by-sample coding. It can be observed that the actual rates and entropies have the same properties as the properties of bounds mentioned in the previous paragraph. The most prominent property is related to the behaviour of the achieved rates and entropies as a function of channel delay, i.e., for a system with greater time delay in the channel, higher rates are required to guarantee quadratic performance requirements.
\section{conclusions}
In this paper, the trade-off between average data rate and performance in networked control systems has been studied. Two setups have been investigated, each of which incorporates an LTI plant with Gaussian disturbance and initial states, and scalar control input and sensor output. Moreover, both of them have causal, but otherwise arbitrary, mappings on their feedback paths which are responsible for coding and control. The only difference between the two considered systems is the model of the channel that carries out data transmission between the encoder-controller and the decoder-controller. In one case, the digital communication channel is SIMO and information to be exchanged are exposed to random delay. In the other system, the channel is error-free as well but it is SISO and imposes constant delay on transmitted data. For the case with random channel delay, we considered notions for rate and performance which show the average behaviour of the system over all realizations of the delay. We have shown that for such a setup, data rate is lower bounded by average directed information rate from the sensor output to control input, and if $y$ and $u$ are jointly Gaussian, the average directed information rate would be lowest. Moreover, we have shown that when $y$ and $u$ satisfy certain stationarity assumptions, the average directed information rate between them is a function of the average power spectral densities of these signals over all realizations of the channel delay. We have shown that infimum value of this function over all arbitrary coders and controllers that cause system signals have those Gaussianity and staionarity properties lower bounds the infimum average data rate required to attain a prescribed quadratic performance.

For the constant delay case, which is a special case of the system with random channel delay, we approximated (by deriving bounds) the minimal average data rate that certifies attaining a certain performance level. Employing the fundamental information inequalities and identities derived for the random delay case, we showed that this desired minimal average data rate is bounded from below when coder-controllers and the channel behave as a concatenation of proper LTI filters and an AWGN channel with feedback and delay. Then we showed that by approximating such schemes with simply implementable linear ECDQ-based coding schemes, one can achieve any (legitimate) performance level by actual rates which are at most $1.254$ bits per sample higher than the lower bound. The results illustrated through the simulation show that bounds and empirical rates are increasing functions of channel delay for a fixed performance level. It means larger delay in the channel necessitates higher minimal average data rate that is needed for achieving a certain level of quadratic performance.

Future research will concern with finding closed-form solution for the lower and upper bound problems in the case of random channel delay, finding analytic expression for the desired infimum data rate, deriving lower and upper bounds with shorter gap between them, plants with model uncertainties and vector quantization.  
           
\appendices

\section{Invertibility of the decoder}\label{appb}
\begin{lemma}
	Consider a coding scheme described through (\ref{eq24})-(\ref{eq271}) that has a non-invertible decoder, and let $\check{R}(k)$ be defined as $\check{R}(k)\triangleq{H({{y_{\mathcal{E}}}(k)}\mid{{y_{{\mathcal{E}}f}(k)},{\eta_{o}^{k}}})}$. For such scheme, assume that $u(k)={u_0}(k)$ and ${\check{R}}_{f}(k)={\check{R}}_{f0}(k)$, $\forall{k}\in{\mathbb{N}_0}$, where ${{\check{R}}_{f}}(k)\triangleq{{[\check{R}}(i))]_{i\in{S(k)}}^T}$. Then there exists another coding scheme constructing the control input $u(k)={u_0}(k)$ with an invertible decoder in such a way that ${\check{R}}_{f}(k)\leq{\check{R}}_{f0}(k)$, $\forall{k}\in{\mathbb{N}_0}$. 
\end{lemma}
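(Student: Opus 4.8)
The plan is to exhibit the new scheme as a \emph{coarsening} of the given one: keep the plant untouched, replace the lossy encoder $\mathcal{E}$ by the map obtained from $\mathcal{E}$ after quotienting its output alphabet by everything that the controller $\mathcal{D}$ (together with the rest of the loop) ignores, re-run the lossless layer on this coarser symbol stream, and then verify that the new decoder is invertible and that the conditional entropies $\check{R}_f$ have not increased. A coarsening is what is needed because a genuinely non-injective map $\mathcal{D}_k$ makes it impossible to read $u_q^{i}$ off from $(u^{i},\eta_d^{i},S^{i})$ while still reproducing $u_0$, whereas collapsing exactly the redundant part of $y_{\mathcal{E}}$ removes this obstruction without altering the control.

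First I would compose (\ref{eq24})--(\ref{eq271}) together with the channel model (\ref{eq7}) to record that, under the given scheme, $u_0^{k}$ is a deterministic function of $\big(y_{\mathcal{E}}^{\,l_k},\eta_d^{k},S^{k}\big)$ with $l_k=\max S^{k}$ as in Remark~\ref{rem2}. For each $k$ I would then put an equivalence relation on the countable alphabet of $y_{\mathcal{E}}(k)$ by declaring two symbols equivalent when, \emph{for every} choice of all the remaining arguments of that function, swapping them leaves $u_0$ unchanged; the quotient symbol $\tilde y_{\mathcal{E}}(k)$ is a function of $y_{\mathcal{E}}(k)$ alone, so $\tilde{\mathcal{E}}_k(y^{k},\eta_e^{k})\triangleq$ (the class of $y_{\mathcal{E}}(k)$) is again a legitimate causal map of the form (\ref{eq24}) into a countable set. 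Setting $\tilde u_{\mathcal{D}}(k)\triangleq[\tilde y_{\mathcal{E}}(i)]_{i\in S(k)}$ and letting $\tilde{\mathcal{D}}_k$ be the map that $\mathcal{D}_k$ induces on classes, the new loop produces $u(k)=u_0(k)$ for all $k$, while by construction $\tilde{\mathcal{D}}_k$ is injective in the batch $[\tilde y_{\mathcal{E}}(i)]_{i\in S(k)}$. Running this injection backwards --- and using the timestamps in $S^{k}$ to decide which of the codewords that arrived simultaneously contributed which increment of $u$ --- recovers $\tilde u_{\mathcal{D}}^{\,i}$, hence all of $\tilde y_{\mathcal{E}}(i)$ received by time $i$, from $(u^{i},\eta_d^{i},S^{i})$; composing with the deterministic lossless encoder of the new scheme produces the mapping $Q_i$ required by Assumption~\ref{ass23}, so the new decoder is invertible.

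For the lossless layer I would keep a prefix-free code of the same kind $\mathcal{O}$, now applied to $\tilde y_{\mathcal{E}}(k)$ with the context that is actually available to the decoder at reception time, and estimate $\check{R}(k)=H\!\big(\tilde y_{\mathcal{E}}(k)\mid\tilde y_{\mathcal{E}f}(k),\eta_o^{k}\big)$. Since $\tilde y_{\mathcal{E}}(k)$ is a function of $y_{\mathcal{E}}(k)$ and $\tilde y_{\mathcal{E}f}(k)$ is a function of $y_{\mathcal{E}f}(k)$, the entries of $\check{R}_f(k)$ should be dominated by those of $\check{R}_{f0}(k)=\big[H(y_{\mathcal{E}}(i)\mid y_{\mathcal{E}f}(i),\eta_o^{i})\big]_{i\in S(k)}$ once one checks that the portion of each past symbol discarded by the coarsening is, given $\eta_o$ and the retained portion, redundant --- which is exactly what the ``for every choice of the remaining arguments'' clause in the definition of the equivalence is meant to guarantee. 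This monotonicity step is the one I expect to be the crux: the coarsened past $\tilde y_{\mathcal{E}f}(k)$ carries \emph{less} than $y_{\mathcal{E}f}(k)$, so the naive ``data-processing plus conditioning'' estimate points the wrong way, and one must genuinely use the structure of the coarsening --- performing it sequentially in $k$ if needed, so that $\tilde y_{\mathcal{E}f}(k)$ together with $\eta_o^{k}$ reconstructs $y_{\mathcal{E}f}(k)$ as far as the conditional law of $\tilde y_{\mathcal{E}}(k)$ is concerned. A second, purely combinatorial difficulty is threading the whole argument through the out-of-order, possibly batched deliveries of the SIMO channel of (\ref{eq7}), which is where $S^{k}$ carries out the book-keeping.
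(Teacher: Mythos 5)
Your overall route is the same as the paper's: identify encoder symbols that the decoder cannot distinguish (because they yield the same control input), merge them into a single symbol, observe that the control sequence is preserved while the resulting decoder becomes invertible, and argue that the conditional entropies $\check{R}(\cdot)$ do not increase. The difference is that you perform the merge globally, as a quotient of the lossy encoder's alphabet, whereas the paper performs it one colliding pair $(u_{\mathcal{D}1j},u_{\mathcal{D}2j})$ at a time and then iterates over all such pairs and all time instants.

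The gap is exactly the step you yourself flag as the crux, and your proposed resolution of it does not work. You argue that $\check{R}_{f}(k)\leq\check{R}_{f0}(k)$ follows ``once one checks that the portion of each past symbol discarded by the coarsening is, given $\eta_o$ and the retained portion, redundant,'' and you claim this is guaranteed by the universal quantifier in your equivalence relation. It is not: two symbols are declared equivalent because swapping them never changes $u_0$, but they are still produced from different realizations of $(y^{t_j},\eta_{e}^{t_j})$ and hence may carry different information about the conditional law of $y_{\mathcal{E}}(k)$ for $k>t_j$; merging them coarsens the conditioning in $\check{R}(k)=H({y_{\mathcal{E}}(k)}\mid{y_{{\mathcal{E}}f}(k)},{\eta_{o}^{k}})$ and can in principle increase that entropy, so ``redundant for $u_0$'' does not imply ``redundant for the conditional law.'' The paper never needs this implication: it holds the conditioning $({y_{{\mathcal{E}}f}}(t_j),{\eta_{o}^{t_j}})$ fixed, expands $\check{R}(t_j)$ in terms of the conditional probabilities $p_{nj}$ of the encoder symbols, and reduces the entire monotonicity claim to the scalar inequality $-p_{1j}\ln p_{1j}-p_{2j}\ln p_{2j}\geq -(p_{1j}+p_{2j})\ln(p_{1j}+p_{2j})$, i.e., to the fact that merging two probability masses cannot increase entropy. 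To salvage your global quotient you would have to actually prove the conditional-law redundancy you assert (which needs an extra argument about the plant and encoder dynamics), or retreat to the paper's pairwise, fixed-conditioning computation. A secondary issue: because your equivalence quantifies over \emph{all} choices of the remaining arguments (which is what preserves $u_0$), it does not by itself make the induced decoder injective, since two inequivalent classes can still collide for the realized $({\eta_{d}^{k}},S^{k},u_{\mathcal{D}}^{k-1})$; the paper's explicit case split between $S_1\neq S_2$ (resolved by timestamps) and $S_1=S_2$ (requiring a merge) is precisely the bookkeeping that handles this together with the out-of-order deliveries you mention.
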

\begin{proof}
	Suppose that mappings in (\ref{eq26})-(\ref{eq271}) represent a non-invertible decoder at time $k$ in a way that upon knowledge of ${\eta}_{d}^{i}$ and ${S^i}$, perfect reconstruction of $u_q^i$ from $u^i$ has been possible for all $i\leq{k-1}$. Then there exist ${u_{\mathcal{D}1}},{u_{\mathcal{D}2}}\in{\mathcal{A}_{s}}$ such that ${u(k)}={\mathcal{D}_k}({u_{\mathcal{D}1}},{u_{\mathcal{D}}^{k-1}},{\eta_{d}^{k}})={\mathcal{D}_k}({u_{\mathcal{D}2}},{u_{\mathcal{D}}^{k-1}},{\eta_{d}^{k}})$. Let ${S_{1}}$ and ${S_{2}}$ be associated with ${u_{\mathcal{D}1}}$ and ${u_{\mathcal{D}2}}$ respectively. Two possible cases can occur. In the first case, ${S_{1}}$  and ${S_{2}}$ are unequal, i.e., ${S_{1}}\neq{S_{2}}$. Since $S$ is known at the decoder at each time step, this case does not contradict the invertibility. That is due to the fact that the knowledge of ${S}$ would determine whether $u(k)$ is caused by ${u_{\mathcal{D}1}}$ or ${u_{\mathcal{D}2}}$. However, the situation is not the same in the case where ${S_{1}}={S_{2}}$. Since both ${u_{\mathcal{D}1}}$ and ${u_{\mathcal{D}2}}$ are vector-valued variables, ${u_{\mathcal{D}1}}\neq{u_{\mathcal{D}2}}$ means that at least one entry of ${u_{\mathcal{D}1}}$ is not equal to the entry with the same dimension in ${u_{\mathcal{D}2}}$. The corresponding elements of ${u_{\mathcal{D}1}}$ and ${u_{\mathcal{D}2}}$ that are not equal to each other are denoted by pairs $({u_{\mathcal{D}1j}},{u_{\mathcal{D}2j}})$, $1\leq{j}\leq{m}$, $1\leq{m}\leq{{h_{\max}}+1}$. Since ${S_{1}}={S_{2}}$, for each $j$, both ${u_{\mathcal{D}1j}}$ and ${u_{\mathcal{D}2j}}$ have been exposed to the same delay, say $h_j$, $0\leq{h_j}\leq{h_{\max}}$. So if we denote the output of the lossy encoder that corresponds to ${u_{\mathcal{D}nj}}$ by ${y_{\mathcal{E}nj}}$, then ${u_{\mathcal{D}nj}}(k)={y_{\mathcal{E}nj}}(k-{h_j})$. It should be noted that $n$ is a positive integer which is at most equal to the size of the set ${\mathcal{A}_{s}}$. Let ${p_{nj}}$ represent the conditional probability of having ${y_{\mathcal{E}nj}}$ at the encoder given $({y_{{\mathcal{E}}f}}({k-{h_j}}),{\eta_{o}^{k-{h_j}}})$ at time $k-{h_j}$. The encoder-decoder set $(\bar{\mathcal{E}},\bar{\mathcal{D}})$ can be defined with exactly the same properties as $({\mathcal{E}},{\mathcal{D}})$ but different from it in the sense that $\bar{\mathcal{E}}$ outputs only ${y_{\mathcal{E}1j}}$ at time $k-{h_j}$ with probability ${p_{1j}}+{p_{2j}}$. This means having only ${u_{\mathcal{D}1j}}$ at time $k$ as decoder input instead of receiving either ${u_{\mathcal{D}1j}}$ or ${u_{\mathcal{D}2j}}$. Let us define ${t_j}\triangleq{k-{h_j}};k\geq{h_j}$. Then
	\begin{align}
	\begin{split}
	{\check{R}(t_j)\mid}_{({\mathcal{E}},{\mathcal{D}})}&={\check{R}_0}(t_j)\\
	&\myeqaa-\sum_{n\notin\{1,2\}}{{p_{nj}}\ln{p_{nj}}-{p_{1j}}\ln{p_{1j}}-{p_{2j}}\ln{p_{2j}}}\\
	&\myeqab-\sum_{n\notin\{1,2\}}{{p_{nj}}\ln{p_{nj}}-({p_{1j}}+{p_{2j}})\ln({p_{1j}}+{p_{2j}})}\\
	&\myeqac{\check{R}(t_j)\mid}_{({\bar{\mathcal{E}}},{\bar{\mathcal{D}}})}
	\end{split} 
	\end{align}
	in which $(aa)$ results from the definition of entropy and $\check{R}(k)$, $(ab)$ can be concluded based on the fact that the function $-\ln({p_{nj}})$ is monotonically decreasing, and  $(ac)$ follows from the definition of $\check{R}(k)$ for the scheme $(\bar{\mathcal{E}},\bar{\mathcal{D}})$. So ${\check{R}(t_j)\mid}_{({\bar{\mathcal{E}}},{\bar{\mathcal{D}}})}\leq{\check{R}(t_j)\mid}_{({\mathcal{E}},{\mathcal{D}})}$ for ${t_j}\geq{0}$, and consequently ${\check{R}}_{f}(k)\leq{\check{R}}_{f0}(k)$.

	The above procedure can be iterated for evey pair with the same characteristics as $({u_{\mathcal{D}1}} ,{u_{\mathcal{D}2}})$ to make sure that there are no two inputs of the reproduction decoder  mapped into one identical $u(k)$ at time instant $k$. Such iteration will then yield an invertible decoder. In other words, when the pair $(\bar{\mathcal{E}},\bar{\mathcal{D}})$ is used, knowing $({u^{i}},{\eta_{d}^{i}},{S^i})$ is equivalent to knowing $({u_{\mathcal{D}}^{i}},{\eta_{d}^{i}},{S^i})$ with $u(i)={u_0}(i)$ and ${\check{R}}_{f}(i)\leq{\check{R}}_{f0}(i)$, $\forall{i}\leq{k}$. Our main claim now follows  by repeating the above for every ${k}\geq{0}$.  
\end{proof}
\section{Proofs}\label{app0}
\subsection{Feasibility proof for ${D_{\inf}(h)}$, ${{\vartheta}_{u}^{'}(D)}$ and $\varphi'(D)$}\label{app0dinfh}
Suppose that in the standard architecture depicted in Fig.~\ref{figa1}, $G$, $x_0$ and $w$ satisfy Assumption~\ref{ass21} and $K$ follows $u(k)={{\mathcal{K}}_{k}}({y}^{k-h})$. Considering the Gaussianity of $x_0$ and $w$ and the fact that $G$ is LTI, we can imply from some results in \cite{aastrom2012introduction} that:
\begin{equation}\label{eqkfis}
{D}_{\inf}(h)=\inf_{K\in{\kappa}}{\sigma}_{z}^{2},
\end{equation}  
in which ${\sigma}_{z}^{2}$ denotes the steady-state variance of output $z$ and ${\kappa}$ is the set of all proper LTI filters which render the system of Fig.~\ref{figa1} internally stable and well-posed. The assumptions considered for $G$ guarantee that finding ${D}_{\inf}(h)$ is feasible.
\begin{figure}[thpb]
	\centering
	\includegraphics[width=8cm]{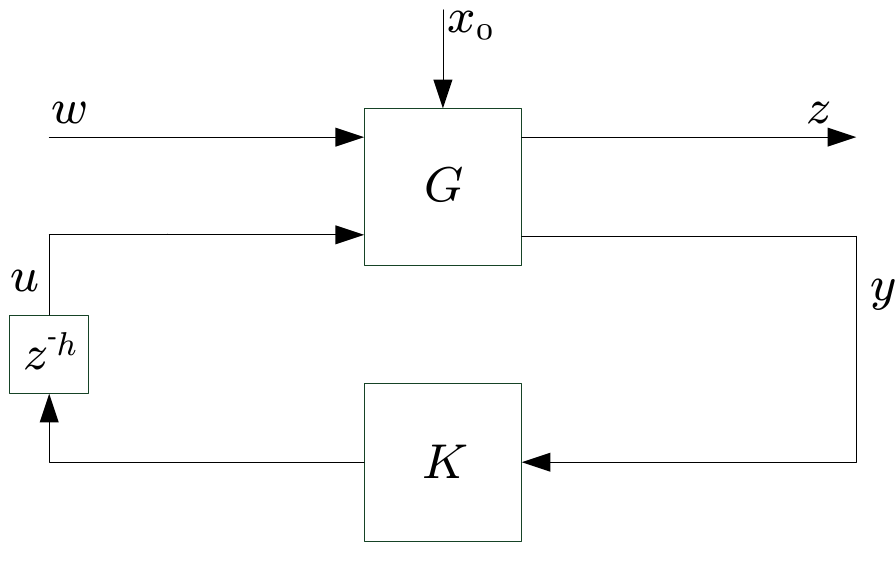}
	\caption{Standard feedback loop over which ${D_{\inf}}(h)$ is defined}
	\label{figa1}
\end{figure} 
Since ${D}_{\inf}(h)$ can be obtained, for every $\zeta\in(0,{D-{D_{\inf}(h)})}$, there exists ${K_1}\in{\kappa}$ which gives ${\sigma}_{z_{1}}^{2}\triangleq{{\sigma}_{z}^{2}}\mid_{K=K_1}\leq{{D_{\inf}(h)}+\zeta<D}$ for the system of Fig.~\ref{figa1}. Applying ${K_1}$ to this system results in a stable setting which is a special case of the NCS depicted in Fig.~\ref{fig6} with $J=1$ and $r=t={K_1}y'$ where the steady-state variance of $t$, ${{\sigma}_{t}^{2}}={{\sigma}_{t_{1}}^{2}}$, is finite. Therefore, since ${{K_1}\in{\kappa}}$, it can bring internal stability and well-posed-ness to the feedback loop of Fig.~\ref{fig6} in the presence of any additive noise $\eta$ with steady-sate variance ${{\sigma}_{{\eta}}^{2}}\in{{\mathbb{R}}^{+}}$. 
So ${\sigma}_{z'}^{2}={\sigma}_{z_{1}}^{2}+{{\chi}_{z}}{{\sigma}_{\eta}^{2}}$ and ${\sigma}_{t}^{2}={{\sigma}_{t_{1}}^{2}}+{{\chi}_{t}}{{\sigma}_{\eta}^{2}}$ can be concluded, when taking $\eta$ into account as an AWGN with finite variance ${{\sigma}_{{\eta}}^{2}}$ for the system of Fig.~\ref{fig6}. It should be noted that  ${{\chi}_{t}},{{\chi}_{z}}\geq{0}$ depend only on $K_1$. 
Now by choosing $\zeta=({D-{D_{\inf}}(h)})/3$ and the variance ${{\sigma}_{{\eta}}^{2}}=({D-{D_{\inf}}(h)})/(3{{\chi}_{z}})$ for the AWGN, there exists ${K_1}\in{\kappa}$ rendering the NCS of Fig.~\ref{fig6} internally stable and well-posed in a way that ${{\sigma}_{z'}^{2}{\mid}_{(B,J,{\sigma}_{\eta}^{2})=(K_1,1,{\sigma}_{\eta}^{2})}}\leq{{D_{\inf}(h)}+{\frac{2}{3}}(D-{D_{\inf}(h)})}<D$. Then the following can be obtained for the structure of Fig.~\ref{fig6}:
\begin{equation}\label{eq31}
{{{\frac{{\sigma}_{t}^{2}}{{\sigma}_{\eta}^{2}}}}{\mid}_{(B,J,{\sigma}_{\eta}^{2})=(K_1,1,{\sigma}_{\eta}^{2})}}={\frac{{3{{\sigma}_{t_1}^{2}}{{\chi}_{z}}}}{{D-{D_{\inf}}(h)}}}+{{\chi}_{t}}<\infty.
\end{equation}
So considering Jensen's inequality and concavity of logarithm, we can deduce that the problem of finding ${{\vartheta}_{u}^{'}(D)}$ in (\ref{eq191}) is feasible for every $D>{D_{\inf}}(h)$. The feasibility of the problem of finding $\varphi'(D)$ in (\ref{eq23}) is inferred immediately from (\ref{eq31}) for any $D>{D_{\inf}}(h)$.
\subsection{Proof of Theorem~\ref{th3}}\label{app0th3}
Due to the validity of $D>{{D}_{\inf}}(h)$, one can always find at least one coding-control pair, say $\hat{{E}}$ and $\hat{{D}}$, that while satisfying Assumption~\ref{ass23}, renders the NCS of Fig.~\ref{fig1} SAWSS in such a way that ${{{\sigma}_{\hat{z}}^{2}}\leq{D}}$ and
\begin{IEEEeqnarray}{rcl}\label{eq192}
	\begin{split}
		\mathcal{R}
		\geq{{I_{\infty}^{(h)}}(\hat{y}\to\hat{u})}
		\geq{I_{\infty}^{(h)}}({\hat{y}_G}\to\hat{u}_G)=\frac{1}{4\pi}\int_{-\pi}^{\pi} \log\big(\frac{{S_{\breve{u}}}(e^{j\omega})}{{\sigma}_{{\hat{\psi}}_G}^{2}}\big)d\omega.
	\end{split}	
\end{IEEEeqnarray} 
In (\ref{eq192}), processes $\hat{z}$, $\hat{y}$ and $\hat{u}$ are the counterparts of $z$, $y$ and $u$ in Fig.~\ref{fig1}, respectively. Moreover, the inequalities and identities in (\ref{eq192}) stem from Theorem~\ref{th2} if conditions in Lemma~\ref{lemma1} and Lemma~\ref{lemma2} are satisfied. Therefore, $({\hat{y}_G},{\hat{u}_G})$ are jointly Gaussian counterparts of $(\hat{y},\hat{u})$ as in Lemma~\ref{lemma1} and ${S_{\breve{u}}}$ represents the steady-state power spectral density of $\hat{u}_G$ as in Lemma~\ref{lemma2}. The pair $({\hat{y}_G},{\hat{u}_G})$ with conditions stated in Lemma~\ref{lemma1} can be generated by a scheme which certifies ${{\sigma}_{\hat{z}_G}^{2}}\in({D_{\inf}}(h),\infty)$ and is comprised of linear filters with a unit-gain noisy channel and delay $h$ as follows:
\begin{equation}\label{eq194}
{\hat{u}_{G}}(k)={L_k}({\hat{y}}_{G}^{k-h},{\hat{u}}_{G}^{k-1})+{\hat{\psi}_{G}}(k-h),\qquad k\in{{\mathbb{N}}_0},
\end{equation}    
in which ${\hat{\psi}_{G}}(k)$ denotes a Gaussian noise with zero mean and independent of $({\hat{y}}_{G}^{k},{\hat{u}}_{G}^{k-1})$. Since ${L_k}$ is a linear and causal mapping, we can redescribe ${\hat{u}_{G}^{k}}$ as
\begin{equation}\label{eq195}
{\hat{u}_{G}^{k}}={Q_k}{\hat{\psi}_{G}^{k-h}}+{P_k}{\hat{y}}_{G}^{k-h},\qquad k\in{{\mathbb{N}}_0}.
\end{equation} 
It follows from causality in (\ref{eq195}) that $\forall{k}\in{\mathbb{N}}$, $B_k$ and $G_k$ are lower triangular matrices with $B_{k-1}$ and $G_{k-1}$ on the top left corners. This together with the fact that $({\hat{y}_G},\hat{u}_G)$ are jointly SAWSS allow us to conclude that based on transitivity of asymptotic equivalence for products and sum of the matrices in \cite{gray2006toeplitz}, the sequences $\{Q_k\}$ and $\{P_k\}$ are asymptotically equivalent to sequences of lower triangular Toeplitz matrices. Furthermore, using $L_k$ as in (\ref{eq194}) will bring internal stability and well-posed-ness to the corresponding NCS. 
Now let us set $J=1$ and $B$ as a concatenation of linear filters with the same behaviour as steady-state behaviour of $L_k$ in (\ref{eq194}) for the auxiliary system of Fig.~\ref{fig6}. Moreover, suppose that $\eta$ has a variance equal to ${{\sigma}_{{\hat{\psi}}_G}^{2}}$. So based on the asymptotic equivalence between the matrix representations of $L$ and ${\{L_k\}}$, choosing $J$, $B$ and $\eta$ as above will render the system of Fig.~\ref{fig6} well-posed and internally stable. More specifically, the latter set of filters and the noise will give WSS processes to which $\hat{u}_G$ and $\hat{z}_G$ converge. Therefore, for the control input $u'$ and error signal $z'$ in the feedback loop of Fig.~\ref{fig6}, ${S}_{u'}={S}_{\breve{u}}$ and ${{\sigma}_{{\hat{z}}_{G}}^{2}}={{\sigma}_{{z'}}^{2}}$ hold. Then based on Lemma~\ref{lemma2}, the directed information rate in the NCS of Fig.~\ref{fig6} can be expressed as
\begin{IEEEeqnarray}{rcl}\label{eq196}
	\begin{split}
		{{I_{\infty}^{(h)}}(y'\to{u'})}=\frac{1}{4\pi}\int_{-\pi}^{\pi} \log\big(\frac{{S_{u'}}(e^{j\omega})}{{\sigma}_{\eta}^{2}}\big)d\omega=\frac{1}{4\pi}\int_{-\pi}^{\pi} \log\big(\frac{{S_{\breve{u}}}(e^{j\omega})}{{\sigma}_{{\hat{\psi}}_G}^{2}}\big)d\omega,
	\end{split}	
\end{IEEEeqnarray}
First, we can deduce that any pair $(\hat{{E}},\hat{{D}})$ with properties stated above has a counterpart comprised of LTI filter $B$, $J=1$ and the white Gaussian noise $\eta$ in architecture of Fig.~\ref{fig6} in such a way that 
\begin{figure}[thpb]
	\centering
	\includegraphics[width=8cm,height=5cm]{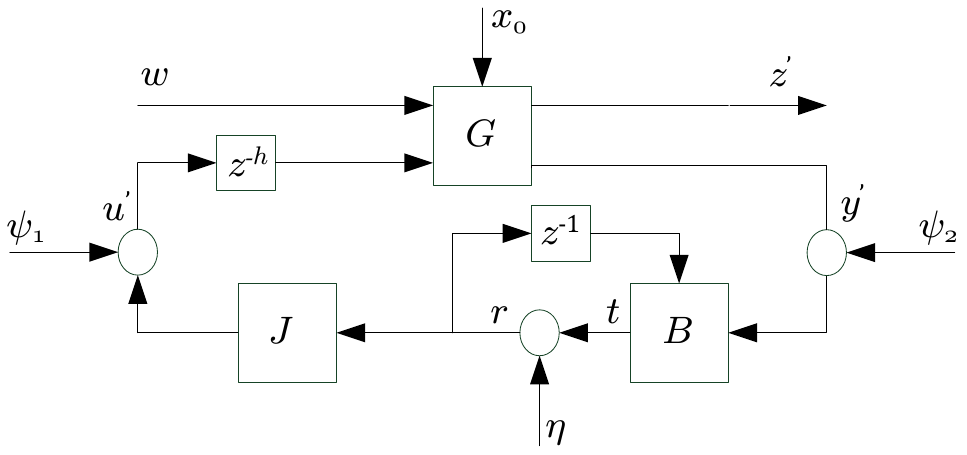}
	\caption{The LTI system whose internal stability guarantees the internal stability of the auxiliary system in Fig.~\ref{fig6}}
	\label{fig7}
\end{figure}
${{I_{\infty}}(y'\to{u'})}\leq{{I_{\infty}}(\hat{y}\to\hat{u})}$ and ${{\sigma}_{\hat{z}}^{2}}={{\sigma}_{{z'}}^{2}}$. Secondly, the main problem is finding the infimum of $\mathcal{R}$ over all mappings $(\hat{{E}},\hat{{D}})$. With all of this in mind, it can be implied from (\ref{eq196}) and (\ref{eq192}) that the lower bound for $\mathcal{R}(D)$ would be equal to the rightmost term of (\ref{eq191}) which completes the proof.
\subsection{Proof of Lemma~\ref{lemma3}}\label{app0lemma3}
The necessary and sufficient condition for the feedback loop of Fig.~\ref{fig6} to be internally stable and well-posed is that every entry of the transfer function matrix from input ${[\eta,w,{\psi}_1,{\psi}_2]^T}$ to outputs ${[z',y',r,u']^T}$ in the system of Fig.~\ref{fig7} belongs to ${\mathcal{R}\mathcal{H}}_{\infty}$ \cite{francis1987course}. Such a transfer function matrix, which we denote by $T$, is described as follows:
 	\begin{equation}\label{eq27}
 	T=
 	\begin{bmatrix}
 	G_{12}J{z^{-h}}M&G_{11}+G_{12}J{z^{-h}}{B_y}MG_{21}&G_{12}{z^{-h}}(1-{B_{r}}{z^{-1}})M&{G_{12}}J{z^{-h}}{B_y}M\\
 	G_{22}J{z^{-h}}M&G_{21}(1-{B_{r}}{z^{-1}})M&G_{22}{z^{-h}}(1-{B_{r}}{z^{-1}})M&{G_{22}}J{z^{-h}}{B_y}M\\
 	M&{G_{21}}{B_y}M&{G_{22}}{z^{-h}}{B_y}M&{B_y}M\\
 	JM&{G_{21}}J{B_y}M&(1-{B_{r}}{z^{-1}})M&J{B_y}M
 	\end{bmatrix},
 	\end{equation}	
 where
\begin{equation}\label{eq40}
M\triangleq{(1-{B_{r}}{z^{-1}}-{G_{22}}J{z^{-h}}{B_y})}^{-1}.
\end{equation}
Now, let us shift the delay block in the system of Fig.~\ref{fig6} to the plant model in a way that for the newly obtained system, the plant is described by 
\begin{equation}\label{eq47}
{G_a}=\left[
\begin{array}{lr}
G_{11}&{z^{-h}}{G_{12}}\\
{G_{21}}&{z^{-h}}{G_{22}}
\end{array}\right]. 
\end{equation}
\noindent
\begin{figure}[thpb]
	\centering
	\includegraphics[width=8cm]{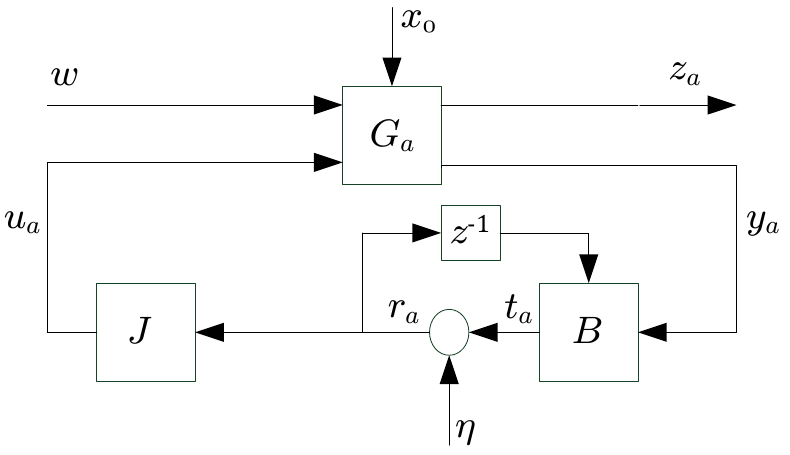}
	\caption{The equivalent system with the same ${{\varphi'}(D)}$ as the NCS of Fig.~\ref{fig6}}
	\label{fig10}
\end{figure}
\begin{figure}[thpb]
	\centering
	\includegraphics[width=8cm,height=5cm]{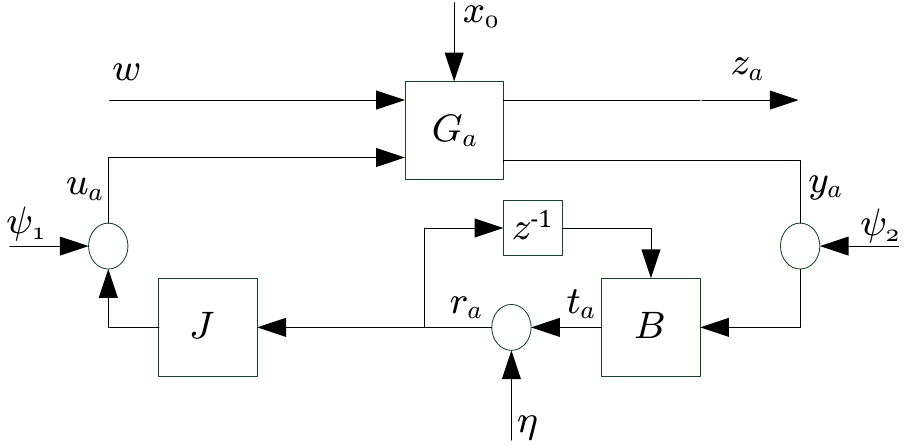}
	\caption{The auxiliary feedback loop characterizing the internal stability of the NCS of Fig.~\ref{fig10}}
	\label{fig11}
\end{figure}
Such an auxiliary NCS is depicted by Fig.~\ref{fig10}. Except for the plant model (\ref{eq47}), everything in the feedback loop of Fig.~\ref{fig10} is assumed to be the same as in the system of Fig.~\ref{fig6}. 
The internal stability and well-posed-ness of the feedback loop of Fig.~\ref{fig10} is guaranteed if and only if every entry of the transfer-function matrix, say ${T_a}$, from ${[\eta,w,{\psi}_1,{\psi}_2]^T}$ to ${[{z_a},{y_a},{r_a},{u_a}]^T}$ in Fig.~\ref{fig11} belongs to ${\mathcal{R}\mathcal{H}}_{\infty}$. It is straightforward to see that ${T_a}=T$. So an equivalence holds between internal stability and well-posed-ness of the system of Fig.~\ref{fig10} and the NCS of Fig.~\ref{fig6}. In other words, every triplet $(B,J,{{\sigma}_{\eta}^{2}})$ rendering the feedback loop of Fig.~\ref{fig10} internally stable and well-posed, will bring internal stability and well-posed-ness to the NCS of Fig.~\ref{fig6} as well. One other implication of ${T_a}=T$ is that using an stabilizing $(B,J,{{\sigma}_{\eta}^{2}})$ commonly for NCSs of Fig.~\ref{fig6} and Fig.~\ref{fig10} will lead to an identical ${\vartheta}_{r}^{'}({B},{J},{{\sigma}_{{\eta}}^{2}})$. 
This is due to the properties of LTI systems exposed to Gaussian and stationary inputs. Furthermore, those properties lead to deriving the following $H_2$-norm expressions for SNR and variance of the output $z'$ in the NCS of Fig.~\ref{fig6}:
\begin{IEEEeqnarray}{rl}\label{eq28}
	\begin{split}
		{\frac{{\sigma}_{t}^{2}}{{\sigma}_{\eta}^{2}}}&=\norm{{M-1}}_{2}^{2}+{\norm{{B_y}M{G_{21}}}_{2}^{2}}{{\sigma}_{\eta}^{-2}},\\
		{{\sigma}_{z'}^{2}}&=\norm{{{G_{11}}+{G_{12}}N{(1-{G_{22}}N)}^{-1}G_{21}}}_{2}^{2}+{\norm{{G_{12}}JM}_{2}^{2}}{{\sigma}_{\eta}^{2}},
	\end{split}
\end{IEEEeqnarray} 
in which $N\triangleq{J{B_y}{z^{-h}}{(1-{B_{r}}{z^{-1}})}^{-1}}$. Likewise, the SNR and variance of the output $z$ in the NCS of Fig.~\ref{fig10} is formalized in terms of $H_2$-norms as follows:
 
\begin{IEEEeqnarray}{rl}\label{eq29}
	\begin{split}
		{\frac{{\sigma}_{t_a}^{2}}{{\sigma}_{\eta}^{2}}}&=\norm{{{M_a}-1}}_{2}^{2}+{\norm{{B_y}{M_a}{G_{21}}}_{2}^{2}}{{\sigma}_{\eta}^{-2}},\\
		{{\sigma}_{z_a}^{2}}&=\norm{{{G_{11}}+{G_{12}}{z^{-h}}{N_a}{(1-{G_{22}}{z^{-h}}{N_a})}^{-1}G_{21}}}_{2}^{2}+{\norm{{G_{12}}J{M_a}}_{2}^{2}}{{\sigma}_{\eta}^{2}},
	\end{split}
\end{IEEEeqnarray}
\\
where ${M_a}=M$ and ${N_a}\triangleq{J{B_y}{(1-{B_{r}}{z^{-1}})}^{-1}}$. It follows from (\ref{eq28}) and (\ref{eq29}) that $({{{\sigma}_{t}^{2}}/{{\sigma}_{\eta}^{2}}})=({{{\sigma}_{t_a}^{2}}/{{\sigma}_{\eta}^{2}}})$ and ${{\sigma}_{z'}^{2}}={{\sigma}_{z_a}^{2}}$. Therefore, upon using the same stabilizing triplet $({B},{J},{{\sigma}_{{\eta}}^{2}})$, the channel SNR and the variance of the output characterizing performance will be the same for the NCSs of Fig.~\ref{fig6} and Fig.~\ref{fig10}.

According to \cite[Lemma~4.1]{silva2016characterization}, for any pair $(B,J)=({B_{1}},{J_{1}})$ that renders the feedback loop of Fig.~\ref{fig10} internally stable and well-posed, there exists another pair with the same properties as for $({B_2},{J_2})$ in this lemma. Then our claims follow immediately from the above equivalences between the NCS of Fig.~\ref{fig10} and the NCS of Fig.~\ref{fig6}.
\subsection{Proof of Corollary~\ref{cor1}}\label{app0cor1}
	The feasibility of obtaining ${{{\vartheta}'}_u}(D)$, caused by $D$ belonging to $({D_{\inf}}(h),\infty)$, certifies the existence of a triplet, say $({{B}_{\zeta}},1,{{\sigma}_{{\eta}_{\zeta}}^{2}})$, that leads to ${{\sigma}_{z'}^{2}}\leq{D}$ for the system of Fig.~\ref{fig6}. In the latter triplet, ${{B}_{\zeta}}$ is assumed to be a proper LTI filter and ${{\sigma}_{{\eta}_{\zeta}}^{2}}\in{\mathbb{R}^{+}}$. This together with the definition of ${{{\vartheta}'}_u}$ and ${{{\vartheta}'}_r}$ in (\ref{eq191}) and (\ref{eq197}), respectively, yields the following:
	\begin{equation}\label{eq231}
	{{{\vartheta}'}_u}(D)+{\zeta}\geq{{\vartheta}_{r}^{'}({{B}_{\zeta}},1,{{\sigma}_{{\eta}_{\zeta}}^{2}})}, \forall\zeta\in{\mathbb{R}^{+}}.
	\end{equation}
	\\ 
	Moreover, the triplet $({{B}_{\zeta}},1,{{\sigma}_{{\eta}_{\zeta}}^{2}})$ with aforementioned properties meets the conditions in Lemma~\ref{lemma3}. Therefore, another triplet, say $({{\tilde{B}}_{\zeta}},{{\tilde{J}}_{\zeta}},{{\sigma}_{{\eta}_{\zeta}}^{2}})$, exists in such a way that implementing it brings internal stability and well-posed-ness, keeps ${\sigma}_{z'}^{2}$ intact, and yields
	\begin{equation}\label{eq232}
	{{{\vartheta}'}_u}(D)+{\zeta}\geq{{{\frac{1}{2}}\log(1+{\frac{{\sigma}_{t}^{2}}{{\sigma}_{\eta}^{2}}}){\mid}_{(B,J,{\sigma}_{\eta}^{2})=({\tilde{B}}_{\zeta},{\tilde{J}}_{\zeta},{{\sigma}_{{\eta}_{\zeta}}^{2}})}}-{\rho}} 
	\end{equation}
	for the LTI feedback loop of Fig.~\ref{fig6}. Note that ${{\tilde{J}}_{\zeta}}$ is a biproper filter while ${{\tilde{B}}_{\zeta}}$ only needs to be proper. Now the fact that (\ref{eq232}) holds for any $\zeta,\rho>0$, the definition of ${{\varphi'}(D)}$ in (\ref{eq23}), and the claim of Theorem~\ref{th3} complete the proof. 
\subsection{Proof of Lemma~\ref{lemma4}}\label{app0lemma4}
Let $\tilde{G}_h$ denote the transfer-function matrix from ${[\tilde{w}\quad r]^T}$ to ${[\tilde{z}\quad t]^T}$ in Fig.~\ref{fig8}. Since $r_h$ is related to $r$ by $r_h=r{z}^{-h}$, we can conclude that $\tilde{G}_h$ meets the conditions of being proper and real rational, and containing a strictly proper SISO open-loop transfer function from $r$ to $t$. Now having the schemes described via (\ref{eq38}) and (\ref{eq39}) in mind, we can deduce our claim immediately from \cite[Lemma~5.1]{silva2016characterization}.	
\subsection{Proof of Lemma~\ref{lemma5}}\label{app0lemma5}
Let us assume that a linear source coding scheme is implemented in the feedback path of the main system in Fig.~\ref{fig1}. Due to the feasibility of finding $\varphi'(D)$, which necessitates satisfaction of Assumption~\ref{ass21}, we can conclude the existence of proper LTI filters $B$ and $J$ that together with an AWGN, say $\eta$, render the NCS of Fig.~\ref{fig1} SAWSS. It stems from some properties of internal stability that the system will still be stable if one keeps the latter filters $B$ and $J$ and only sets $\eta$ as $\eta=0$. This signifies that in the case of unity feedback ($t=r$), internal stability and well-posed-ness are guaranteed for the open-loop system between $r$ and $t$. We come immediately to the conclusion that (\ref{eq41}) holds based on \cite[Corollary~5.3]{silva2011framework} and statistical characteristics of the dither mentioned in Lemma~\ref{lemma4}.
\subsection{Proof of Theorem~\ref{th4}}\label{app0th4}
Considering the feasibility of finding ${{\varphi'}(D)}$, results of Lemma~\ref{lemma3}, lemma~\ref{lemma4}, and Lemma~\ref{lemma5}, and  invertibility of the decoder, we conclude the claim  by following the same steps as in \cite[Theorem~5.1]{silva2016characterization}. 
\subsection{Proof of Lemma~\ref{lemma01}}\label{app0lemma01}
 One of the common feedback loop components across the considered cases in Fig.~\ref{fig14} is the LTI plant $G$ which is described by state-space difference equations as follows:
\begin{equation}\label{eq52}
G:
\begin{cases}
{x(k+1)}=Ax(k)+{B_1}w(k)+{B_2}u(k)\\
z(k)={C_1}x(k)+{D_{11}}w(k)+{D_{12}}u(k)\\
y(k)={C_2}x(k)+{D_{21}}w(k),
\end{cases}
\end{equation} 
where $x\in{{\mathbb{R}}^{n_x}}$ represents plant states and $u$, $w$, $y$ and $z$ are inputs and outputs defined as in (\ref{eq1}). Moreover, $A$, $B_1$, $B_2$, $C_1$, $C_2$, $D_{11}$, $D_{12}$, and $D_{21}$ are time-invariant matrices of appropriate dimensions. According to the recursion in (\ref{eq52}), the states and outputs of the plant at each time instant $i\in{\mathbb{N}_0}$ can be expressed in terms of initial conditions, disturbance and control inputs as follows:
\begin{equation}\label{eq53}
\begin{cases}
{x(i)}={A^i}x(0)+{{\mathcal{B}}_1}(i){w^{i-1}}+{{\mathcal{B}}_2}(i){u^{i-1}}\\
z(i)={C_1}{A^i}x(0)+{{\mathcal{D}}_{11}(i)}{w^{i}}+{{\mathcal{D}}_{12}(i)}{u^{i}}\\
y(i)={C_2}{A^i}x(0)+{{\mathcal{D}}_{21}(i)}{w^{i}}+{{\mathcal{D}}_{22}(i)}{u^{i-1}},
\end{cases}
\end{equation}
where the involved matrices are defined as
\begin{align}\label{eq54}
\begin{split}
&{{\mathcal{B}}_1}(i)=[{A^{i-1}}{B_1}~{A^{i-2}}{B_1}\dots{B_1}]\\
&{{\mathcal{B}}_2}(i)=[{A^{i-1}}{B_2}~{A^{i-2}}{B_2}\dots{B_2}]\\
&{{\mathcal{D}}_{11}(i)}=[{{C_1}A^{i-1}}{B_1}~{C_1}{A^{i-2}}{B_1}\dots{C_1}{B_1}~D_{11}]\\
&{{\mathcal{D}}_{12}(i)}=[{{C_1}A^{i-1}}{B_2}~{C_1}{A^{i-2}}{B_2}\dots{C_1}{B_2}~D_{12}]\\
&{{\mathcal{D}}_{21}(i)}=[{{C_2}A^{i-1}}{B_1}~{C_2}{A^{i-2}}{B_1}\dots{C_2}{B_1}~D_{21}]\\
&{{\mathcal{D}}_{22}(i)}=[{{C_2}A^{i-1}}{B_2}~{C_2}{A^{i-2}}{B_2}\dots{C_2}{B_2}].\\
\end{split}
\end{align} 
For the case where the time delay is imposed by the error-free digital channel between the encoder-controller and the decoder-controller, the relationship between the control input and the sensor output is characterized based on (\ref{eq24})-(\ref{eq271}). The dynamics described by  (\ref{eq24})-(\ref{eq271}) can be summarizd in the constant channel delay case as follows:
\begin{align}\label{eq2}
\begin{split}
{y_q(k)}&={E_k}({y^k},{\eta_{e}^{k}})\\
{{{u}_{q}}(k)}&={{{y}_{q}}(k-h)}\\
{u(k)}&={D_k}({{u}_{q}^{k}},{\eta_{d}^{k}}),
\end{split}
\end{align}
where $E_k$ and $D_k$ represent causal, but otherwise arbitrary, mappings at  each $k\in{\mathbb{N}_0}$.
It follows from (\ref{eq2}) that ${u^k}$ can be stated as an arbitrary function, say ${N_k}$, of $({\eta_{d}^{k}},{y^{k-h}},{\eta_{e}^{k-h}})$, i.e.,   ${u^k}={N_k}({\eta_{d}^{k}},{y^{k-h}},{\eta_{e}^{k-h}})$. Then from (\ref{eq53}) and by induction, we can conclude that at each time instant $k\in{\mathbb{N}_0}$, $x(k)$ is a function of $(x(0),{w^{k-1}},{{\eta}_{d}^{k-1}},{{\eta}_{e}^{k-1-h}})$, $z(k)$ is a function of $(x(0),{w^{k}},{{\eta}_{d}^{k}},{{\eta}_{e}^{k-h}})$, and $y(k)$ is a function of $(x(0),{w^{k}},{{\eta}_{d}^{k-1}},{{\eta}_{e}^{k-1-h}})$.       

In the second case, it is the link between the decoder-controller and the plant that induces the time delay. For such a setting, ${E_k}$, ${{D}_k}$, ${{\eta_{e}}(k)}$ and ${\eta_{d}(k)}$ yield a scheme with following dynamics:  
\begin{align}\label{eq58}
\begin{split}
{y_q(k)}&={E_k}({y^k},{\eta_{e}^{k}})\\
{{{u}_{q}}(k)}&={{{y}_{q}}(k)}\\
{{u}(k)}&={D_{k-h}}({{u}_{q}^{k-h}},{\eta_{d}^{k-h}}).
\end{split}
\end{align}   
It follows from $(\ref{eq58})$ that in this case, $u^k$ can be expressed as ${u^k}={M_k}({y^{k-h}},{{\eta_d}^{k-h}},{{\eta}_{e}^{k-h}})$, $\forall{k}\in{\mathbb{N}_0}$, where $M_k$ is an arbitrary mapping which is specified by $\{E_i\}_{i=0}^{k-h}$ and $\{D_i\}_{i=0}^{k-h}$. Substituting such an expression into $(\ref{eq53})$ an by induction, we can rederive $x(k)$, $z(k)$, and $y(k)$ as functions of $(x(0),{w^{k-1}},{{\eta}_{d}^{k-h-1}},{{\eta}_{e}^{k-1-h}})$, $(x(0),{w^{k}},{{\eta}_{d}^{k-h}},{{\eta}_{e}^{k-h}})$, and $(x(0),{w^{k}},{{\eta}_{d}^{k-h-1}},{{\eta}_{e}^{k-h-1}})$, respectively.  

As the third case, we focus on a structure in which the delay is introduced by the path between the sensor and the encoder-controller. In this situation, the coding scheme is described by causal mappings ${E_k}$ and ${{D}_k}$, and side informations ${{\eta_{e}}(k)}$ and ${\eta_{d}(k)}$, as follows:
\begin{align}\label{eq10}
\begin{split}
{y_q(k)}&={E_k}({y^{k-h}},{\eta_{e}^{k}})\\
{{{u}_{q}}(k)}&={{{y}_{q}}(k)}\\
{{u}(k)}&={D_k}({{u}_{q}^{k}},{\eta_{d}^{k}}).
\end{split}
\end{align} 
Taking the same steps as for the previous cases, we derive ${u^k}={S_k}({y^{k-h}},{{\eta_d}^{k}},{{\eta}_{e}^{k}})$, $\forall{k}\in{\mathbb{N}_0}$, where $S_k$ is a causal mapping and a function of $\{E_i\}_{i=0}^{k}$ and $\{D_i\}_{i=0}^{k}$. Then considering (\ref{eq53}) and based on induction, we come to the conclusion that for the closed-loop system considered in this case, $x(k)$ is a function of $(x(0),{w^{k-1}},{{\eta}_{d}^{k-1}},{{\eta}_{e}^{k-1}})$, $z(k)$ is a function of $(x(0),{w^{k}},{{\eta}_{d}^{k}},{{\eta}_{e}^{k}})$, and $y(k)$ is a function of $(x(0),{w^{k}},{{\eta}_{d}^{k-1}},{{\eta}_{e}^{k-1}})$ for all $k\in{\mathbb{N}_0}$.

According to the above observations, comparing system states $x$, sensor output $y$, and the output $z$ at each time instant indicates that such signals are not necessarily equal across the three cases studied above if the systems share the design (mappings for coding and control and side information) and have the same initial conditions and exogenous inputs. So values of each signal change by relocating the delay component in the NCS of Fig.~\ref{fig1}. However, it is straightforward to see from the structure of the variables describing processes $x$, $z$, and $y$ that the equivalence over cases can be obtained under the condition that everything is the same across the cases except for side information which can be considered as decision variable. 
\\
\bibliographystyle{IEEEtran}
\bibliography{refs}
%

%
%
%




\end{document}